\newtheorem{req}{Requirement}
  \theoremstyle{definition}
  \newtheorem{defn}{\protect\definitionname}
  \theoremstyle{plain}
  \newtheorem{ax}{\protect\axiomname}
  \theoremstyle{plain}
  \newtheorem{prop}{\protect\propositionname}
  \theoremstyle{plain}
  \newtheorem{lem}{\protect\lemmaname}
  \theoremstyle{plain}
  \newtheorem{cor}{\protect\corollaryname}
 \theoremstyle{definition}
  \newtheorem{example}{\protect\examplename}
   \theoremstyle{remark}
\theoremstyle{plain}
\newtheorem{thm}{\protect\theoremname}
\newcounter{biscompt}
\newtheorem{bis}[biscompt]{\protect\axiomname}
\newcounter{tercompt}
\newtheorem{ter}[tercompt]{\protect\axiomname}
\newcommand{\bra}[1]{\left\langle{#1}\right\vert}
\newcommand{\ket}[1]{\left\vert{#1}\right\rangle}
\newcommand{\qw}[1][-1]{\ar @{-} [0,#1]}
\newcommand{\gate}[1]{*{\xy *+<.6em>{#1};p\save+LU;+RU **\dir{-}\restore\save+RU;+RD **\dir{-}\restore\save+RD;+LD **\dir{-}\restore\POS+LD;+LU **\dir{-}\endxy} \qw}
\newcommand{\measureD}[1]{*{\xy*+=+<.5em>{\vphantom{\rule{0em}{.1em}#1}}*\cir{r_l};p\save*!R{#1} \restore\save+UC;+UC-<.5em,0em>*!R{\hphantom{#1}}+L **\dir{-} \restore\save+DC;+DC-<.5em,0em>*!R{\hphantom{#1}}+L **\dir{-} \restore\POS+UC-<.5em,0em>*!R{\hphantom{#1}}+L;+DC-<.5em,0em>*!R{\hphantom{#1}}+L **\dir{-} \endxy} \qw}
\newcommand{\multigate}[2]{*+<1em,.9em>{\hphantom{#2}} \qw \POS[0,0].[#1,0];p !C *{#2},p \save+LU;+RU **\dir{-}\restore\save+RU;+RD **\dir{-}\restore\save+RD;+LD **\dir{-}\restore\save+LD;+LU **\dir{-}\restore}
\newcommand{\ghost}[1]{*+<1em,.9em>{\hphantom{#1}} \qw}
\newcommand{\Qcircuit}[1][0em]{\xymatrix @*[o] @*=<#1>}  
 \renewcommand{\Qcircuit}[1][0em]{\xymatrix @*=<#1>}
\newcommand{\pureghost}[1]{*+<1em,.9em>{\hphantom{#1}}}
\newcommand{\multiprepareC}[2]{*+<1em,.9em>{\hphantom{#2}}\save[0,0].[#1,0];p\save !C
  *{#2},p+RU+<0em,0em>;+LU+<+.8em,0em> **\dir{-}\restore\save +RD;+RU **\dir{-}\restore\save
  +RD;+LD+<.8em,0em> **\dir{-} \restore\save +LD+<0em,.8em>;+LU-<0em,.8em> **\dir{-} \restore \POS
  !UL*!UL{\cir<.9em>{u_r}};!DL*!DL{\cir<.9em>{l_u}}\restore}
\newcommand{\prepareC}[1]{*{\xy*+=+<.5em>{\vphantom{#1\rule{0em}{.1em}}}*\cir{l^r};p\save*!L{#1} \restore\save+UC;+UC+<.5em,0em>*!L{\hphantom{#1}}+R **\dir{-} \restore\save+DC;+DC+<.5em,0em>*!L{\hphantom{#1}}+R **\dir{-} \restore\POS+UC+<.5em,0em>*!L{\hphantom{#1}}+R;+DC+<.5em,0em>*!L{\hphantom{#1}}+R **\dir{-} \endxy}}
\newcommand{\poloFantasmaCn}[1]{{{}^{#1}_{\phantom{#1}}}}
\newcommand{\R}{\mathbb{R}}
\newcommand{\set}[1]{\mathsf{#1}}
\newcommand{\grp}[1]{\mathsf{#1}}
\newcommand{\spc}[1]{\mathcal{#1}}
\def\d{{\rm d}}
\def\>{\rangle}
\def\<{\langle}
\newcommand{\st}[1]{\mathbf{#1}}
\newcommand{\map}[1]{\mathcal{#1}}
\newcommand{\Det}{{\mathsf{Det}}}
\newcommand{\St}{{\mathsf{St}}}
\newcommand{\Eff}{{\mathsf{Eff}}}
\newcommand{\Pur}{{\mathsf{Pur}}}
\newcommand{\Transf}{{\mathsf{Transf}}}
\newcommand{\rA}{\mathrm{A}}
\newcommand{\rB}{\mathrm{B}}
\newcommand{\rC}{\mathrm{C}}
\newcommand{\rE}{\mathrm{E}}
\newcommand{\rR}{\mathrm{R}}
\newcommand{\rS}{\mathrm{S}}
\newcommand{\Tr}{\mathrm{Tr}}
\newcommand{\cA}{\mathcal{A}}
\newcommand{\cB}{\mathcal{B}}
\newcommand{\cC}{\mathcal{C}}
\newcommand{\cU}{\mathcal{U}}
  \providecommand{\axiomname}{Axiom}
  \providecommand{\definitionname}{Definition}
  \providecommand{\examplename}{Example}
  \providecommand{\propositionname}{Proposition}
\providecommand{\corollaryname}{Corollary}
\providecommand{\lemmaname}{Lemma}
\providecommand{\remarkname}{Remark}
\providecommand{\theoremname}{Theorem}
\begin{document}

\title{Microcanonical thermodynamics in general physical theories}

\author{Giulio Chiribella}
\email{giulio@cs.hku.hk}
\affiliation{Department of Computer Science, University of Hong Kong, Pokfulam Road, Hong Kong}
\address{HKU Shenzhen Institute of Research and Innovation, Kejizhong 2$^{\rm nd}$ Road,  Shenzhen, China}
\affiliation{Canadian Institute for Advanced Research,
	CIFAR Program in Quantum Information Science, Toronto, ON M5G 1Z8}
\author{Carlo Maria Scandolo}
\email{carlomaria.scandolo@cs.ox.ac.uk}
\affiliation{Department of Computer Science, University of Oxford, Parks Road, Oxford, UK}

\begin{abstract}
 Microcanonical thermodynamics studies  the operations that can be performed on systems with well-defined energy. 
 So far, this  approach has been  applied to classical and quantum systems.  Here we extend it to arbitrary physical theories,   proposing two requirements for the development of a general microcanonical framework.    We then formulate  three resource theories, corresponding to three  different sets  of  basic operations:   \emph{i)} random reversible 
operations, resulting from  reversible dynamics with fluctuating parameters, \emph{ii)} noisy operations, generated by the interaction with  ancillas in the microcanonical state, and \emph{iii)} unital
operations, defined as the operations that preserve the microcanonical state. 
      We  focus our attention on  a class of physical theories, called sharp theories with purification, where these three sets of operations exhibit remarkable properties. Firstly,  each set is contained into the next. Secondly,   the convertibility of states by unital operations is completely characterised by  a majorisation criterion. Thirdly,  the three sets are equivalent in terms of state convertibility if and only if the dynamics allowed by  theory satisfy a suitable condition, which we call unrestricted reversibility.  Under this condition, we derive a duality between the resource theory of microcanonical thermodynamics and the resource theory of pure bipartite entanglement.  
        
       
\end{abstract}
\maketitle

\section{Introduction}

 In recent years, developments in
the field of nanotechnology have raised questions about thermodynamics
away from the thermodynamic limit \cite{delRio,Anders-thermo,Xuereb,Dahlsten-extractable,Aberg,Popescu-single-shot,Work-operational,Lostaglio-Muller,Gemmer-single-shot,Garner-one-shot1,Garner-one-shot2,Fluctuations1,Fluctuations2,Work-coherence,2ndlaw-control,2ndlaw-equality,Sparaciari-2}.
One way to address this new regime is to adopt a resource theoretic approach \cite{Quantum-resource-1,Quantum-resource-2}, which  starts from  a subset of operations that are regarded as  ``free'' or ``easy to implement''    
\cite{Resource-theories,Resource-monoid,Resource-knowledge,Resource-currencies}. 
A number of results in quantum thermodynamics have been obtained through this approach
 \cite{Athermality1,Horodecki-Oppenheim-2,2ndlaws,quantum2ndlaw,Lostaglio-Jennings-Rudolph,Lostaglio-coherence,Nicole-beyond,Nicole-non-commuting,Non-commuting-Bristol,David-non-commuting,Richens-2ndlaw,3rdlaw},   unveiling new connections between thermodynamics and information theory \cite{Athermality2,Egloff,Renes,Gibbs-preserving-maps,Alhambra-reversibility,Muller-generalization,Sparaciari}. 

The most basic instance of thermodynamics is for  systems with definite energy.
There, the natural choice of free state is the microcanonical state,~i.e., the uniform mixture of all states with the same energy.  In situations where the experimenter lacks control over the preparation of the system,  it is natural to expect that the system's state will fluctuate randomly from one experiment to the next, so that the overall statistics is described by the microcanonical state.  
The choice of free operations is less obvious.    The three main choices considered in the literature on quantum thermodynamics are:
\begin{enumerate}
\item \emph{random unitary channels} \cite{Uhlmann1,Uhlmann2,Uhlmann3},
arising from unitary dynamics with randomly fluctuating parameters;
\item  \emph{noisy operations} \cite{Local-Information,Horodecki-Oppenheim,Nicole}, generated by preparing ancillas
in the microcanonical state, turning on a    unitary dynamics, and discarding
the ancillas;
\item \emph{unital channels} \cite{Streater,Mendl-Wolf}, defined as the
quantum processes that preserve the microcanonical state.
\end{enumerate}
These three  sets are strictly different: the set of random unitary channels is strictly contained in the set of noisy
operations \cite{Shor} and the latter  is strictly contained in
the set of unital channels \cite{Haagerup-Musat}. In spite of this,
the three sets   are equivalent in terms of state convertibility
\cite{Nicole}. This means that  all the natural candidates for the
sets of free operations induce the same notion of resource. 
 This resource is generally called \emph{purity} and plays a fundamental role in many quantum protocols \cite{streltsov2016maximal}. 


In this paper we extend the paradigm of microcanonical thermodynamics from  quantum  theory to arbitrary physical theories  \cite{Hardy-informational-1,Barrett,Barnum-1,Chiribella-purification,Chiribella14,QuantumFromPrinciples,chiribella2016quantum,chiribella2017quantum}. We propose two  minimal requirements a probabilistic theory must satisfy in order to support a microcanonical description, and,   when these requirements are satisfied, we  provide  a general operational definition  of   random reversible, noisy, and unital operations.     We then focus on  a special class of theories, called \emph{sharp theories with purification}, which are appealing for the foundations of thermodynamics  \cite{TowardsThermo} and  have also been studied for their computational power \cite{Control-reversible,Lee-Selby-Grover} and interference properties \cite{HOP}.  
 In sharp theories with purification,  we show that the three sets of operations satisfy  the same inclusion relations as in quantum theory, with random reversible operations included in the set of noisy operations, and noisy operations  included in the set of unital operations.    For unital operations,  we characterise   the  convertibility of states completely in terms of  a suitable majorisation criterion. Thanks to this fact, one can take advantage of majorisation theory and develop quantitative measures of resourcefulness under unital operations. We call these measures {\em unital monotones} and show that they are in one-to-one correspondence with  Schur-convex functions \cite{Olkin}.  
 
Majorisation is a necessary and sufficient condition for state convertibility under unital operations.  For  random reversible and noisy operations, however, majorisation  is only necessary, as we illustrate explicitly with a counterexample.  Majorisation becomes sufficient   if and only if the dynamics allowed by the theory satisfy a suitable requirement, which we call {\em unrestricted reversibility}.   When this is the case, the sets of random reversible, noisy, and unital operations define the same notion of resource.
   Moreover,   one can prove the validity of an \emph{entanglement-thermodynamics duality}   \cite{Chiribella-Scandolo-entanglement}, which connects the three resource theories of purity and the resource theory of pure bipartite entanglement.  
All these results identify  sharp theories with purification and unrestricted reversibility as a privileged spot in the space of all possible physical theories. In this spot,  thermodynamic and information-theoretic features interact in a very similar way as they do in quantum mechanics. 

The paper is structured as follows. In section~\ref{sec:Framework}
we briefly  review  the framework, and in section~\ref{sec:constraints} we introduce constrained theories, a natural setting where to develop microcanonical thermodynamics.  In section~\ref{sec:microcanonical requirements} we propose two basic requirements for a well-posed microcanonical thermodynamics in general physical theories, and in section~\ref{sec:Resource-theories-of}
we extend  the three resource theories of random reversible, noisy, and unital operations from quantum theory to arbitrary probabilistic theories.   In section~\ref{sec:Axioms} we
 introduce the axioms and discuss their basic consequences for the class of theories we study. The implications of the axioms for microcanonical thermodynamics are examined in section~\ref{sec:microcanonical}. In section~\ref{sec:Purity-in-sharp} we establish majorisation as a necessary and sufficient condition for the convertibility of states under unital operations, and we characterise the corresponding monotones in terms of Schur-convex functions.  Remarkably, majorisation is not a sufficient criterion for convertibility under random reversible channels; we show a counterexample in section~\ref{sec:doubled}.
 In section~\ref{sec:sharp theories unrestricted} we determine when the three resource theories are equivalent in terms of state convertibility.
Finally, in section~\ref{sec:monotones} we establish the duality between the three resource theories of microcanonical thermodynamics and the resource theory of entanglement. Conclusions are
drawn in section~\ref{sec:Conclusions}.

\section{Framework\label{sec:Framework}}

Toy models of physical theories beyond classical and quantum mechanics can be formulated in the language of general probabilistic theories  \cite{Chiribella-purification,Chiribella-informational,hardy2011,Hardy-informational-2,hardy2013,Chiribella14,QuantumFromPrinciples,dariano2017quantum}, an umbrella term used to describe frameworks dealing with the notions of state, transformation, and measurement, along with a set of rules to assign probabilities to measurement outcomes. 
Specifically, this paper adopts  the framework  known as {\em operational-probabilistic
theories (OPTs)} \cite{Chiribella-purification,Chiribella-informational,hardy2011,Hardy-informational-2,hardy2013,Chiribella14,QuantumFromPrinciples,dariano2017quantum}. The OPT framework differs from other frameworks for general probabilistic theories, such as the convex set framework  \cite{Barrett,Wilce-formalism,Barnum-2,Barnum2016}, in the particular way it treats the composition of systems.   While in the convex set framework one generally starts  from convex sets associated with individual systems, and builds composites from them, the OPT framework takes  the composition of physical processes as primitive.  
   Mathematically, the ``top-down'' approach  of the OPT framework is underpinned by the graphical language of circuits \cite{Abramsky2004,Coecke-Kindergarten,Coecke-Picturalism,Selinger,Coecke2016,Coecke2017picturing}.    
In this section we give an informal presentation,  referring the reader to the original articles for a more in-depth discussion.

\subsection{States, transformations, and measurements}

OPTs describe the  experiments that can be performed on a given set of systems by a given set of physical processes. The framework is based on a primitive notion of composition, whereby every pair of physical systems $\rA$ and $\rB$ can be combined into a composite system, denoted by $\rA\otimes \rB$.   Physical processes can be connected in sequence or in parallel to build
circuits, such as
\begin{align}\label{circuit}
\begin{aligned}\Qcircuit @C=1em @R=.7em @!R { & \multiprepareC{1}{\rho} & \qw \poloFantasmaCn{\rA} & \gate{\cA} & \qw \poloFantasmaCn{\rA'} & \gate{\cA'} & \qw \poloFantasmaCn{\rA''} &\measureD{a} \\ & \pureghost{\rho} & \qw \poloFantasmaCn{\rB} & \gate{\cB} & \qw \poloFantasmaCn{\rB'} &\qw &\qw &\measureD{b} }\end{aligned}~.
\end{align}
In this example, $\mathrm{A}$, $\mathrm{A}', \rA'', \rB$, and $\rB'$ are \emph{systems}, $\rho$
is a bipartite \emph{state}, $\mathcal{A}$, $\mathcal{A}'$ and $\mathcal{B}$
are \emph{transformations}, $a$ and $b$ are \emph{effects}. Note that inputs are on the left and outputs are on the right.

For generic systems $\rA$ and $\rB$, we denote
by 
\begin{itemize}
\item $\mathsf{St}\left(\mathrm{A}\right)$ the set of states of system
$\mathrm{A}$,
\item $\mathsf{Eff}\left(\mathrm{A}\right)$ the set of effects on $\mathrm{A}$,
\item $\mathsf{Transf}\left(\mathrm{A},\mathrm{B}\right)$ the set of transformations
from $\mathrm{A}$ to $\mathrm{B}$, and by $\mathsf{Transf}\left(\mathrm{A}\right)$
the set of transformations from $\mathrm{A}$ to $\mathrm{A}$,
\item   $\mathcal B \circ \mathcal A$   (or $\mathcal B \mathcal A$, for short) the sequential composition of two transformations $\mathcal A$ and $\mathcal B$, with the input of $\mathcal B$ matching the output of $\mathcal A$,
\item  $\map I_{\rA}$  the identity transformation on system $\rA$, represented by   the plain wire   $\Qcircuit @C=1em @R=.7em @!R {  & \qw \poloFantasmaCn{\rA}  & \qw }$
\item $\mathcal{A}\otimes\mathcal{B}$ the parallel composition (or tensor
product) of the transformations $\mathcal{A}$ and $\mathcal{B}$.
\end{itemize}
Among the list of valid physical systems, every OPT includes the trivial system $\mathrm{I}$, corresponding to the degrees
of freedom ignored by  theory. The trivial system acts as a unit
for the composition of systems: for every system $\mathrm{A}$, one has $\mathrm{I}\otimes\mathrm{A}=\mathrm{A}\otimes\mathrm{I}=\mathrm{A}$.

States (resp.\ effects) are transformations with the trivial system
as input (resp.\ output). Circuits with no external wires, like the
circuit in Eq.~\eqref{circuit},  are called {\em scalars}.  
We will often use  the notation $\left(a\middle|\rho\right)$ to denote the scalar 
\begin{align}
\left(a|\rho\right)~:=\!\!\!\!\begin{aligned}\Qcircuit @C=1em @R=.7em @!R { & \prepareC{\rho}    & \qw \poloFantasmaCn{\rA}  &\measureD{a}}\end{aligned}~,
\end{align}
and of the notation $\left(a\middle|\mathcal{C}\middle|\rho\right)$
to denote the scalar \begin{align}
\left(a\right|\cC\left|\rho\right)~:=\!\!\!\!\begin{aligned}\Qcircuit @C=1em @R=.7em @!R { & \prepareC{\rho}    & \qw \poloFantasmaCn{\rA}  &\gate{\cC}  &\qw \poloFantasmaCn{\rB} &\measureD{a}}\end{aligned}~.
\end{align}
In the OPT framework,  the scalar $\left(a\middle|\rho\right)$  is identified with a real number in
the interval $\left[0,1\right]$, interpreted as the probability of a joint occurrence of the state $\rho$ and the effect $a$ in an experiment consisting of a state preparation  (containing $\rho$ as one of the possible states), followed by a measurement (containing $a$ as one of the possible effects).  

The fact that scalars are real numbers induces a notion of sum for
transformations, so that the sets $\mathsf{St}\left(\mathrm{A}\right)$,
$\mathsf{Transf}\left(\mathrm{A},\mathrm{B}\right)$, and $\mathsf{Eff}\left(\mathrm{A}\right)$
become spanning sets of \emph{real} vector spaces.   By {\em dimension of the state space}  $\St\left( \rA\right) $, we mean the dimension of the vector space spanned by the states of $\rA$.  


\subsection{Tests and channels}

In general, a physical process can be non-deterministic, i.e.\ it can result into a set of alternative transformations, heralded by different outcomes, which can (at least in principle) be accessed by an experimenter.
  General non-deterministic processes are described by 
 \emph{tests}:  a test from $\mathrm{A}$ to $\mathrm{B}$ is a collection
of transformations $\left\{ \mathcal{C}_{i}\right\} _{i\in\mathsf{X}}$
from $\mathrm{A}$ to $\mathrm{B}$, where  $\mathsf{X}$ is the set of outcomes. If $\mathrm{A}$ (resp.\ $\mathrm{B}$)
is the trivial system, the test is called a \emph{preparation-test}
(resp.\ \emph{observation-test}). If the set of outcomes $\mathsf{X}$ contains a single element, we say that the test is \emph{deterministic}, because one, and only one transformation can take place.    We will denote the sets of deterministic states, transformations, and effects as $\Det\St \left( \rA\right) $,  $\Det\Transf\left( \rA,\rB\right) $, and $\Det\Eff \left( \rA\right) $  respectively.   We refer
to deterministic transformations as \emph{channels}.   

A tranformation $\mathcal{U}$
from $\mathrm{A}$ to $\mathrm{B}$ is called \emph{reversible} if
there exists another  transformation $\mathcal{U}^{-1}$ from $\mathrm{B}$ to $\mathrm{A}$
such that $\mathcal{U}^{-1}\mathcal{U}=\mathcal{I}_{\mathrm{A}}$
and $\mathcal{U}\mathcal{U}^{-1}=\mathcal{I}_{\mathrm{B}}$.   It is not hard to see that reversible transformations are deterministic, i.e. they are ``channels''.  
If there exists a reversible transformation transforming $\mathrm{A}$ into
$\mathrm{B}$, we say that $\mathrm{A}$ and $\mathrm{B}$ are \emph{operationally
equivalent}, denoted by $\mathrm{A}\simeq\mathrm{B}$.  Physically, this means that every experiment performed on system $\rA$ can be (at least in principle) converted into an experiment on system $\rB$, and vice versa. The composition
of systems is required to be \emph{symmetric}, meaning that $\mathrm{A}\otimes\mathrm{B}\simeq\mathrm{B}\otimes\mathrm{A}$.
Physically, this means that for every pair of systems $\rA$ and $\rB$ there exists a reversible transformation  that swaps $\rA$ with $\rB$.

\subsection{Pure transformations}  
The  notion  of pure transformation plays centre stage in our work. Intuitively, pure transformations  represent the  most fine-grained processes allowed by the theory.    To make this intuition precise, we need a few definitions. 

The first definition is  \emph{coarse-graining}---the operation of joining two or more outcomes of a test into
a single outcome:  the test $\left\{ \mathcal{C}_{i}\right\} _{i\in\mathsf{X}}$
is a \emph{coarse-graining} of the test $\left\{ \mathcal{D}_{j}\right\} _{j\in\mathsf{Y}}$
if there exists a partition $\left\{ \mathsf{Y}_{i}\right\} _{i\in\mathsf{X}}$
of $\mathsf{Y}$ such that 
\begin{align} 
\mathcal{C}_{i}=\sum_{j\in\mathsf{Y}_{i}}\mathcal{D}_{j}  , \qquad \forall  i\in\mathsf   X \, .
\end{align}
We say that the test $\left\{ \mathcal{D}_{j}\right\} _{j\in\mathsf{Y}}$
is a \emph{refinement of  the test} $\left\{ \mathcal{C}_{i}\right\} _{i\in\mathsf{X}}$.   A transformation $\map D_j$ with $j$ in the set $\set Y_i$ is a {\em refinement of the transformation}  $\map C_i$.    

 Pure transformations are the most refined transformations:
\begin{defn}
A transformation
$\mathcal{C}\in\mathsf{Transf}\left(\mathrm{A},\mathrm{B}\right)$
is \emph{pure} if it has only trivial refinements, namely refinements
$\left\{ \mathcal{D}_{j}\right\}$ of the form $\mathcal{D}_{j}=p_{j}\mathcal{C}$,  
where $\left\{ p_{j}\right\} $ is a probability distribution.
\end{defn} 
We denote the sets of pure transformations, pure states, and pure effects as   $\mathsf {PurTransf}  \left( \rm A, \rm B\right)  $, $\mathsf{PurSt}  \left(   \rm A \right) $, and $\mathsf{PurEff} \left( \rm A\right) $  respectively.   As usual, non-pure
states are called \emph{mixed}.

\subsection{Purification}    

Another  key notion in our paper is the notion of purification  \cite{Chiribella-purification,QuantumFromPrinciples}.  Consider a bipartite system $\rA\otimes \rB$ in the state $\rho_{\rA\rB}$.  The state of system $\rA$ alone is obtained  by \emph{discarding} system $\rB$---that is, by applying a channel that transforms system $\rB$ into the trivial system.   Discarding operations are represented by  \emph{deterministic effects}, i.e.\ deterministic transformations with trivial output.    In quantum theory, every system has one and only one deterministic effect, corresponding to the partial trace on the Hilbert space of the system.
 
 Given a deterministic effect $  e  \in  \Det\Eff \left( \rB\right) $, the corresponding \emph{marginal state}  is  
\begin{equation}\label{marginal}
\begin{aligned}\Qcircuit @C=1em @R=.7em @!R { &\prepareC{\rho_\rA}  & \qw \poloFantasmaCn{\rA} &\qw }
\end{aligned}
  ~:= \!\!\!\! \begin{aligned}\Qcircuit @C=1em @R=.7em @!R { &\multiprepareC{1}{\rho_{\rA\rB}}  & \qw \poloFantasmaCn{\rA} &\qw \\ & \pureghost{\rho_{\rA\rB}}  & \qw \poloFantasmaCn{\rB} & \measureD{e} }\end{aligned}~,
\end{equation}
or, in formula, $\rho_\rA   :=   \left( \map I_\rA \otimes e\right)   \rho_{\rA\rB}$.   

When  $\rho_{\rA\rB}$  is pure and Eq.~\eqref{marginal} is satisfied for some deterministic effect  $e$, we say that $\rho_{\rA\rB}$ is a \emph{purification} of $\rho_\rA$ and we call $\rB$ the \emph{purifying system}  \cite{Chiribella-purification,QuantumFromPrinciples}.
\begin{defn}\label{def:uniqueness} A purification  $\Psi  \in  \Pur\St \left( \rA\otimes \rB\right) $ is \emph{essentially unique} \cite{QuantumFromPrinciples} if for every  pure state   $\Psi'  \in  \Pur\St \left( \rA\otimes \rB\right) $  and every deterministic effect $e' \in  \Det\Eff  \left( \rB\right)$   satisfying  the purification condition
\begin{equation}    \begin{aligned}\Qcircuit @C=1em @R=.7em @!R { &\multiprepareC{1}{\Psi'}  & \qw \poloFantasmaCn{\rA} &\qw \\ & \pureghost{\Psi'}  & \qw \poloFantasmaCn{\rB} & \measureD{e'} }  
  \end{aligned}~
     = \!\!\!\!\begin{aligned}\Qcircuit @C=1em @R=.7em @!R { &\prepareC{\rho_{\rA}}  & \qw \poloFantasmaCn{\rA} &\qw } \end{aligned}     \end{equation}
one has 
\begin{equation}\begin{aligned}\Qcircuit @C=1em @R=.7em @!R { & \multiprepareC{1}{\Psi'}    & \qw \poloFantasmaCn{\rA} &  \qw   \\  & \pureghost{\Psi'}    & \qw \poloFantasmaCn{\rB}  &   \qw }\end{aligned}~=\!\!\!\! \begin{aligned}\Qcircuit @C=1em @R=.7em @!R { & \multiprepareC{1}{\Psi}    & \qw \poloFantasmaCn{\rA} &  \qw &\qw &\qw   \\  & \pureghost{\Psi}    & \qw \poloFantasmaCn{\rB}  &   \gate{\cU} &\qw \poloFantasmaCn{\rB} &\qw }\end{aligned}~ ,\end{equation}
and 
\begin{equation}\begin{aligned}\Qcircuit @C=1em @R=.7em @!R {     & \qw \poloFantasmaCn{\rA} &  \measureD{e}   }\end{aligned} ~=~ \begin{aligned}\Qcircuit @C=1em @R=.7em @!R {   & \qw \poloFantasmaCn{\rA} &    \gate{\cU} &\qw \poloFantasmaCn{\rA} &\measureD{e'} }\end{aligned}~ ,\end{equation}
for some reversible transformation $\map U$.   
\end{defn}

In a completely general theory, there may be different ways to discard a system, corresponding to different deterministic effects.  The deterministic effect is unique in {\em causal} theories, that is, theories where no signal can be sent from the future to the past  \cite{Chiribella-purification}.

\subsection{Finiteness, closure, and convexity}  
In this paper we will make three standing assumptions. The first assumption is that our OPT describes {\em finite systems}, i.e.\ systems for which
the state space is finite-dimensional.    Operationally, this means that the state of each system is uniquely determined by the statistics of a finite number of finite-outcome measurements. 

Our second assumption is that the space of transformations (and the spaces of states and effects, in particular) is  closed under limits. Physically, this expresses the fact  that a limit of operational procedures is itself an operational procedure, whereby the target transformation can be implemented with arbitrary accuracy. Mathematically,   a sequence of transformations $ \left\{\mathcal{T}_n\right\}_{n\in\mathbb{N}} $ from $ \rA $ to $ \rB $ converges to the transformation $ \mathcal{T}\in\mathsf{Transf}\left( \rA,\rB\right)  $ if for every reference system $\rR$, every state $\rho  \in \St \left( \rA\otimes \rR\right) $, and every effect $E   \in  \Eff \left( \rB\otimes \rR\right) $   the probabilities $\left( E\middle|   \map T_n \otimes \map I_\rR  \middle|\rho\right) $ converge to the probability $\left( E  \middle|  \map T \otimes \map I_\rR  \middle| \rho\right) $. 

The third standing assumption  made throughout the paper is that the  space of transformations $\mathsf{Transf}\left( \rA,\rB\right) $ is convex  for every $\rA$ and $\rB$. Mathematically, this means that one has the implication  
\begin{align}
\nonumber &\map T,\map S \in  \mathsf{Transf}\left( \rA,\rB\right),\: p  \in  [0,1] \\
& \quad \Longrightarrow  \qquad  p \map T  +   \left( 1-p\right)   \map S  \in \mathsf{Transf}\left( \rA,\rB\right).
\end{align} 
Physically, this means that the experimenter can  perform arbitrary randomised operations. 
Note that convexity is a natural assumption   in every non-deterministic theory:  provided that {\em some} experiment yields random outcomes, one can always repeat that experiment many times and approximate every probability distribution \cite{Chiribella-purification}.  Then, the closure assumption guarantees that the limit probability distribution is also achievable within the theory.   Thanks to this fact,  the experimenter can  perform arbitrary  randomised tests.

\section{Theories of systems with constraints}\label{sec:constraints}  

The language of general probabilistic theories is largely interpretation-independent.  As such, it has the flexibility to model very different physical scenarios, or even to model different fragments of the same physical theory.  In this paper we use the framework to model scenarios of microcanonical thermodynamics, where the   systems under consideration have a well-defined energy.  More generally, the microcanonical approach can be applied to systems with additional constraints---e.g.\ to systems of particles confined in a given volume, or constrained to have a fixed value of the angular momentum.    In these scenarios, the microcanonical state is interpreted as  the state of ``minimum information'' compatible with the constraints.   In this section we outline how the OPT framework can be used to describe physical systems subject to constraints.

\subsection{Constrained  systems in quantum theory}   Before delving into general theories, it may help to analyse the example of  quantum theory.  Let us consider first the case of a quantum system  $\rS$ constrained to a fixed value of the energy.   
   The constraint is implemented by specifying the system's Hamiltonian $H_\rS$ and by restricting the allowed states to (mixtures of) eigenstates of the Hamiltonian for a fixed eigenvalue, say $E$.   
   The quantum states compatible with the constraint are the density matrices $\rho$ satisfying  the condition 
    \begin{align}\label{eneconstr}
    P_{E}\rho P_{E}  =  \rho ,
   \end{align} 
    where $P_{E}$ is the projector on the eigenspace of $H_\rS$ with eigenvalue $E$. 
   
   For example, the system $\rS$ could be an electron in a  hydrogen atom, in the absence of external fields.   In general, the  basis states of the electron are labelled as  $\ket{n,l,m,m_s}$, where  $n,l,m$, and $m_s$ are  the  principal, orbital, magnetic, and spin quantum number respectively.       The electron may be constrained to the lowest energy shell, corresponding to $n=1$ and $l=m=0$.  In this case,  the allowed states are contained in a two-dimensional subspace, spanned by the ``spin-up'' and ``spin-down'' states  $\ket{n=1, l=  0, m=0,  m_s=  1/2}$ and $\ket{n=1,  l=0,  m=0, m_s = -1/2}$,  
  Under this restriction, the electron's spin can be regarded as an {\em effective qubit}. 
  
 Constraints other than  energy preservation can be treated in a similar way.  
 We consider constraints of the form  
 \begin{align}\label{quantumlinear}
 \map L \left( \rho\right)  =  0  ,
 \end{align}
 where $\map L$ is a linear map on the state space of the system. This form is suggested by  Eq.~\eqref{eneconstr}, where  the linear map is $\map L \left( \cdot\right)    =    P_E \left(  \cdot \right)  P_E     -  \map I_\rS  \left( \cdot\right) $,  $\map I_\rS$ being the identity channel on system $\rS$.   
 
 Given a set of constraints $\left\{\map L_i\right\}$, one can define an {\em effective system}, whose states are the density matrices  $\rho$ satisfying the conditions $\map L_i  \left( \rho\right)   =  0$ for every $i$.    The constrained quantum system can be denoted as   $\rA   :  =   \left(  \rS_\rA , \left\{   \map L_{\rA i}  \right\}_{i=1}^k \right) $,  where $\rS_\rA$ is the original system and $\left\{\map L_{\rA i}  \right\}_{i=1}^k$ are the   linear maps representing  the constraints.   The physical transformations of the effective system are the physical transformations of $\rS$ that send every input state satisfying the constraint into an output state satisfying the constraints.   For the energy constraint~\eqref{eneconstr}, this means that the transformations of the effective systems should be energy-preserving.   
  
 A  familiar example of effective system is the polarisation of a single photon. At the fundamental level, the single photon is just  an excitation of  the electromagnetic field, e.g.\ corresponding to the wave vector $\st k$.     One can regard the single photon as an effective system by restricting the attention to the two-dimensional space spanned by the states $\ket{\st k, H,  1} $ and $\ket{\st k, V,  1}$,  corresponding to vertical and horizontal polarisation, respectively.    In this case, we can see two constraints working together: a constraint on the wave vector and a constraint on the energy of the field.  
 Note that the  effective description in terms of single photons is accurate  only as long as the dynamics of the field is confined into the ``single-photon subspace with wave vector $\st k$''.

So far, we have defined effective systems at the single-system level.   An important question is how to define the composition of effective systems.   
Consider two effective systems  $\rA  =  \left(  \rS_\rA , \left\{   \map L_{\rA i}  \right\}_{i=1}^k \right)$    and  $\rB  =  \left(  \rS_\rB , \left\{   \map L_{\rB j}  \right\}_{j=1}^l \right)$, where $\rS_\rA$ and $\rS_\rB$ are the original, unconstrained systems.      A natural way to define the effective composite system $\rA\otimes \rB$ is to select the states of the unconstrained composite system $\rS_\rA  \otimes \rS_\rB$ that satisfy both constraints---i.e.\ to select the density matrices $\rho$ such that 
\begin{align}
\nonumber \left(\map L_{\rA  i}\otimes \map I_{\rS_{\rB}}\right)  \left( \rho\right)  
  &=    0      \qquad \forall i \in \left\{1,\dots, k\right\}  \\
\left( \map I_{\rS_{\rB}}  \otimes  \map L_{\rB j}  \right)  \left( \rho\right) 
  &=    0      \qquad \forall j \in \left\{1,\dots, l\right\}   . \label{composeconstraint}
  \end{align} 

When the effective systems $\rA$ and $\rB$ result from an energy constraint,    the effective system $\rA\otimes \rB$ describes  a system consisting of two parts, each of which with its own, well-defined  energy.     In this case, the constraints~\eqref{composeconstraint} can be summarised in a single equation, namely  
\begin{align}\label{fixedlocal}
\left( P_{E_\rA}  \otimes  Q_{E_\rB}\right)   \rho  \left( P_{E_\rA}  \otimes  Q_{E_\rB}\right)    =   \rho ,
\end{align}
where $  E_\rA$ and $E_\rB$ are the energies of the two local systems, and  $P_{E_\rA}$ and $Q_{E_\rB}$ are the projectors on the corresponding eigenspaces.

One might be tempted to define the composite system  $\rA\otimes \rB$ in a different way, without imposing that each individual part has a definite energy.  Indeed, one could imagine that, when the two systems $\rS_\rA$ and $\rS_\rB$ are brought into contact, they start exchanging energy, with the only constraint that the {\em total} energy has to remain constant.    The resulting states would be density matrices that satisfy the (generally) weaker condition 
\begin{align}\label{totalE}
\Pi_{E_\rA +  E_\rB}   \rho   \Pi_{E_\rA  +  E_\rB}   =  \rho   ,
\end{align} 
where $\Pi_{E_\rA+E_\rB}$ is the projector on the eigenspace of $H_{\rS_\rA}  + H_{\rS_\rB}$ with eigenvalue $E_\rA  + E_\rB$.     The reason why we do not make such a choice can be illustrated with a simple example.  Suppose that $\rS_\rA$ and $\rS_\rB$ are  two spatial modes of the electromagnetic field, with wave vectors $\st k_\rA$ and $\st k_\rB$, respectively.  Systems $\rA$ and $\rB$ could be  single photons, i.e.\ effective systems corresponding to states of the field in the first excited level.    Now, if the energy of the two modes is the same, an energy-preserving evolution could transform the  initial state $\ket{\st k_\rA  ,  H,  1}  \ket{\st k_\rB,  H,1}$ into the state
\begin{align}
\ket{\Psi}  =  \frac{ \ket{\st k_\rA,  H,   2}  \ket{\st k_\rB ,  H, 0} + \ket{\st k_\rA,  H,   0}  \ket{\st k_\rB ,  H, 2}  }{\sqrt{2}}  . 
\end{align}
 States of this kind cannot be interpreted as states of two single photons.   Note that, instead, the constraint~\eqref{composeconstraint} correctly identifies the correct set of states---including, among others, entangled states such as the Bell state
\begin{align}
\ket{\Phi^+}  =  \frac{\ket{\st k_\rA,  H,   1}   \ket{\st k_\rB ,  H, 1} + \ket{\st k_\rA,  V,   1}   \ket{\st k_\rB ,  V, 1}  }{\sqrt 2} . 
\end{align}
  
Motivated by this and by similar examples, we reserve the notation $\rA\otimes \rB$ for effective systems defined by the constraint~\eqref{composeconstraint}.    Other effective systems, like the system defined by the constraint~\eqref{totalE}, can be treated in our framework, but will be  regarded as different from the product system $\rA\otimes \rB$.

\subsection{Constrained systems in general theories}  

The construction outlined in the quantum case can be easily extended to arbitrary physical theories.  
A constraint for system $\rS$ can be defined as an element  $\map L$ of the real vector space $\Transf_\R\left(   \rS\right) $ spanned by the physical transformations in  $\Transf \left( \rS\right) $.   The constraint is satisfied by the states $\rho$ such that 
\begin{align}
\map L  \left(  \rho\right)    = 0   .  
\end{align}   
For a given set of constraints  $\left\{ \map L_i\right\}_{i=1}^k$, one can define an {\em effective system} $\rA  :  =  \left( \rS  ,  \map L_1, \dots, \map L_k\right) $.       The states of the effective systems are defined as  
\begin{align}
\St \left( \rA\right)   =     \left\{ \rho  \in  \St \left( \rS\right)  \,\middle| \,  \map L_i  \left( \rho\right)     =  0  \, , i = 1,\dots, k  \right\}  .
\end{align}  

The transformations of the effective system $\rA$ are those transformations of  $\rS$ that send states of $\rA$ to states of $\rA$.     The measurements on $\rA$ are just the measurements on $\rS$, restricted to the states in $\St\left( \rA\right) $. 

For two effective systems,    $\rA  =  \left(  \rS_\rA , \left\{   \map L_{\rA i}  \right\}_{i=1}^k \right)$ and  $\rB  =  \left(  \rS_\rB , \left\{   \map L_{\rB j }  \right\}_{j=1}^l \right)$, we define the composite system $\rA\otimes \rB$ to be the effective system  
\begin{align}\label{compose}
\rA\otimes \rB   :  =  \left(  \rS_\rA \otimes \rS_\rB  ,    \left\{   \map L_{\rA i}  \otimes \map I_\rB \right\}_{i=1}^k \cup  \left\{   \map I_\rA\otimes \map L_{\rB j }  \right\}_{j=1}^l  \right) .
\end{align}  
  This definition is consistent with interpretation of system $\rA\otimes \rB$ as a composite system made of two, independently addressable parts $\rA$ and $\rB$. For example,  local measurements on the one side of a bipartite state of $\rA\otimes \rB$ induce states of the correct system  (either $\rA$ or $\rB$) on the other side.    

\subsection{Effective theories}   
 Given a theory and a set of constraints composed as in Eq.~\eqref{compose}, one can build a new {\em effective theory}, which consists only of effective systems.  For example,  one can build an effective theory where every system has definite energy, and where every composite systems consist of subsystems with definite energy.   For a given system $\rA$ in such a theory, all the states in    $\St \left( \rA\right) $ have---{\em by fiat}---the same energy.    Likewise, all  the transformations  in $\Transf \left( \rA\right) $ will be---{\em by fiat}---energy-preserving.    For every pair of systems $\rA$ and $\rB$,  the composite system $\rA\otimes \rB$ consists  of two parts, each of which with its own, well-defined  energy.   The joint transformations in $\Transf \left( \rA\otimes  \rB\right) $ will be interpreted as operations that preserve the energy of the first part   {\em and} the energy of the second part.   
 
One benefit of the effective picture is that one does not need to  specify the constraints---in principle, {\em every} linear constraint can fit into the framework. In this way, we can circumvent the thorny issue of defining the notion of  Hamiltonian in general probabilistic theories (cf.\ Ref.~\cite{TowardsThermo}): in the effective description,  we can simply regard each effective system as a system with trivial Hamiltonian, which assigns the same energy to all states of the system.     

 \section{The microcanonical framework}\label{sec:microcanonical requirements}

  In this section we build a microcanonical framework for general physical theories.  We will adopt the effective description, wherein  every system is interpreted as the result of a constraint---typically, but not essentially,  a constraint on the energy.  

\subsection{The principle of equal {\em a priori} probabilities}\label{sub:equalapriori} 

The starting point of the microcanonical approach is the principle of equal {\em a priori} probabilities, stating that one should assign the same probability to all the microstates of the system compatible with a given macrostate.  In our language, the ``microstates''  are the deterministic pure states, representing those preparations of the system that are both deterministic and maximally fine-grained.  The ``macrostate'' is specified by a constraint, such as the constraint of fixed energy.  The principle of equal {\em a priori} probabilities states that  the system should be described by a uniform mixture of all deterministic pure states satisfying the constraint. For example, the microcanonical state of a (finite-dimensional) quantum system at energy $E$ is described by the density matrix  
 \begin{align}\label{qmicro1}
 \chi_E : =   \int_{\set S_E}   p_E\left( \d \psi\right)           \ket{\psi}\bra{\psi}  ,
  \end{align} 
  where  $\set S_E$ is the  manifold of pure states in the eigenspace of the system's Hamiltonian corresponding to the  eigenvalue $E$,  and  $p_E  (\d \psi)$ is the uniform probability distribution over $\set S_E$. 
  In the effective picture,    the microcanonical state is nothing but  the {\em maximally mixed state}  
  \begin{align}\label{qmicro2}
  \chi_\rA   :=  \int  \d \psi  \,  |\psi\>\<\psi|  ,
  \end{align} 
where $\d \psi$ is the uniform probability distribution over the pure states of the system.  

A traditional problem in the foundations of statistical mechanics is to determine the conditions under which the principle of equal {\em a priori} probabilities holds.  Here we will not delve into this problem, which involves a great deal of detail about the  physics of  the system and of  its dynamics.   Instead, we will focus on the general conditions that must be satisfied in order to formulate the principle of equal {\em a priori} probabilities in  physical  theories  other than classical and quantum mechanics.

\begin{figure}[h!]
\centering
\subfigure[]
{\label{fig:semicerchio}
\includegraphics[width=0.4\linewidth]{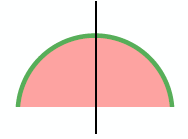}
}
\subfigure[]
{\label{fig:cerchio}
\includegraphics[width=0.4\linewidth]{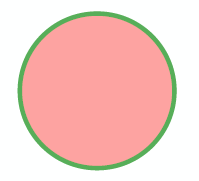}
}

\caption{ Two different sets of deterministic states.    For the set  in figure  \ref{fig:semicerchio}, the pure states form a half-circle (in green).   Due to the limited symmetry of the state space, there is no canonical notion of equal {\em a priori} probability on the manifold of pure states.  For the set in figure \ref{fig:cerchio}, the pure states form a full circle and the notion of uniform probability distribution is uniquely defined.  }
\end{figure}

In   general probabilistic theories,  the key problem  is to define what we mean by  ``equal {\em a priori} probabilities''. 
   In quantum mechanics, there is a canonical choice:  the unitarily invariant probability distribution on the pure states of the system.      The obvious extension to general theories is to consider  the probability distributions that are invariant under all reversible transformations.     
  The problem is, however, that there may be more than one invariant probability distribution.  This point is illustrated in the following example: 

\begin{example}  Consider a     toy theory where the space of  the deterministic states of one of the systems is  a half-disk in the two-dimensional plane, as  in Figure  \ref{fig:semicerchio}.      For this system, the pure states are the states on the half-circle (in green in the figure) and can be parametrised with a polar angle $\theta$ between 0 and  $\pi$.   Now, the reversible transformations send deterministic states into deterministic states and, therefore, must be symmetry transformations of the state space.  For the half-disk, the only symmetry  transformations are the identity transformation and the reflection around the symmetry axis (in black in the figure).    Hence, every probability distribution  that assigns the same probability distribution to the points $\theta$ and $\pi-\theta$ is guaranteed to be invariant under reversible transformations.    This means that the notion of  ``equal {\em a priori} probabilities'' is not uniquely defined.      The situation would be different if the state space of the system were a full disk, as illustrated in Figure \ref{fig:cerchio}.     In this case, every rotation of the disk could be  (at least in principle) a reversible transformation of the system.    The invariant probability distribution would be unique and given by  the probability density $p\left( \theta\right) = \frac{1}{2\pi}$. 

 Note that in the above examples we only specified the state space and the transformations of a single system, without  giving the full-blown OPT.  It is easy to see that  such a theory does indeed exist.  In general, one can always build a {\em ``Minimal OPT''} that includes a given system with a given state space, a given set of  transformations, and a given set of measurements.  The construction was shown in  Ref.~\cite[example 2]{chiribella2016bridging}.  In the  Minimal OPT, the composite systems are made of many copies of the given system, and their allowed  states, transformations, and operations are (mixtures of) product states, product transformations, and product operations.   
\end{example}

The above example shows that  there exist probabilistic theories where the notion of ``equal {\em a priori} probabilities'' on pure states is not uniquely defined. In order to formulate the principle of equal {\em a priori} probabilities, we put forward the following requirement:  
\begin{req}\label{req:uniquep}
For every (finite) system there exists a unique invariant probability distribution  on the deterministic pure states.   
\end{req}

This requirement is far from trivial.  In fact, it is equivalent to an important property, independently considered in the literature on the axiomatisation of quantum theory \cite{Hardy-informational-1,Brukner,masanes,Barnum-interference}: 
\begin{thm}\label{theo:uniquep}
For every finite system $\rA$, the following are equivalent: 
\begin{enumerate}
\item  There exists a unique invariant probability distribution on the  deterministic pure states of system $\rA$.
\item  
Every deterministic pure state of system $\rA$  can be obtained from every other deterministic pure state of the same system through a reversible transformation. 
\end{enumerate}
\end{thm}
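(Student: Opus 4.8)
The plan is to recognise the statement as an instance of the standard correspondence between transitive group actions and uniqueness of invariant measures. Write $G$ for the set of all reversible transformations of $\rA$, which forms a group under composition, and let $X$ denote the set of normalised deterministic pure states of $\rA$, on which $G$ acts by $g\colon \psi \mapsto g\psi$; this action is well defined because a reversible transformation and its inverse map deterministic pure states to deterministic pure states. The first thing I would establish is that $G$ is a \emph{compact} topological group and that its action on $X$ is continuous. Compactness follows from the three standing assumptions: each reversible transformation is a linear bijection of the finite-dimensional space $\Transf_\R(\rA)$ mapping the compact convex set of normalised states onto itself, so $G$ is a bounded set of linear maps; the closure-under-limits assumption makes $G$ closed, since a limit of reversible transformations is again invertible (the inverses lie in the same bounded set, and a convergent subsequence of them provides the inverse of the limit) and hence reversible. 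A bounded closed set of linear maps in finite dimension is compact, so $G$ carries a unique normalised Haar measure $\mu_G$, and the orbit maps $g \mapsto g\psi_0$ are continuous.

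Next I would prove (2) $\Rightarrow$ (1). Transitivity means $X$ is a single orbit, hence the continuous image $X = G\psi_0$ of the compact group $G$, and therefore compact. An invariant probability measure exists as the push-forward of $\mu_G$ along an orbit map, its invariance being immediate from left-invariance of $\mu_G$. For uniqueness I would use the averaging argument: given any invariant Borel probability measure $\nu$ and any continuous $f\colon X \to \R$, set $\bar f(\psi) := \int_G f(g\psi)\, \d\mu_G(g)$. Left-invariance of $\mu_G$ shows $\bar f$ is constant along orbits, and transitivity forces $\bar f$ to be a single constant $c_f$ on all of $X$. Invariance of $\nu$ then gives $\int_X f\, \d\nu = \int_X \bar f\, \d\nu = c_f$, so the value of $\int_X f\, \d\nu$ is fixed independently of $\nu$. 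Since a Borel probability measure on a compact space is determined by its integrals against continuous functions (Riesz representation), $\nu$ is unique, which is (1).

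Finally I would prove (1) $\Rightarrow$ (2) by contraposition. If the action is not transitive, pick pure states $\psi_1,\psi_2$ lying in distinct orbits $O_1 = G\psi_1$ and $O_2 = G\psi_2$. Each orbit is the continuous image of the compact group $G$, hence compact, and the two are disjoint. Pushing $\mu_G$ forward along the two orbit maps yields invariant probability measures $\nu_1$ and $\nu_2$, supported on $O_1$ and $O_2$ respectively; having disjoint supports they are distinct. This exhibits two different invariant probability distributions on the deterministic pure states, contradicting (1); hence (1) forces transitivity.

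The main obstacle I anticipate lies entirely in the first paragraph: making rigorous that the reversible transformations form a compact group acting continuously, so that the Haar measure and the Riesz representation theorem are genuinely available under the stated finiteness, closure, and convexity assumptions. Once compactness is secured, both implications are routine applications of the invariant-measure/transitive-action dictionary. A secondary simplification worth noting is that the direction (1) $\Rightarrow$ (2) only requires the individual \emph{orbits} to be compact, not the whole set $X$, which sidesteps any delicate question of whether the set of pure states is itself closed.
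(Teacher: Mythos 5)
Your proposal is correct and follows essentially the same route as the paper's own proof: decompose the deterministic pure states into orbits of the closed (hence compact, in finite dimension) group of reversible transformations, push the Haar measure forward to obtain an invariant distribution on each orbit---so that multiple orbits force non-uniqueness---and note that on a single orbit the normalised invariant measure is unique. The only difference is one of detail: you explicitly prove the compactness of the group and the uniqueness on a homogeneous space (via the averaging/Fubini/Riesz argument), facts the paper invokes as standard, citing closure of the reversible group and the finite-dimensional representation.
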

\begin{proof}The  idea of the proof is that the set of deterministic pure states of system $\rA$ can be decomposed into a disjoint union of orbits generated by the group of reversible transformations. More formally, for every two (deterministic) pure states  $\alpha$ and $\alpha'$, one can define the equivalence relation $\alpha  \sim_{\rm rev} \alpha'$  if $\alpha'  = \map U \alpha$ for some reversible transformation $\map U$.   In this way, the set of deterministic pure states is partitioned into equivalence classes, known as homogeneous spaces. Moreover,  each homogeneous space is a closed set,  because the group of reversible transformations is closed \cite{Chiribella-purification} and has a finite-dimensional representation on the state space of system $\rA$. 
Now, one can define  an invariant probability distribution for every equivalence class. Indeed, it is enough to define the invariant measure on the pure states induced by the invariant measure on the group of reversible transformations.  Hence, the condition that there is only one invariant probability distribution implies that there must be only one equivalence class. In other words, every two pure states are connected by a reversible transformation. 

Conversely, if there is only one equivalence class for the relation $\sim_{\rm rev}$ there is only one invariant probability distribution. This is because the normalised invariant measure on a homogeneous space is uniquely defined.  \end{proof}      
  
The mutual convertibility of pure states under reversible transformations was  introduced by Hardy \cite{Hardy-informational-1} as an axiom for the derivation of quantum theory and has been assumed, either directly or indirectly, in all the recent derivations inspired by quantum information theory (see Refs. \cite{Brukner,masanes,Chiribella-informational,Hardy-informational-2} and the contributed volume \cite{chiribella2016quantum} for an overview).   Theorem~\ref{theo:uniquep} provides one more motivation for  the convertibility of pure states, identified as the necessary condition for the formulation of the principle of equal {\em a priori} probabilities. 

\subsection{The microcanonical state}

Every theory satisfying requirement~\ref{req:uniquep} has a canonical notion of ``uniform distribution over the pure states of the system''.  We can then apply the principle of equal {\em a priori} probabilities and define the microcanonical state as the uniform mixture
\begin{align}\label{chidef}
\chi_\rA   :  =  \int p_\rA(  \d \psi) ~   \psi  \,  , 
\end{align}  
where $p_\rA (\d \psi)$ the invariant probability distribution over the deterministic pure states of system $\rA$.  
 
The convexity of the state space guarantees that the microcanonical state is indeed a state.  Moreover, since the state space is finite-dimensional, it is possible to replace the integral  in Eq.~\eqref{chidef}  with a finite sum.    This means that the microcanonical state can  (in principle) be generated by picking deterministic pure states at random from a finite set.

The microcanonical state has two important properties, proved in appendix~\ref{app:propertieschi}: 
\begin{enumerate}
\item it is {\em invariant  under arbitrary  reversible dynamics} of the effective system;
\item it {\em can  be generated from every other deterministic pure state of the effective system  through a random reversible dynamics}.   
\end{enumerate}
Property 1 expresses the fact  that the microcanonical state is an {\em  equilibrium state}, in the sense that it does not evolve under any of the reversible dynamics compatible with the constraints.    Note that the notion of equilibrium here is different from  the notion of thermal equilibrium, which refers to interactions with an external bath.  
  Instead of thermal equilibrium, we consider  here a  {\em dynamical equilibrium}, consisting in the fact that  the probability assignments made by the microcanonical state are stable under all possible evolutions of the system.  
  
Property 2 refers to the fact that the system can---at least in principle---be {\em brought} to equilibrium.  Physically, we can imagine a situation where the experimenter has no control on the system's preparation, but has control on the system's dynamics  through some classical control fields.  In this picture, Property 2  guarantees that the experimenter can prepare the microcanonical state by drawing at random the parameters of her control fields.    Further along this line, one can also imagine scenarios where the randomisation occurs naturally as a result of fluctuations of the fields.   Property 2 is important from the resource-theoretic approach, where  the microcanonical state is often regarded as  free, or ``easy to prepare''.

 \subsection{Composition of microcanonical states} 
  
  At the level of single systems,  requirement~\ref{req:uniquep}  guarantees the existence of a microcanonical state.  
   But how does the microcanonical state behave under the composition of systems?  
   Traditionally, this question is not addressed in textbook presentations, where the microcanonical state is associated to {\em isolated} systems, i.e., systems that do not interact with other systems. 
  From the operational point of view, however, it is natural to consider scenarios where the experimenter has  more than  one system at her disposal.     
  
  Composition is especially important in the context  of resource theories, where it is natural  to ask how resources interact when combined together.    To illustrate this point, it is useful  to consider the quantum resource theory of noisy operations    \cite{Local-Information,Horodecki-Oppenheim,Nicole}.   There, the  microcanonical states are treated as  free.  
       Since the experimenter can generate the microcanonical states $\chi_\rA$ and $\chi_\rB$ at no cost, then she can generate the product state $\chi_\rA\otimes \chi_\rB$ at no cost, too. If we insist that  the microcanonical states are the {\em only} free states in the resource theory of noisy operations,  
          the product state  $\chi_\rA\otimes \chi_\rB$  must  be the microcanonical state of the composite system $\rA\otimes \rB$---in formula,  
          \begin{align}\label{microprod}
          \chi_\rA\otimes \chi_\rB   =   \chi_{\mathrm{AB}}.
          \end{align}
        Eq.~\eqref{microprod} is consistent with the intuitive interpretation of  the microcanonical state as ``the state of minimum information compatibly with the constraints''. 
   Indeed,  Eq.~\eqref{microprod}  amounts to saying that, if one has minimum information on the parts of a system, then one has minimum information about the whole.    This is indeed the case in quantum theory, where the product of two maximally mixed states is maximally mixed.  Recall that here we are dealing with effective systems,  which exist only as long the corresponding constraints are enforced.   For energy constraints, the composite of two effective systems $\rA$ and $\rB$  is defined as a system consisting of two parts, each constrained to a specific value of the energy.  Consistently with this interpretation, the microcanonical state of system $\rA\otimes \rB$  is the ``maximally mixed state'' in the manifold of quantum states with  fixed local energies, as defined in  Eq.~\eqref{fixedlocal}.   
         
 Following the example of quantum mechanics, we require that  minimum information about the parts imply minimum information about the whole:  
 \begin{req}\label{req:productchi}
The microcanonical state of a composite system is the product of the microcanonical states of its components.  In formula: 
\begin{equation}
\chi_{\mathrm{AB}}  =  \chi_{\mathrm{A}}\otimes\chi_{\mathrm{B}}  ,\label{eq:prodchi}
\end{equation}
for every pair of effective systems $\mathrm{A}$ and $\mathrm{B}$.
\end{req}
We call Eq.~\eqref{eq:prodchi} the \emph{condition of informational equilibrium}. Note that, again,  here we are not referring to thermal equilibrium between the two subsystems.  This is clear from the fact that  we do not allow  an energy flow between the two  systems $\rA$ and $\rB$. Instead, we allow a flow of information, implemented by the joint dynamics of the  composite  system $\rA\otimes \rB$. 


It is natural to ask which physical principles guarantee the condition of informational  equilibrium. One such principle is Local Tomography  \cite{Hardy-informational-1,Barrett,Chiribella-purification}, namely the requirement that the state of multipartite systems be determined by the joint statistics of local measurements. 
  However,  Local Tomography is not necessary for informational equilibrium.  For example, quantum theory on real Hilbert spaces violates Local Tomography, but still satisfies the condition of informational equilibrium.    In this paper, we will \emph{not} assume Local Tomography in our set of physical principles. Nevertheless, our principles will guarantee the validity of the condition of informational equilibrium.   

\subsection{Microcanonical theories}  

We are now ready to extend the microcanonical framework from quantum and classical theory to general physical theories.    

\begin{defn}
An operational-probabilistic  theory, interpreted as a theory  of effective systems,  is {\em   microcanonical} if requirements~\ref{req:uniquep} and \ref{req:productchi} are satisfied.  
\end{defn}  

Physically, a microcanonical theory is a theory where  {\em i)} every system has a well-defined notion of uniform mixture of all pure states, and {\em ii)} uniform mixtures are   stable under parallel composition of systems. 
  Microcanonical theories provide the foundation for the definition of three important resource theories,  analysed in the following sections.  
  
\section{Three resource theories\label{sec:Resource-theories-of}}

In this section we study three different notions of state convertibility in the microcanonical theories.  We adopt the resource-theoretic framework of Refs.  \cite{Resource-theories,Resource-monoid},   where one fixes  a  set of {\em free operations}, closed under sequential and parallel composition.  
A basic question in the resource-theoretic framework  is whether a given state  $\rho$
can be transformed into another state $\sigma$ by means of free operations. When
this is possible, $\rho$ is regarded as ``more resourceful''
than  $\sigma$, denoted as $\rho\succeq_{\set F} \sigma$, where $\set F$ is the set of free operations. 
Mathematically, the relation $\succeq_{\set F}$ is  a preorder on the states. 


In the following  we define three resource theories and their corresponding preorders.  


\subsection{The RaRe Resource Theory}
  
Our first resource theory is based on the notion of random reversible channel \cite{Chiribella-Scandolo-entanglement}:
\begin{defn}
A \emph{random reversible (RaRe) channel}  on system $\rA$ is a channel $\mathcal{R}$
of the form $\mathcal{R}=\sum_{i}p_{i}\mathcal{U}_{i}$, where $\left\{ p_{i}\right\} $
is a probability distribution  and, for every $i$,  $\mathcal{U}_{i}$ is a reversible
channel on system $\rA$. 
\end{defn}
Physically,  RaRe channels are the operations that can be implemented with limited control over the reversible dynamics of the system.   
 Mathematically, it is immediate to check that RaRe channels have all the properties
required of free operations: the identity channel is RaRe, the sequential composition of two RaRe channels is a  RaRe
channel, and so is the parallel composition.  We call the resulting resource theory the \emph{RaRe Resource
Theory} and we denote by $\succeq_{\mathsf{RaRe}}$ the corresponding preorder. 

Note that the RaRe Resource Theory  can be formulated in  every OPT, \emph{even in OPTs that do not satisfy   requirements~\ref{req:uniquep} and \ref{req:productchi}}.  
Such generality, however,  comes at a price: the RaRe Resource Theory has no free states. This is because states are operations
with trivial input, while the only free operations in the RaRe theory
are transformations where the input and the output coincide.  

Despite not having free states, the RaRe Resource Theory can have {\em minimally resourceful states}, defined as follows
\begin{defn}
In a resource theory with free operations $\set F$, a state $\rho$ is  {\em minimally  resourceful} if  the condition $\rho  \succeq_{\set F} \sigma$ implies $\sigma  =  \rho$.  
\end{defn}
In the RaRe Resource Theory, minimally resourceful states are easy to characterise:  
\begin{prop}\label{prop:minimallyresourceful} A state is minimally resourceful in the RaRe Resource Theory if and only if it is invariant under the action of reversible 
transformations.
\end{prop}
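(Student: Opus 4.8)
The plan is to prove both directions of the biconditional separately. The claim characterizes minimally resourceful states in the RaRe Resource Theory as exactly the states invariant under all reversible transformations.

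First I would prove the easy direction: if $\rho$ is invariant under all reversible transformations, then it is minimally resourceful. Suppose $\rho \succeq_{\mathsf{RaRe}} \sigma$. By definition of the RaRe preorder, this means there exists a RaRe channel $\map R = \sum_i p_i \map U_i$ with $\sigma = \map R \rho = \sum_i p_i \map U_i \rho$. Since $\rho$ is invariant, $\map U_i \rho = \rho$ for every $i$, so $\sigma = \sum_i p_i \rho = \rho$ (using $\sum_i p_i = 1$). Hence the only state reachable from $\rho$ is $\rho$ itself, which is exactly the definition of minimal resourcefulness.

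For the converse, I would argue the contrapositive: if $\rho$ is \emph{not} invariant under reversible transformations, then it is not minimally resourceful. If $\rho$ fails to be invariant, there exists a reversible $\map U$ with $\map U \rho \neq \rho$. Now $\map U$ is itself a (degenerate) RaRe channel, so $\rho \succeq_{\mathsf{RaRe}} \map U \rho$ while $\map U \rho \neq \rho$, witnessing that the defining condition of minimal resourcefulness fails. This direction is essentially immediate once one notes that every reversible channel is trivially a RaRe channel (taking the probability distribution concentrated on a single index).

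The argument is short and I do not anticipate a serious obstacle; the main point requiring care is simply unpacking the definition of $\succeq_{\mathsf{RaRe}}$ correctly---namely that $\rho \succeq_{\mathsf{RaRe}} \sigma$ holds precisely when $\sigma$ is the image of $\rho$ under some RaRe channel---and recognizing that a single reversible transformation is a limiting case of a RaRe channel. One should also confirm there is no subtlety about whether ``invariance under reversible transformations'' and ``fixed by every RaRe channel'' coincide, but the linearity computation in the first direction shows invariance under all reversible $\map U_i$ immediately implies invariance under their convex combinations, so the two notions align.
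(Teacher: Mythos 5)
Your proof is correct and takes essentially the same route as the paper's: the paper proves the forward direction directly (since $\rho \succeq_{\mathsf{RaRe}} \map U \rho$ for every reversible $\map U$, minimality forces $\map U \rho = \rho$), which is just the non-contrapositive form of your second step, and its converse is your first step, with the linearity computation $\map R\rho = \sum_i p_i\,\map U_i \rho = \rho$ left implicit in the paper but spelled out by you. No gaps; the only differences are presentational.
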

\begin{proof}  By definition, one has $\rho  \succeq_{\rm RaRe}  \map U  \rho$ for every state $\rho$ and for every reversible transformation  $\map U$.     If $\rho$ is minimally resourceful, one must  have $\rho  =  \map U  \rho$. Hence, $\rho$ must be invariant under arbitrary reversible transformations. 
	
	 Conversely, suppose that $\rho $ is invariant, and that $\rho  \succeq_{\rm RaRe}  \sigma$.   By definition, this means that $\sigma  =  \map R  \rho  $, for some RaRe channel $\map R$.  Since $\rho$ is invariant, it must satisfy the relation $\map R \rho  = \rho$.    Hence, $\sigma  = \rho$.  \end{proof}   

For theories satisfying requirement~\ref{req:uniquep},   proposition~\ref{prop:minimallyresourceful}  implies that the microcanonical state is  minimally resourceful: indeed, we know that the microcanonical state is invariant under reversible transformations.

\subsection{The  Noisy Resource Theory}

While the RaRe Resource Theory can be defined in every OPT, we  now discuss a second resource theory that  can only be defined in  physical theories satisfying requirements~\ref{req:uniquep} and \ref{req:productchi}. 
 In this  resource theory, the free operations are  generated by letting the system interact  with ancillas in the microcanonical state.   These operations, usually called   ``noisy'' \cite{Local-Information,Horodecki-Oppenheim,Nicole},
  are defined as follows:
\begin{defn}\label{def:basicnoisy}
A channel $\mathcal{B}$,  from system $\mathrm{A}$
to system $\mathrm{A}'$,  is a  \emph{basic noisy operation}  if it can be decomposed as
\begin{equation}\label{eq:noisy} \begin{aligned}\Qcircuit @C=1em @R=.7em @!R { & & \qw \poloFantasmaCn{\rA} & \gate{\cB} & \qw \poloFantasmaCn{\rA'} &\qw }\end{aligned} ~= \!\!\!\! \begin{aligned}\Qcircuit @C=1em @R=.7em @!R { & & \qw \poloFantasmaCn{\rA} & \multigate{1}{\cU} & \qw \poloFantasmaCn{\rA'} &\qw \\ & \prepareC{\chi} & \qw \poloFantasmaCn{\rE} &\ghost{\cU} & \qw \poloFantasmaCn{\rE'} & \measureD{e} }\end{aligned}~, \end{equation}where
$\mathrm{E}$ and $\mathrm{E}'$ are suitable systems such that $\mathrm{A}\otimes\mathrm{E}\simeq\mathrm{A'}\otimes\mathrm{E}'$, $\mathcal{U}$ is a reversible transformation, and $e$ is a deterministic effect, representing a possible way to discard system $\rE'$. 
\end{defn}

Note that here we only allow reversible transformations, instead of mixtures of reversible transformations.  In principle, one could  consider arbitrary RaRe channels, as in the previous subsection.  The main reason why we stick to the reversible transformations (without randomisation) is that we want to be consistent with the existing  literature \cite{Local-Information,Horodecki-Oppenheim,Nicole}. Note also that  definition~\ref{def:basicnoisy} is interesting {\em per se}, because it does not rely on the availability  of external sources of randomness: instead, all the randomness is accounted for in the preparation of the microcanonical state in the right-hand side of  Eq.~\eqref{eq:noisy}.

Definition~\ref{def:basicnoisy} has a slightly unpleasant aspect:  the set of basic noisy operations is generally not closed.    
  In quantum theory, for example,  there exist counterexamples where the limit of a sequence of basic
noisy operations is not  a basic noisy operation  \cite{Shor}.      It is then convenient to take the closure of  the set of basic noisy operations:
\begin{defn}\label{def:noisy}
A channel $\mathcal N$ is a  \emph{noisy
operation} if it is the limit of a sequence of basic noisy operations  $\left\{\mathcal  B_n\right\}$. 
\end{defn}

The set of noisy operations satisfies all the requirements for being a set of free operations:  the identity is a noisy operation, and the
parallel and sequential composition of two noisy operations are a noisy operations, thanks to the  condition of informational equilibrium~\eqref{eq:prodchi}.
The resource theory where the set of free operations
is the set of noisy operations will be called the \emph{Noisy Resource Theory}. The corresponding preorder on states will be denoted by $\succeq_{\mathsf{Noisy}}$.

\subsection{The Unital  Resource Theory}

 In the third  resource theory, the set of free operations includes
all the operations that transform microcanonical states into microcanonical  states.  The rationale for considering these transformations, called \emph{unital channels}, is their generality: if we insist that  the microcanonical states are the only free states, unital channels are the most general transformations that send free states into free states. In other words, they are the most general operations that do not create resources out of free states

Mathematically, the unital channels are defined as follows: 
\begin{defn}
A channel $\mathcal{D}$ from system $\mathrm{A}$ to system $\mathrm{A}'$
is called \emph{unital} if $\mathcal{D} \chi_{\mathrm{A}} =\chi_{\mathrm{A}'}$.
\end{defn}
Unital channels are the operational generalisation of  doubly
stochastic matrices in classical probability theory \cite{Olkin,Streater,Mendl-Wolf}.

  The set of unital channels enjoys all the properties required of a set of free operations: the identity
is a unital channel, and thanks to the condition of informational equilibrium, the sequential
and parallel composition of unital channels is a  unital channel.  The resource theory where free operations are unital
channels will be called the \emph{Unital Resource Theory}. The corresponding preorder on states will be denoted by $\succeq_{\mathsf{Unital}}$.

\subsection{Containment relations}

Let us highlight the relations between the three sets of   operations defined so far.      First, RaRe channels are examples of unital channels.    This is clear because every RaRe channel  can be decomposed as a mixture of reversible transformations,  each of which preserves the microcanonical state.  Hence, we have  the inclusion  
\begin{equation}\label{rareunital}
\mathsf{RaRe}\subseteq  \mathsf{Unital}  . 
\end{equation}   In classical probability theory,  the inclusion is actually an equality,    as a consequence of Birkhoff's theorem \cite{Birkhoff,Olkin}.   Remarkably,   in quantum theory there exist unital channels that are not random
unitary, meaning that the inclusion~\eqref{rareunital} is generally strict.  The simplest example is due by Landau and Streater \cite{Streater}: for a quantum particle of spin $j$, they  defined the map 
\begin{align}
\map D_j \left( \cdot\right)   =  \frac{J_x\left( \cdot \right) J_x   +  J_y\left( \cdot \right) J_y  +  J_z\left( \cdot\right)  J_z} {j\left( j+1\right) }   ,  
\end{align} 
where $J_x, J_y,  J_z$ are the three components of the spin operator. It is easy to see that the map $\map D_j$ is trace-preserving and identity-preserving---that is, it is a unital channel.  On the other hand, Landau and Streater showed that the map $\map D_j$ cannot be decomposed as a mixture of unitary channels unless $j=1/2$ \cite{Streater}.  

We have seen that all RaRe channels are unital.  Noisy operations are also unital, as shown by the following     
\begin{prop}
Every noisy operation is  unital. 
\end{prop}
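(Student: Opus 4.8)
The plan is to prove the statement in two stages: first show that every \emph{basic} noisy operation preserves the microcanonical state, and then argue that this property survives the limit defining a general noisy operation. Throughout, the two workhorses are the condition of informational equilibrium (Requirement~\ref{req:productchi}) and the behaviour of the microcanonical state under the internal reversible transformation appearing in Eq.~\eqref{eq:noisy}.

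First I would treat a basic noisy operation $\map B$ from $\rA$ to $\rA'$, decomposed as in Eq.~\eqref{eq:noisy}, so that $\map B(\sigma) = (\map I_{\rA'}\otimes e)\,\map U\,(\sigma\otimes\chi_\rE)$ for every state $\sigma$ of $\rA$. Feeding in $\sigma=\chi_\rA$ and using informational equilibrium~\eqref{eq:prodchi} to write $\chi_\rA\otimes\chi_\rE=\chi_{\rA\rE}$, the computation reduces to evaluating $\map U\chi_{\rA\rE}$. The crux is the claim that $\map U\chi_{\rA\rE}=\chi_{\rA'\rE'}$, i.e.\ that the reversible transformation carries the microcanonical state of $\rA\otimes\rE$ to that of the operationally equivalent system $\rA'\otimes\rE'$. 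Granting this, one more application of~\eqref{eq:prodchi} gives $\map U\chi_{\rA\rE}=\chi_{\rA'}\otimes\chi_{\rE'}$, and discarding $\rE'$ yields $\map B(\chi_\rA)=\chi_{\rA'}\,(e|\chi_{\rE'})=\chi_{\rA'}$, since $e$ is a deterministic effect and $\chi_{\rE'}$ a normalised state, so their pairing equals $1$. Hence every basic noisy operation is unital.

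To establish the crux I would use Requirement~\ref{req:uniquep}. A reversible $\map U$ restricts to a bijection between the deterministic pure states of $\rA\otimes\rE$ and those of $\rA'\otimes\rE'$, so the image of the invariant distribution $p_{\rA\rE}$ under $\map U$ is a probability distribution on the deterministic pure states of $\rA'\otimes\rE'$. This pushforward is invariant under every reversible transformation $\map V$ of $\rA'\otimes\rE'$: conjugating by $\map U$ turns $\map V$ into a reversible transformation of $\rA\otimes\rE$, under which $p_{\rA\rE}$ is invariant. By the uniqueness asserted in Requirement~\ref{req:uniquep}, the pushforward must coincide with $p_{\rA'\rE'}$; integrating against the definition~\eqref{chidef} then gives $\map U\chi_{\rA\rE}=\chi_{\rA'\rE'}$, as claimed. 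This step is just the naturalness of the construction~\eqref{chidef} under operational equivalence, and it is where Requirement~\ref{req:uniquep} does the real work.

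Finally I would pass to the limit. A noisy operation $\map N$ is a limit of basic noisy operations $\left\{\map B_n\right\}$, each of which we have shown satisfies $\map B_n\chi_\rA=\chi_{\rA'}$. By the notion of convergence adopted in the paper, for every effect $a\in\Eff\left(\rA'\right)$ one has $(a|\map B_n|\chi_\rA)\to(a|\map N|\chi_\rA)$; since the left-hand side equals $(a|\chi_{\rA'})$ for all $n$, it follows that $(a|\map N|\chi_\rA)=(a|\chi_{\rA'})$ for every $a$. As a state of a single system is determined by its pairings with effects, this forces $\map N\chi_\rA=\chi_{\rA'}$, so $\map N$ is unital. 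I expect the only genuinely delicate point to be the crux lemma $\map U\chi_{\rA\rE}=\chi_{\rA'\rE'}$; once the invariance-plus-uniqueness argument is in place, the remainder is bookkeeping with informational equilibrium and the closure assumption.
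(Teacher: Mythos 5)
Your proposal is correct and takes essentially the same route as the paper's proof: decompose the basic noisy operation as in Eq.~\eqref{eq:noisy}, apply informational equilibrium~\eqref{eq:prodchi} on both ends, use the fact that a deterministic effect on the normalised state $\chi_{\rE'}$ gives $1$, and conclude for general noisy operations by closure under limits. The only difference is that you explicitly prove the transport step $\map U \chi_{\rA\rE} = \chi_{\rA'\rE'}$ for a reversible transformation between the operationally equivalent systems $\rA\otimes\rE$ and $\rA'\otimes\rE'$ (via the pushforward of the invariant distribution and the uniqueness in requirement~\ref{req:uniquep}), a step the paper invokes tersely as ``invariance of $\chi_{\rA\rE}$ under reversible transformations''---so your write-up, if anything, makes the argument more complete.
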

\begin{proof}
Suppose that $\map B$ is a basic noisy operation,  decomposed as in Eq.~\eqref{eq:noisy}.  Then, one has
\begin{align} 
 \nonumber\begin{aligned}\Qcircuit @C=1em @R=.7em @!R { &\prepareC{\chi}  & \qw \poloFantasmaCn{\rA} & \gate{\cB} & \qw \poloFantasmaCn{\rA'} &\qw }\end{aligned} ~&= \!\!\!\! \begin{aligned}\Qcircuit @C=1em @R=.7em @!R { &\prepareC{\chi} & \qw \poloFantasmaCn{\rA} & \multigate{1}{\cU} & \qw \poloFantasmaCn{\rA'} &\qw \\ & \prepareC{\chi} & \qw \poloFantasmaCn{\rE} &\ghost{\cU} & \qw \poloFantasmaCn{\rE'} & \measureD{e} }\end{aligned} \\
  \nonumber& = \!\!\!\! \begin{aligned}\Qcircuit @C=1em @R=.7em @!R { &\multiprepareC{1}{\chi} & \qw \poloFantasmaCn{\rA} & \multigate{1}{\cU} & \qw \poloFantasmaCn{\rA'} &\qw \\ & \pureghost{\chi} & \qw \poloFantasmaCn{\rE} &\ghost{\cU} & \qw \poloFantasmaCn{\rE'} & \measureD{e} }
  \end{aligned}\\  
  \nonumber& = \!\!\!\! \begin{aligned}\Qcircuit @C=1em @R=.7em @!R { &\multiprepareC{1}{\chi} &    \qw \poloFantasmaCn{\rA'} &\qw \\ & \pureghost{\chi} &   \qw \poloFantasmaCn{\rE'} & \measureD{e} }
  \end{aligned}\\   \nonumber
  & = \!\!\!\! \begin{aligned}\Qcircuit @C=1em @R=.7em @!R { &\prepareC{\chi}  & \qw \poloFantasmaCn{\rA'} &\qw \\ & \prepareC{\chi}  & \qw \poloFantasmaCn{\rE'} & \measureD{e} }
  \end{aligned}\\   
  &  =  \!\!\!\!  \begin{aligned}\Qcircuit @C=1em @R=.7em @!R { &\prepareC{\chi}  & \qw \poloFantasmaCn{\rA'} &\qw }\end{aligned}
    ~, \end{align}
    having used the  condition of informational equilibrium~\eqref{eq:prodchi}, the invariance of the microcanonical  state $\chi_{\mathrm{AE}}$ under reversible transformations, and the condition $\left( e\middle|\chi_\rE\right) =1$, following from the fact that both $\chi_\rE$ and $e$ are deterministic.   Hence, every basic noisy operation is unital. Since the set of unital channels is closed under limits, all noisy operations are unital. 
\end{proof}

In summary, one has   the inclusion  
\begin{equation}\label{noisyunital}
\mathsf{Noisy}\subseteq\mathsf{Unital} \, . 
\end{equation}
The inclusion is  strict   in quantum theory, where Haagerup and Musat have found examples of unital channels that cannot be realised as noisy operations \cite{Haagerup-Musat}.   

It remains to understand the relation between RaRe channels and noisy operations.   
In quantum theory, the set of noisy operations (strictly) contains the set of
RaRe channels as a proper subset \cite{Shor}. In a generic theory, however,
this containment relation may not hold. As a  counterexample, consider the variant of quantum theory where only local operations are allowed: in this case, RaRe channels are not contained in the set of noisy operations, because all the interactions are trivial.    

The inclusions~\eqref{rareunital} and \eqref{noisyunital} are the most general result
one can derive from the definitions alone.   To go further, we need to introduce  axioms.
  In the next sections, we will introduce
a set of axioms that imply deeper relations between the RaRe, Noisy,
and Unital Resource Theories. In addition, the axioms will imply a  connection with the mathematical theory of majorisation and a connection with the resource theory of entanglement.

%

\section{Four axioms\label{sec:Axioms}}
In this section we review  the four  axioms used in this paper.   These axioms---Causality, Purity Preservation, Pure Sharpness, and Purification---define a special class of theories, which we call \emph{sharp theories with purification}. 

\subsection{Sharp theories with purification}
Sharp theories with purification are defined by the following  four axioms.   The first axiom---Causality---states that no signal can be sent from the future to the past: 
\begin{ax}[Causality \cite{Chiribella-purification,Chiribella-informational,QuantumFromPrinciples,dariano2017quantum}]\label{ax:causality}
The probability that a transformation  occurs in a test is independent of the settings of tests performed on the output. 
\end{ax}
  
The second axiom---Purity Preservation---states  that no information
can leak to the environment when two pure transformations are composed:
\begin{ax}[Purity Preservation \cite{Scandolo14}]
Sequential and parallel compositions of pure transformations yield
pure transformations.
\end{ax}
The third axiom---Pure Sharpness---guarantees that every system
possesses at least one elementary property, in the sense of Piron
\cite{PironBook}:
\begin{ax}[Pure Sharpness \cite{QPL15}]
For every system  there exists at least one pure effect
occurring with unit probability on some state.
\end{ax}
 Axioms 1--3 are satisfied  by both classical and quantum theory.      Our fourth axiom, Purification, characterises all physical
theories admitting a fundamental level of description where all deterministic
processes are pure and reversible.  

\begin{ax}[Purification \cite{Chiribella-purification,Chiribella-informational,QuantumFromPrinciples,dariano2017quantum}]\label{ax:purification}
Every state has a purification.  Purifications are essentially unique, in the sense of definition~\ref{def:uniqueness}.   
\end{ax}
Quantum theory, both on complex and real Hilbert spaces, satisfies Purification.  
 Remarkably, even classical theory can be regarded as a sub-theory  of a  larger physical theory where  Purification is satisfied~\cite{TowardsThermo}.  


\begin{defn}
An OPT is a {\em sharp theory with purification} if it satisfies Axioms~\ref{ax:causality}--\ref{ax:purification}. 
\end{defn}
 In the rest of the section  we will outline  the main  kinematic properties of sharp theories with purification.

\subsection{Well-defined  marginal states}

By definition, sharp theories with purification satisfy Causality, which in turn  is equivalent to the requirement that, for every system
$\mathrm{A}$, there exists a unique deterministic effect $u_{\mathrm{A}}\in \Eff (\rA)$
 (or simply $u$, when no ambiguity can arise) \cite{Chiribella-purification}.   
  The uniqueness of the deterministic effect implies that the marginals of a bipartite state are uniquely defined. 
  For a bipartite state  $\rho\in\St\left( \rA\otimes \rB\right) $, we will denote the marginal on system $\rA$ as 
  \begin{align}
  \mathrm{Tr}_{\mathrm{B}}\left[ \rho_{\mathrm{AB}}\right]   := \!\!\!\!\begin{aligned}\Qcircuit @C=1em @R=.7em @!R { &\multiprepareC{1}{\rho}  & \qw \poloFantasmaCn{\rA} &\qw \\ & \pureghost{\rho}  & \qw \poloFantasmaCn{\rB} & \measureD{u} }  
  \end{aligned}~, \end{align}in analogy with the notation used in quantum theory. 
  
 In a causal theory, it is immediate to see that a state $\rho$ can be prepared deterministically if and only if it is \emph{normalised}, namely 
 \begin{align}
 \Tr \left[ \rho \right]   :=  (u|\rho)   =  1 \, .
 \end{align} We denote  the set of normalised states of system $\rA$ as $\St_1 \left( \rA\right) : = \Det\St (\rA)$.

\subsection{Diagonalisation}  

In sharp theories with purification, 
one can prove that every state can be \emph{diagonalised}, that is, decomposed as a random mixture of perfectly distinguishable pure states.

\begin{thm}[\cite{QPL15,TowardsThermo}] Every normalised state $\rho \in \St_1 \left( \rA\right) $ of every system $\rA$ can be decomposed as 
\begin{equation} 
\rho  =  \sum_{i=1}^r p_i  \alpha_i  ,
\end{equation}
where  $r$ is an integer (called the \emph{rank} of the state),  $p_1\ge p_2\ge \ldots \ge p_r >  0$ are probabilities (called the \emph{eigenvalues}),  and  $\left\{\alpha_i\right\}_{i=1}^r$ is a set of perfectly distinguishable pure states (called the \emph{eigenstates}).
\end{thm}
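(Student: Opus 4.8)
The plan is to prove existence of the decomposition by induction on the dimension of the state space, extracting one pure eigenstate at a time, and then to obtain the ordering $p_1 \ge \cdots \ge p_r > 0$ by a final relabelling. The base case is immediate: a pure state already has the desired form with $r=1$ and $p_1=1$, so in the inductive step I may assume $\rho$ is mixed.

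Before running the induction I would establish the key auxiliary tool, namely that to every pure state $\alpha$ one can associate a \emph{sharp} effect $a$, i.e.\ a pure effect with $(a|\alpha)=1$. Pure Sharpness supplies at least one such pair $(a,\alpha)$ on every system; using Purity Preservation (so that the relevant compositions stay pure) together with the essential uniqueness of purifications from Axiom~\ref{ax:purification}, one upgrades this to the statement that \emph{every} pure state admits a sharp effect attaining unit probability on it, and that perfectly distinguishable pure states can be identified with certainty by a single observation-test. These statements, developed in \cite{QPL15,TowardsThermo}, are the real engine of the argument.

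For the inductive step I would first select a pure state $\alpha_1$ of maximal weight in $\rho$: among all convex decompositions $\rho = p\,\alpha + (1-p)\sigma$ with $\alpha$ pure and $\sigma\in\St_1(\rA)$, let $p_1$ be the supremum of $p$. The supremum is attained because the state space is finite-dimensional and closed under limits (the two standing assumptions), so there exist a pure $\alpha_1$ and a normalised $\rho'$ with $\rho = p_1 \alpha_1 + (1-p_1)\rho'$. The crux is to show that $\rho'$ is perfectly distinguishable from $\alpha_1$: taking the sharp effect $a_1$ with $(a_1|\alpha_1)=1$, one must prove $(a_1|\rho')=0$. The intuition is that any residual weight of $\alpha_1$ inside $\rho'$ would contradict the maximality of $p_1$; the rigorous version uses the sharpness of $a_1$ and the purification structure to guarantee that $\rho'$ indeed lives in the face annihilated by $a_1$. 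Once $(a_1|\rho')=0$ is established, $\rho'$ is supported on that face, which is an effective subsystem of strictly smaller state-space dimension, and hence itself a system of the theory, so the induction hypothesis diagonalises $\rho' = \sum_{i\ge 2} q_i\,\alpha_i$ into perfectly distinguishable pure states, each satisfying $(a_1|\alpha_i)=0$. Setting $p_i := (1-p_1)q_i$ for $i\ge 2$ yields $\rho = \sum_{i=1}^r p_i\,\alpha_i$, and concatenating $a_1$ with the test distinguishing the $\alpha_i$ ($i\ge 2$) on the face produces a single observation-test that perfectly distinguishes the whole family $\{\alpha_i\}_{i=1}^r$. Finite-dimensionality guarantees termination after finitely many steps, fixing the rank $r$, after which sorting arranges $p_1 \ge \cdots \ge p_r > 0$.

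I expect the main obstacle to be the maximal-weight extraction: specifically, proving $(a_1|\rho')=0$ and checking that the vanishing-face of a sharp effect is a well-behaved effective system to which the induction applies. This is exactly the place where all four axioms act together---Causality fixing the deterministic effect $u$ and the normalisation, Pure Sharpness and Purity Preservation producing and composing the sharp effects, and Purification controlling the residual---and it is the technical content carried by the dedicated theory of sharp effects in \cite{QPL15,TowardsThermo}.
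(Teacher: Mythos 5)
Your overall route---iteratively extracting a maximal-weight pure state $\alpha_1$ and arguing that the residual $\rho'$ is annihilated by the dual sharp effect $a_1=\alpha_1^\dagger$---is indeed the strategy behind the proof in the cited references, so the skeleton is right. But the step that carries the whole theorem, namely $\left(a_1\middle|\rho'\right)=0$, is exactly what has to be extracted from the four axioms (by applying the maximising pure effect to one wire of a purification of $\rho$, using Purity Preservation to keep the steered state pure, and essential uniqueness of purification to control it), and you leave it as a black box, asserting that ``the rigorous version uses the sharpness of $a_1$ and the purification structure'' and citing the very references the theorem already cites. As a blind proof this means the crux is deferred rather than proved; also note that the references work with the maximum of $\left(a\middle|\rho\right)$ over pure effects rather than the supremum of weights over convex decompositions, and relating the two already requires the state-effect duality.

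The step that genuinely fails as written is the inductive descent. You pass to ``the face annihilated by $a_1$'', call it ``an effective subsystem of strictly smaller state-space dimension, and hence itself a system of the theory'', and invoke the induction hypothesis there. Sharp theories with purification do \emph{not} include an Ideal Compression axiom, and nothing in Causality, Purity Preservation, Pure Sharpness, or Purification guarantees that a face of $\St_1\left(\rA\right)$ is the state space of some system of the theory, equipped with the right effects and purifications; the constrained-system construction of section~III is a modelling device for building theories, not a theorem that every face of a given OPT reappears as a system of that same OPT. Consequently the induction hypothesis cannot be applied to the face, and your final move---``concatenating $a_1$ with the test distinguishing the $\alpha_i$ on the face''---does not produce a test of the theory, because a distinguishing test ``on the face'' is not an object the axioms provide. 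The correct argument stays inside the fixed system $\rA$: one iterates the extraction on the residual state and proves \emph{directly}, via the theory of sharp effects and orthogonal faces developed in the references, that the successively extracted pure states remain jointly perfectly distinguishable (in particular that an effect such as $u-\sum_i \alpha_i^\dagger$ completes a valid observation-test), with termination guaranteed by the strictly decreasing dimension of the faces rather than by re-invoking the theorem on a smaller ``system''.
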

 It follows from the axioms that the eigenvalues are uniquely defined by the state  (see \cite{TowardsThermo} for the proof). 
  The uniqueness of the spectrum is a non-trivial consequence of the axioms: notably, Refs.~\cite{Krumm-Muller,Krumm-thesis} exhibited  examples of  theories   (other than the sharp theories with purification considered here) where states can be diagonalised, but the same state can have two different  diagonalisations with two different spectra. 
    \subsection{State-effect duality}
    Sharp theories with purification  exhibit  a duality between
normalised pure states and normalised pure effects---a normalised effect being an effect $a$ such that $\left( a\middle|\rho\right) =1$ for some state.  Denoting the set of normalised pure effects by $\Pur\Eff_1 \left( \rA\right) $, the duality reads as follows: 
\begin{prop}[\cite{QPL15}]
\label{prop:duality states-effects}There is a bijective correspondence
between normalised pure states and normalised pure effects. Specifically,
if $\alpha\in\mathsf{PurSt}_{1}\left(\mathrm{A}\right)$, there exists
a unique $\alpha^{\dagger}\in\mathsf{PurEff}_{1}\left(\mathrm{A}\right)$
such that $\left(\alpha^{\dagger}\middle|\alpha\right)=1$.
\end{prop}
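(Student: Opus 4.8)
The plan is to prove existence and uniqueness of the dual effect $\alpha^\dagger$ separately, and then to upgrade the assignment $\alpha \mapsto \alpha^\dagger$ to a bijection by exploiting the symmetry between states and effects built into the axioms.

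For existence, I would start from Pure Sharpness, which provides a pure effect $a_0 \in \mathsf{PurEff}(\rA)$ and a state $\sigma$ with $(a_0|\sigma)=1$; in particular $a_0$ is normalised. Diagonalising $\sigma = \sum_i p_i \sigma_i$ into perfectly distinguishable pure states and using $1 = (a_0|\sigma) = \sum_i p_i (a_0|\sigma_i)$ together with $(a_0|\sigma_i)\le 1$ forces $(a_0|\sigma_i)=1$ whenever $p_i>0$; hence there is a normalised pure state $\psi_0$ with $(a_0|\psi_0)=1$. Next I would invoke the transitivity of the reversible group on pure states, available in sharp theories with purification, to write any given $\alpha \in \mathsf{PurSt}_{1}(\rA)$ as $\alpha = \map U \psi_0$ for some reversible $\map U$, and set $\alpha^\dagger := a_0 \map U^{-1}$. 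This is a pure effect by Purity Preservation (a pure effect composed with a reversible, hence pure, transformation), it is normalised, and $(\alpha^\dagger|\alpha) = (a_0|\map U^{-1}\alpha) = (a_0|\psi_0) = 1$.

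The delicate point is uniqueness: that $\alpha$ admits no second normalised pure effect certain on it. I would attempt the reduction of assuming two such effects $a,a'$ and considering the convex combination $b = \tfrac12(a+a')$, which still satisfies $(b|\alpha)=1$; if $b$ could be shown to be pure, then its only refinements would be proportional to itself, forcing $a \propto b \propto a'$ and hence $a = a'$ after normalisation. The crux is therefore to prove that an effect which is certain on a pure state and arises as an average of pure effects certain on it must itself be pure. I expect this to be the main obstacle, and I would resolve it through the sharpness structure of the theory: using perfect distinguishability to embed $\alpha$ in a sharp observation-test $\{a_1,\dots,a_d\}$ with $(a_i|\alpha_j)=\delta_{ij}$, and Purity Preservation together with the essential uniqueness of purification to show that the unit face of a pure effect certain on $\alpha$ collapses to $\{\alpha\}$ and pins the effect down. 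This is where the full weight of the four axioms is needed.

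Finally, bijectivity would follow by symmetry. The construction above is symmetric under exchanging states and effects: Pure Sharpness yields a pure state--effect pair on either side, reversible transformations act on effects as well as on states, and Purity Preservation is symmetric. Injectivity of $\alpha \mapsto \alpha^\dagger$ is the statement that a normalised pure effect is certain on at most one normalised pure state, which is the state--effect transpose of the uniqueness argument, while surjectivity is the transpose of existence, yielding for every normalised pure effect a normalised pure state on which it is certain. Together these give the claimed bijective correspondence.
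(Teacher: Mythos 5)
Note first that the paper itself contains no proof of this proposition---it is imported from Ref.~[QPL15]---so your attempt can only be measured against the argument there. Your existence step is correct and matches the standard route: Pure Sharpness yields a certified pair, transitivity of reversible transformations on pure states (proposition~\ref{prop:transitivity}) transports it to the given $\alpha$, and $\alpha^{\dagger}:=a_{0}\,\map U^{-1}$ is pure because any refinement of $a_{0}\,\map U^{-1}$ conjugates (by composing with $\map U$) into a refinement of $a_{0}$---you do not even need the lemma that reversible transformations are pure. Two caveats, though: you do not need the diagonalisation theorem, since any convex decomposition of $\sigma$ into pure states (available by compactness and Carath\'eodory in finite dimension) already forces $\left(a_{0}\middle|\psi_{0}\right)=1$ for some pure $\psi_{0}$; and invoking diagonalisation is actually dangerous here, because in the cited source the diagonalisation theorem is proved \emph{downstream} of this very duality, so your route is circular against the source's logical development.

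The genuine gap is uniqueness, and you have flagged it yourself without closing it. The convexity reduction fails exactly where you suspect: there is no reason the average $b=\tfrac{1}{2}\left(a+a'\right)$ of two pure effects should be pure---generically it is not, and establishing purity of $b$ is essentially equivalent to the conclusion $a=a'$ you are trying to reach. Your fallback is not a proof either: showing that the unit face of an effect certain on $\alpha$ collapses to $\left\{\alpha\right\}$ does not pin the effect down, since distinct effects can share a unit face (in quantum theory $P_{\alpha}$ and $P_{\alpha}+\tfrac{1}{2}P_{\beta}$, with $\beta\perp\alpha$, are both certain exactly on $\alpha$); among pure effects, ``the unit face determines the effect'' is just a restatement of the uniqueness claim, not a lemma toward it. In the cited source the heavy lifting is done by the Purification axiom (essential uniqueness of purifications), not by convex geometry. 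Finally, your bijectivity-by-symmetry step is unjustified: the axioms are manifestly not self-dual---Causality gives a unique deterministic effect but many deterministic states, and Purification and Pure Sharpness are state-sided---so one is not licensed to ``transpose'' arguments across the state--effect divide. Injectivity, i.e.\ that a normalised pure effect is certain on at most one normalised pure state, is a separate theorem requiring its own purification-based proof; surjectivity does follow honestly by decomposing a certifying state of a given normalised pure effect $a$ into pure states, but identifying $a$ with $\alpha^{\dagger}$ again requires the uniqueness you have not established. Everything in your proposal funnels through the step you left open.
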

Physically, the meaning of the duality is that every pure state can be \emph{certified}  by a (unique) pure effect, which occurs with unit probability only on that particular state.  The duality between pure states and pure effects can be lifted to a duality between maximal sets of perfectly distinguishable pure states and perfectly distinguishing observation-tests, defined as follows:  
\begin{defn}
An observation-test $\left\{a_i\right\}_{i\in \set X} $ is  called \emph{perfectly distinguishing} if there exists a set of states $\left\{  \rho_i\right\}_{i\in\set X}$, such that  $\left( a_i  \middle|  \rho_j\right)     =  \delta_{ij}$ for all $i$ and $j$ in  $\set X$. In this case the states $\left\{  \rho_i\right\}_{i\in\set X}$ are said \emph{perfectly distinguishable}.
\end{defn}

\begin{defn}
A set of perfectly distinguishable states $\left\{  \rho_i\right\}_{i\in\set X}$ is \emph{maximal} if there is \emph{no} state $ \rho_0 $ such that the states $\left\{  \rho_i\right\}_{i\in\set X}\cup\left\lbrace\rho_0 \right\rbrace $ are perfectly distinguishable.
 \end{defn}
A maximal set of perfectly distinguishable pure states will be called \emph{pure maximal set} for short. With this notation, the duality reads  
\begin{prop}[\cite{TowardsThermo}]\label{prop:dagobs}
The pure states $\left\{   \alpha_i \right\}_{i\in \set X}$  are a maximal set  if and only if the  pure effects $\left\{\alpha_i^\dag \right\}_{i\in\set X}$ form a perfectly distinguishing observation-test.  
\end{prop}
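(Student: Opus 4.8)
The plan is to reduce the biconditional to a single orthogonality relation for the dual effects and then to treat the completeness of the resulting measurement separately. The central claim I would establish first is that for \emph{any} collection of perfectly distinguishable pure states $\{\alpha_i\}_{i\in\set X}$ --- maximal or not --- the dual effects obey $(\alpha_i^\dag|\alpha_j)=\delta_{ij}$, and more generally that $(\alpha_i^\dag|\beta)=0$ whenever $\beta$ is a pure state perfectly distinguishable from $\alpha_i$. The diagonal entries $(\alpha_i^\dag|\alpha_i)=1$ are immediate from the defining property of the dagger (the state--effect duality, Proposition~\ref{prop:duality states-effects}); the content is the vanishing of the off-diagonal entries. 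To obtain it I would start from a perfectly distinguishing observation-test $\{a_k\}$ with $(a_k|\alpha_l)=\delta_{kl}$, guaranteed by perfect distinguishability, and argue that the sharp effect $\alpha_i^\dag$ is dominated by the effect $a_i$ certifying $\alpha_i$, i.e.\ $(\alpha_i^\dag|\rho)\le(a_i|\rho)$ for every state $\rho$. Given this domination, $(\alpha_i^\dag|\alpha_j)\le(a_i|\alpha_j)=0$ for $j\neq i$, and orthogonality follows. I expect this domination step --- the statement that $\alpha_i^\dag$ is the \emph{minimal} effect attaining unit probability on $\alpha_i$ --- to be the main obstacle; it is where one must invoke the full axioms (Purity Preservation together with the uniqueness half of the duality, which forces any pure component of $a_i$ saturating the probability on $\alpha_i$ to coincide with $\alpha_i^\dag$), rather than the duality alone.

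Granting this orthogonality relation, the dual effects $\{\alpha_i^\dag\}$ \emph{always} perfectly distinguish the states $\{\alpha_i\}$ via the witnessing states $\rho_i=\alpha_i$, so the only remaining question is whether $\{\alpha_i^\dag\}$ is a bona fide observation-test, i.e.\ whether $\sum_i\alpha_i^\dag=u$, with $u$ the unique deterministic effect furnished by Causality. I would therefore prove the equivalence that $\{\alpha_i\}$ is maximal if and only if $\sum_i\alpha_i^\dag=u$, which together with orthogonality is exactly the proposition.

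For the forward implication, assume $\{\alpha_i\}$ is maximal and set $\bar e:=u-\sum_i\alpha_i^\dag$. The domination bound gives $\sum_i\alpha_i^\dag\le\sum_i a_i\le u$, so $\bar e$ is a legitimate effect, and orthogonality together with $(u|\alpha_j)=1$ yields $(\bar e|\alpha_j)=0$ for all $j$. If $\bar e\neq 0$, I would refine it into normalised pure effects; each component $f$ inherits $(f|\alpha_j)=0$ (all summands being non-negative), and by the duality $f=\rho_0^\dag$ for some pure state $\rho_0$ with $(\rho_0^\dag|\rho_0)=1$. Then $\rho_0$ is perfectly distinguishable from every $\alpha_j$ (through $\{\rho_0^\dag,\,u-\rho_0^\dag\}$), and the general orthogonality gives $(\alpha_j^\dag|\rho_0)=0$, so $\{\alpha_j\}\cup\{\rho_0\}$ is a strictly larger perfectly distinguishable set, contradicting maximality. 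Hence $\bar e=0$ and $\{\alpha_i^\dag\}$ is a perfectly distinguishing observation-test. The production of the pure state $\rho_0$ certified by a nonzero effect is precisely where Pure Sharpness, and the refinement of effects into pure effects available in sharp theories with purification, enter.

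For the converse, assume $\{\alpha_i^\dag\}$ is a perfectly distinguishing observation-test, so in particular $\sum_i\alpha_i^\dag=u$, and suppose for contradiction that $\{\alpha_i\}$ is not maximal. Then there is a further pure state $\rho_0$ (which may be taken pure by refining) with $\{\alpha_i\}\cup\{\rho_0\}$ perfectly distinguishable, so $\rho_0$ is perfectly distinguishable from each $\alpha_i$ and the general orthogonality gives $(\alpha_i^\dag|\rho_0)=0$ for all $i$. Summing and using $\sum_i\alpha_i^\dag=u$ gives $(u|\rho_0)=\sum_i(\alpha_i^\dag|\rho_0)=0$, contradicting the normalisation $(u|\rho_0)=1$. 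Hence $\{\alpha_i\}$ is maximal, completing the argument. The single reusable ingredient throughout is the orthogonality lemma; once it is in place both directions are short, so essentially all the difficulty is concentrated in the domination/minimality property of the dagger effect.
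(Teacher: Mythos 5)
The paper itself offers no proof of this proposition --- it is imported verbatim from Ref.~\cite{TowardsThermo} --- so your attempt can only be judged on its own merits. Your overall reduction is the natural one and matches the standard scaffolding in the cited literature: prove the orthogonality relation $\left(\alpha_i^\dag\middle|\alpha_j\right)=\delta_{ij}$ for perfectly distinguishable pure states, then show that maximality is equivalent to the normalisation $\sum_i\alpha_i^\dag=u$. Granted those two lemmas, both of your implications go through (with one small omission in the converse: a perfectly distinguishing test $\left\{\alpha_i^\dag\right\}$ a priori comes with witness states $\rho_j$ that need not be the $\alpha_j$; you must first invoke the certification property stated after proposition~\ref{prop:duality states-effects}, namely that $\left(\alpha_j^\dag\middle|\rho_j\right)=1$ forces $\rho_j=\alpha_j$).

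The genuine gap is exactly where you located it, but your proposed mechanism for closing it does not work. You want the domination property: $\left(a\middle|\alpha\right)=1$ implies $\left(a\middle|\rho\right)\geq\left(\alpha^\dag\middle|\rho\right)$ for all $\rho$, and you suggest deriving it by refining $a$ into pure effects and arguing that ``any pure component saturating the probability on $\alpha$ must coincide with $\alpha^\dag$.'' The flaw is that no component need saturate at all. Already in quantum theory, take a qubit with $\alpha=\ket{0}\bra{0}$ and $a=u=I$, refined as $I=\ket{+}\bra{+}+\ket{-}\bra{-}$: one has $\left(a\middle|\alpha\right)=1$, yet each pure component assigns probability $1/2$ to $\alpha$ and neither is proportional to $\alpha^\dag$. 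Domination $I\geq\ket{0}\bra{0}$ does hold, but for spectral reasons invisible to an arbitrary pure decomposition --- and in a generic sharp theory with purification you cannot appeal to spectral facts about effects at this stage, because diagonalisation of effects, and even the fact that every pure effect is proportional to some $\beta^\dag$, are established in Ref.~\cite{TowardsThermo} downstream of (or intertwined with) the very proposition you are proving; leaning on them here risks circularity, especially since the framework does not assume the no-restriction hypothesis, so you cannot freely rescale pure effects either. Note also that the unproven domination is load-bearing twice over: besides the orthogonality lemma, it is what makes $\bar{e}:=u-\sum_i\alpha_i^\dag$ a legitimate effect in your forward direction. The cited source instead obtains the orthogonality relation through the purification machinery (essential uniqueness of purification together with Purity Preservation), not through a minimality property of the dagger; as it stands, your proof has a correct skeleton wrapped around a missing --- and, by your own sketch, unreachable --- central lemma.
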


As a consequence,  the product of two pure maximal sets  is a pure maximal set for the composite system:   
\begin{prop}\label{prop:information locality}
If $\left\{\alpha_i\right\}_{i=1}^{d_\rA}$ is a pure maximal set  for system $\rA$ and  $\left\{\beta_j\right\}_{j=1}^{d_\rB}$ is a pure maximal set for system $\rB$, then 
$\left\{\alpha_i\otimes \beta_j \right\}_{  i  \in  \left\{  1,\dots, d_\rA\right\}  ,  j \in  \left\{  1,\dots , d_\rB\right\}}$ is a pure maximal set for the composite system $\rA\otimes\rB$. 
\end{prop}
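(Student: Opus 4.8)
The plan is to reduce the statement, via Proposition~\ref{prop:dagobs}, to a claim about dagger effects: the product states $\{\alpha_i\otimes\beta_j\}$ form a pure maximal set precisely when their dagger effects form a perfectly distinguishing observation-test. So I would first confirm that each $\alpha_i\otimes\beta_j$ is a normalised pure state---purity by Purity Preservation (parallel composition of pure states is pure), and normalisation because Causality makes the deterministic effect factorise, $u_{\rA\rB}=u_\rA\otimes u_\rB$, giving $(u_{\rA\rB}|\alpha_i\otimes\beta_j)=1$. Each product state then has a well-defined dagger.

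The crux is the factorisation $(\alpha_i\otimes\beta_j)^\dagger = \alpha_i^\dagger\otimes\beta_j^\dagger$. To prove it I would note that $\alpha_i^\dagger\otimes\beta_j^\dagger$ is pure by Purity Preservation and satisfies $(\alpha_i^\dagger\otimes\beta_j^\dagger|\alpha_i\otimes\beta_j)=1$, so by the uniqueness clause of the state--effect duality (Proposition~\ref{prop:duality states-effects}) it must equal $(\alpha_i\otimes\beta_j)^\dagger$. This is the step I expect to matter most, and the one place where Purity Preservation is essential: without it the tensor product of two pure effects need not be pure, and the duality could not be invoked to pin down the dagger.

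It then remains to check that $\{\alpha_i^\dagger\otimes\beta_j^\dagger\}_{i,j}$ is a perfectly distinguishing observation-test. Validity as an observation-test I would obtain for free, as the parallel composition of the two observation-tests $\{\alpha_i^\dagger\}$ and $\{\beta_j^\dagger\}$, which exist and are perfectly distinguishing by Proposition~\ref{prop:dagobs} since $\{\alpha_i\}$ and $\{\beta_j\}$ are maximal. Perfect distinguishability I would verify against the candidate states $\alpha_k\otimes\beta_l$: the pairing factorises into $(\alpha_i^\dagger|\alpha_k)(\beta_j^\dagger|\beta_l)$, and $(\alpha_i^\dagger|\alpha_k)=\delta_{ik}$ follows from $(\alpha_k^\dagger|\alpha_k)=1$ together with $\sum_i\alpha_i^\dagger=u_\rA$ and positivity of probabilities.

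Combining these, $\{(\alpha_i\otimes\beta_j)^\dagger\}_{i,j}$ is a perfectly distinguishing observation-test, so the ``if'' direction of Proposition~\ref{prop:dagobs} yields that $\{\alpha_i\otimes\beta_j\}_{i,j}$ is a maximal set of perfectly distinguishable pure states, i.e.\ a pure maximal set for $\rA\otimes\rB$. Apart from the dagger factorisation, every step is a routine consequence of parallel composition of tests and of the duality, so I anticipate no further obstacle.
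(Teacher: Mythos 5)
Your proof is correct and follows essentially the same route as the paper's: both reduce the claim to proposition~\ref{prop:dagobs}, use Purity Preservation to get purity of the product effects, and use the fact that the parallel composition of two observation-tests is an observation-test. The only difference is that you make explicit the dagger factorisation $\left(\alpha_i\otimes\beta_j\right)^\dagger=\alpha_i^\dagger\otimes\beta_j^\dagger$ via the uniqueness clause of proposition~\ref{prop:duality states-effects} (together with the direct check $\left(\alpha_i^\dagger\middle|\alpha_k\right)=\delta_{ik}$), a step the paper's proof leaves implicit when it applies proposition~\ref{prop:dagobs} the second time.
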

\begin{proof}
By proposition \ref{prop:dagobs},   $\left\{\alpha_i^\dag\right\}_{i=1}^{d_\rA}$  and   $\left\{\beta^\dag_j\right\}_{j=1}^{d_\rB}$ are two observation-tests for systems $\rA$ and $\rB$, respectively.  Now, the product of two-observation tests is an observation-test (physically, corresponding to two measurements performed in parallel). Hence, the product  $\left\{\alpha^\dag_i\otimes \beta^\dag_j \right\}_{  i  \in  \left\{  1,\dots, d_\rA\right\}  ,  j \in  \left\{  1,\dots , d_\rB\right\}}$
  is an observation-test on the composite system $\rA\otimes \rB$.   Moreover, each effect $\alpha_i^{\dagger}\otimes \beta_j^{\dagger}$ is pure, due to Purity Preservation.  Using proposition~\ref{prop:dagobs} again,   we obtain that  $\left\{\alpha_i\otimes \beta_j \right\}_{  i  \in  \left\{  1,\dots, d_\rA\right\}  ,  j \in  \left\{  1,\dots , d_\rB\right\} }$ is a pure maximal set. \end{proof} 
 It is possible to show that all pure maximal sets in a given system have the same cardinality \cite{TowardsThermo}.  
 For a generic system $\rA$, we will denote the cardinality of the maximal sets by $d_\rA$. We will refer to $d_\rA$ as the \emph{dimension} of system $\rA$. We stress that the dimension $d_\rA$ should not be confused with the dimension of the normalised state space $\St_1\left( \rA\right) $: in quantum theory, the dimension  $d_\rA$ is the dimension of the system's Hilbert space, while the dimension of the space of density matrices is $d_\rA^2  -1$.

 
 Proposition~\ref{prop:information locality} shows that the dimension of a composite system is the  product of the dimensions of the components, namely 
 \begin{align}
 d_{\mathrm{AB}}=d_{\mathrm{A}}d_{\mathrm{B}} \, ,
 \end{align} 
 for every  pair of systems $\mathrm{A}$ and $\mathrm{B}$.  This property has been dubbed  \emph{information locality} by Hardy \cite{Hardy-informational-2,hardy2013}.

 \section{Microcanonical thermodynamics in sharp theories with purification}\label{sec:microcanonical}

Here we show that sharp theories with purification satisfy our requirements~\ref{req:uniquep} and \ref{req:productchi} for the construction of the microcanonical framework.  Moreover,  we will show that sharp theories with purification exhibit a simple inclusion  relation between RaRe and  noisy  operations.

\subsection{The microcanonical state}

We start by showing that every sharp theory with purification satisfies requirement~\ref{req:uniquep}, which enables the formulation of the principle of equal {\em a priori} probabilities. 
Thanks to theorem~\ref{theo:uniquep},  we only need to show that every two pure states of the same system are connected by a reversible transformation.  This fact is an immediate consequence of Purification:  
\begin{prop}[\cite{Chiribella-purification}]\label{prop:transitivity}
For every  theory satisfying Purification, for every system $\rA$ in the theory,  and for every pair of deterministic pure states $\alpha$ and $\alpha'$ of system $\rA$, there exists a reversible transformation $\map U$ such that $\alpha'  =  \map  U \alpha$. 
\end{prop}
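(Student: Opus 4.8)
The whole argument should rest on the essential uniqueness of purification (Axiom~\ref{ax:purification}, in the sense of Definition~\ref{def:uniqueness}). The plan is to exhibit $\alpha$ and $\alpha'$ as two purifications of \emph{one and the same} state, so that essential uniqueness immediately hands over a reversible transformation connecting them. The nonobvious ingredient is the choice of the purified state: rather than passing to a larger system, I would regard $\rA$ itself as the \emph{purifying} system and let the ``kept'' factor be the trivial system $\mathrm{I}$.

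Concretely, I would first observe that a pure state of $\mathrm{I}\otimes\rA=\rA$ is nothing but a pure state of $\rA$, and that discarding $\rA$ amounts to applying a deterministic effect $u_\rA$ on $\rA$ (which exists, and is unique, by Causality). Since $\alpha$ and $\alpha'$ are deterministic, i.e.\ normalised, one has $\left(u_\rA\middle|\alpha\right)=\left(u_\rA\middle|\alpha'\right)=1$, and this scalar is precisely the unique normalised state of the trivial system $\mathrm{I}$. Hence both $\alpha$ and $\alpha'$ satisfy the purification condition for the trivial state of $\mathrm{I}$, with $\rA$ playing the role of the purifying system.

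Second, I would invoke essential uniqueness. Taking $\Psi=\alpha$ as the reference purification of the trivial state and $\Psi'=\alpha'$ as a competing purification on the same $\mathrm{I}\otimes\rA$, Definition~\ref{def:uniqueness} yields a reversible transformation $\map U$ on the purifying system $\rA$ with $\alpha'=\left(\map I_{\mathrm{I}}\otimes\map U\right)\alpha=\map U\alpha$, which is exactly the claim.

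The step I expect to require the most care is the first one: making precise that every deterministic pure state \emph{is} a purification of the trivial state, and that the trivial system legitimately plays the role of the kept factor, so that Definition~\ref{def:uniqueness} applies verbatim with the roles of kept and purifying systems swapped as indicated. Once this identification is set up correctly, the conclusion is an immediate consequence of the Purification axiom alone, with no further computation; in particular no appeal to Purity Preservation, Pure Sharpness, diagonalisation, or finiteness should be needed.
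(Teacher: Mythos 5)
Your proposal is correct and is essentially the same argument as the paper's: Proposition~\ref{prop:transitivity} is quoted from Ref.~\cite{Chiribella-purification}, where transitivity is derived in exactly this way, by regarding two deterministic pure states of $\rA$ as two purifications of the unique deterministic state of the trivial system $\mathrm{I}$ (with $\rA$ as the purifying system) and invoking essential uniqueness (definition~\ref{def:uniqueness}) to obtain a reversible $\map U$ with $\alpha'=\map U\alpha$. One small simplification: the appeal to Causality is unnecessary, since definition~\ref{def:uniqueness} quantifies over deterministic effects and $\left(e\middle|\alpha\right)=1$ holds for \emph{any} deterministic effect $e$ paired with a deterministic state $\alpha$ (the composition of two deterministic tests is a single-outcome test, whose outcome must occur with probability $1$), so the argument stays within the stated hypothesis that the theory satisfies Purification.
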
  

Proposition~\ref{prop:transitivity}  guarantees that for every system $\rA$  there exists a unique  probability distribution  $p_\rA  \left( \d \psi\right) $, which is invariant under all reversible dynamics. 
   In turn, the probability distribution  $p_\rA \left( \d \psi\right) $ can be used to define the microcanonical state $\chi_\rA$.   
   
In sharp theories with purification, the microcanonical state enjoys a remarkable property: the state can be decomposed into a uniform mixture of perfectly distinguishable pure states.

\begin{prop}[\cite{TowardsThermo}]
\label{prop:chi invariant} For sharp theory with purification,  every system $\rA$, and every  pure maximal set  $\left\{ \alpha_{i}\right\} _{i=1}^{d_\rA}$ in $\rA$, one has 
the decomposition
\begin{equation}\label{chidecomp}     \chi_\rA  = \frac{1}{d_\rA}\sum_{i=1}^{d_\rA}\alpha_{i}    \, .
\end{equation}  
\end{prop}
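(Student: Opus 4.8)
The plan is to prove the identity by showing that $\chi_\rA$ coincides with the uniform mixture $\sigma:=\frac{1}{d_\rA}\sum_{i=1}^{d_\rA}\alpha_i$, exploiting that $\chi_\rA$ is invariant under all reversible transformations. First I would compute the overlaps $(\alpha_i^\dagger|\chi_\rA)$, where $\alpha_i^\dagger$ is the pure effect dual to $\alpha_i$ (Proposition~\ref{prop:duality states-effects}). By Proposition~\ref{prop:transitivity} there is a reversible $\map U_i$ with $\alpha_i=\map U_i\alpha_1$; since $\alpha_1^\dagger\map U_i^{-1}$ is pure (Purity Preservation), normalised, and attains value $1$ on $\alpha_i$, uniqueness in the duality forces $\alpha_i^\dagger=\alpha_1^\dagger\map U_i^{-1}$. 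Invariance of the microcanonical state, $\map U_i^{-1}\chi_\rA=\chi_\rA$, then yields $(\alpha_i^\dagger|\chi_\rA)=(\alpha_1^\dagger|\chi_\rA)$ for every $i$. Because $\{\alpha_i^\dagger\}$ is a perfectly distinguishing observation-test (Proposition~\ref{prop:dagobs}), its effects sum to the unique deterministic effect $u$, so $\sum_i(\alpha_i^\dagger|\chi_\rA)=(u|\chi_\rA)=1$; hence $(\alpha_i^\dagger|\chi_\rA)=1/d_\rA$ for every $i$ and every pure maximal set.

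Next I would pin down the spectrum of $\chi_\rA$ itself. Diagonalising $\chi_\rA=\sum_k p_k\beta_k$ over perfectly distinguishable pure states $\{\beta_k\}$, I would first argue that $\chi_\rA$ has full rank: otherwise one could complete $\{\beta_k\}$ to a pure maximal set and exhibit a dual effect with vanishing overlap on $\chi_\rA$, contradicting the value $1/d_\rA>0$ just obtained. Thus $\{\beta_k\}_{k=1}^{d_\rA}$ is itself a pure maximal set, and applying the overlap computation to this particular set gives $p_k=(\beta_k^\dagger|\chi_\rA)=1/d_\rA$ for all $k$. Therefore $\chi_\rA=\frac{1}{d_\rA}\sum_{k=1}^{d_\rA}\beta_k$, establishing the claimed decomposition for at least one pure maximal set.

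Finally I would upgrade this to an arbitrary pure maximal set $\{\alpha_i\}$. Both $\sigma=\frac{1}{d_\rA}\sum_i\alpha_i$ and $\chi_\rA=\frac{1}{d_\rA}\sum_k\beta_k$ are uniform mixtures of perfectly distinguishable pure states, so by uniqueness of the spectrum they share the same eigenvalues. The remaining ingredient, and the main obstacle, is that in a sharp theory with purification the reversible transformations act transitively not merely on pure states (Proposition~\ref{prop:transitivity}) but on whole pure maximal sets, so that there exists a reversible $\map U$ with $\map U\beta_k=\alpha_k$ for all $k$; granting this, invariance closes the argument via $\sigma=\map U\chi_\rA=\chi_\rA$. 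Promoting single-state transitivity to maximal-set transitivity is the delicate step: it amounts to showing that the stabiliser of a pure state acts transitively on the completions of that state to a maximal set, and this is precisely where the Purification axiom, through the essential uniqueness of purifications, does the real work.
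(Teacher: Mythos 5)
Your first two steps are correct, and they reconstruct the part of the argument that the paper itself imports from the companion work \cite{TowardsThermo}: the overlap computation $\left(\alpha_i^\dagger\middle|\chi_\rA\right)=1/d_\rA$ (via proposition~\ref{prop:transitivity}, the uniqueness clause of proposition~\ref{prop:duality states-effects}, invariance of $\chi_\rA$, and proposition~\ref{prop:dagobs}), followed by the full-rank and flat-spectrum conclusion for the eigenbasis of $\chi_\rA$ --- though note you silently use one further fact from that companion paper, namely that any set of perfectly distinguishable pure states extends to a \emph{pure} maximal set. The genuine gap is your third step. The claim that in a sharp theory with purification the reversible transformations act transitively on (ordered) pure maximal sets is precisely Strong Symmetry, i.e.\ unrestricted reversibility, and the paper treats this as an \emph{additional} axiom independent of Axioms 1--4: theorem~\ref{theo:unitalrare} would be trivialised otherwise, and Doubled Quantum Theory (section~\ref{sec:doubled}, shown in appendix~\ref{app:operationaldoubled} to be a sharp theory with purification) explicitly violates it. Concretely, in a doubled qubit the reversible channels are unitaries of the form $U_0\oplus U_1$, which preserve the two sectors; hence no element of the stabiliser of $\ket{0,0}$ can map $\ket{0,1}$ to $\ket{1,0}$, so the stabiliser of a pure state does \emph{not} act transitively on the ordered completions of that state to a maximal set. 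Essential uniqueness of purifications cannot deliver what you ask of it, because the statement you want is simply false at this level of generality.

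Instructively, proposition~\ref{prop:chi invariant} still holds in Doubled Quantum Theory --- every pure maximal set there is the union of an orthonormal basis of each sector, so its uniform mixture equals $I/4$ for \emph{every} such set --- which shows that the correct proof cannot route through maximal-set transitivity. Observe that your own step 1 never uses membership of $\alpha$ in the given maximal set, so it in fact proves $\left(\alpha^\dagger\middle|\chi_\rA\right)=1/d_\rA$ for \emph{every} pure state $\alpha$. What is missing is an argument that the uniform mixture $\sigma=\frac{1}{d_\rA}\sum_i\alpha_i$ reproduces these same overlaps; for instance, granting the symmetry of transition probabilities $\left(\alpha^\dagger\middle|\beta\right)=\left(\beta^\dagger\middle|\alpha\right)$, one gets $\left(\alpha^\dagger\middle|\sigma\right)=\frac{1}{d_\rA}\sum_i\left(\alpha_i^\dagger\middle|\alpha\right)=\frac{1}{d_\rA}\left(u\middle|\alpha\right)=1/d_\rA$ for every pure $\alpha$, and separation of states by pure effects then yields $\sigma=\chi_\rA$. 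Ingredients of this kind --- derived in \cite{TowardsThermo} from the four axioms via the state-effect duality and purifications of $\chi_\rA$, not assumed --- are where Purification actually ``does the real work,'' rather than any transitivity on maximal sets.
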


In quantum theory, the decomposition  of Eq. (\ref{chidecomp}) is nothing but the expression  
\begin{align}\label{identityoverd}
\chi_\rA  =  \frac{ I_{\rA}} {d_\rA}  ,
\end{align}
where $d_\rA$ is the dimension of the system's Hilbert space, and $I_\rA$ is the  $d_\rA\times d_\rA$ identity matrix.   Recall that here we are interpreting the systems in our theory as  effective systems.  
  For a system with definite energy,   the decomposition of Eq.~\eqref{identityoverd} reads 
\begin{align}\label{chiE}
\chi_E   =  \frac 1 {d_E}   \sum_{i=1}^{d_E}    \ket{E,n}\bra{E,n} ,
\end{align}  
 where $ \left\{   \ket{E,n} \right\}_{n=1}^{d_E}$ is any orthonormal basis for the eigenspace of the Hamiltonian with eigenvalue $E$.  
 
 It is worth noting that Eq.~\eqref{chiE} is often chosen as  the {\em definition} of the microcanonical state in quantum statistical mechanics.   Proposition~\ref{prop:chi invariant} shows that a similar definition is possible in every sharp theory with purification. 
 One may be tempted to use Eq.~\eqref{chidecomp}  to define the microcanonical state  in arbitrary physical theories.  However, the fact that the state is independent  of the choice of maximal set is not guaranteed to hold in every theory.  For this reason, we prefer to define the microcanonical state as the uniform mixture of {\em all} pure states with a given energy, rather than the uniform mixture of a particular maximal set of pure states.  Physically, the uniform mixture of all pure states  represents the result of fully uncontrolled, but energy conserving fluctuations in the experimental setup. From a subjective point of view, the uniform mixture represents the complete lack of knowledge besides the knowledge of the value of the energy:  not even the ``energy eigenbasis'' is assumed to be known. 
  
 \subsection{The condition of informational equilibrium} 

We have seen that sharp theories with purification satisfy requirement~\ref{req:uniquep}---the uniqueness of the uniform distribution over the pure states.    We now show that requirement~\ref{req:productchi}---the condition of informational equilibrium---is satisfied too.    

\begin{prop}
For every pair of systems $\rA$ and $\rB$, one has $\chi_{\rA\rB}   =  \chi_\rA\otimes \chi_{\rB}$. 
\end{prop}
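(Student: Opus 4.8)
The plan is to reduce the statement entirely to two facts established earlier: the decomposition of the microcanonical state as a uniform mixture over \emph{any} pure maximal set (proposition~\ref{prop:chi invariant}), and the fact that the product of two pure maximal sets is again a pure maximal set for the composite system (proposition~\ref{prop:information locality}). Once these are in place, informational equilibrium is a matter of matching normalisation factors.

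First I would fix a pure maximal set $\left\{\alpha_i\right\}_{i=1}^{d_\rA}$ for system $\rA$ and a pure maximal set $\left\{\beta_j\right\}_{j=1}^{d_\rB}$ for system $\rB$. By proposition~\ref{prop:chi invariant} these yield the decompositions $\chi_\rA = \frac{1}{d_\rA}\sum_{i} \alpha_i$ and $\chi_\rB = \frac{1}{d_\rB}\sum_{j} \beta_j$, so that, using bilinearity of the tensor product, the product state reads
\[
\chi_\rA\otimes\chi_\rB = \frac{1}{d_\rA d_\rB}\sum_{i=1}^{d_\rA}\sum_{j=1}^{d_\rB}\alpha_i\otimes\beta_j \, .
\]

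Next I would invoke proposition~\ref{prop:information locality} to conclude that $\left\{\alpha_i\otimes\beta_j\right\}_{i,j}$ is itself a pure maximal set for the composite system $\rA\otimes\rB$, whose cardinality is $d_\rA d_\rB = d_{\rA\rB}$ by information locality. Applying proposition~\ref{prop:chi invariant} once more, this time to the composite system and to this specific product maximal set, gives $\chi_{\rA\rB} = \frac{1}{d_{\rA\rB}}\sum_{i,j}\alpha_i\otimes\beta_j$. Comparing the two expressions, and noting that the prefactors coincide precisely because $d_{\rA\rB}=d_\rA d_\rB$, yields $\chi_{\rA\rB}=\chi_\rA\otimes\chi_\rB$.

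There is essentially no analytic obstacle here: all the work is packaged into the two propositions. The one subtlety worth flagging is that proposition~\ref{prop:chi invariant} asserts the uniform-mixture decomposition for \emph{every} pure maximal set, and it is exactly this ``any maximal set'' clause that licenses applying it to the particular product set $\left\{\alpha_i\otimes\beta_j\right\}_{i,j}$; without it the final comparison step would not close.
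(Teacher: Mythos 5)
Your proof is correct and follows essentially the same route as the paper's: both rest on proposition~\ref{prop:chi invariant} applied to the product maximal set supplied by proposition~\ref{prop:information locality}, together with the information locality condition $d_{\rA\rB}=d_\rA d_\rB$ to match the normalisation. The only cosmetic difference is that you compute both sides separately and compare, whereas the paper rewrites $\chi_{\rA\rB}$ directly into the factorised form; your remark that the ``any maximal set'' clause of proposition~\ref{prop:chi invariant} is what licenses the argument is exactly the right point to flag.
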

\begin{proof}
Pick two pure maximal sets for $\rA$ and $\rB$, say $\left\{\alpha_i\right\}_{i=1}^{d_\rA}$ and   $\left\{\beta^\dag_j\right\}_{j=1}^{d_\rB}$. Then, the product set    $\left\{\alpha_i\otimes \beta_j \right\}_{  i  \in  \left\{  1,\dots, d_\rA\right\}  ,  j \in  \left\{  1,\dots , d_\rB\right\}}$   is maximal for the composite system $\rA\otimes \rB$, by proposition~\ref{prop:information locality}.  Using the decomposition~\eqref{chidecomp}, we obtain  
\begin{align}
\nonumber \chi_{\mathrm{AB}}  
\nonumber  &  =  \frac 1 {d_{\mathrm{AB}}}    \sum_{i=1}^{d_\rA}  \sum_{j=1}^{d_\rB}   \left(  \alpha_i\otimes \beta_j \right) \\ 
\nonumber  &  =  \frac 1 {d_{\rA}  d_{\rB} }     \left(  \sum_{i=1}^{d_\rA}      \alpha_i\right )  \otimes \left(  \sum_{j=1}^{d_\rB}   \beta_j\right)    \\
&  =  \chi_\rA \otimes \chi_\rB  , 
\end{align}  
having used the information locality condition $d_{\mathrm{AB}}  =  d_\rA d_\rB$. \end{proof}

In summary, sharp theories with purification satisfy our two requirements for the general microcanonical framework.    In the following, we will show that sharp theories with purification also  guarantee an important inclusion relation  between the set of RaRe channels and the set of noisy operations. 


\subsection{Inclusion of RaRe into Noisy}  

In sharp theories with purification, one can establish an inclusion between   RaRe channels and  noisy operations.   
   To obtain this result, we first restrict our attention to {\em rational} RaRe channels, i.e.\ RaRe channels of the form $\map R  =  \sum_{i}   p_i   \map U_i$ where each $p_i$ is a  \emph{rational} number.  With this definition, we have  the following lemma:

\begin{lem}\label{prop:rarenoisy}
In every sharp theory with purification, every rational RaRe channel is a basic noisy operation.
 \end{lem}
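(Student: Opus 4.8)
The plan is to reproduce, in the abstract setting, the standard quantum dilation of a random-unitary channel to a \emph{controlled} reversible transformation acting on the system together with an ancilla prepared in the microcanonical (maximally mixed) state. First I would clear denominators: writing $\map R = \sum_i p_i \map U_i$ with $p_i = n_i/N$ and $\sum_i n_i = N$, I repeat each $\map U_i$ with multiplicity $n_i$ so as to rewrite the channel as $\map R = \frac1N \sum_{k=1}^N \map V_k$, where every $\map V_k$ is reversible. Next I would introduce an ancilla $\rE$ with $d_\rE = N$ (assuming, as is standard here, that a system of this dimension is available in the theory). By proposition~\ref{prop:chi invariant} its microcanonical state decomposes as $\chi_\rE = \frac1N\sum_{k=1}^N \eta_k$ for any pure maximal set $\left\{\eta_k\right\}_{k=1}^N$, and by Causality $\rE$ carries a unique deterministic effect $u_\rE$ with $\left(u_\rE\middle|\eta_k\right) = 1$.

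The core of the argument is a single reversible transformation $\map W$ on $\rA\otimes\rE$ implementing the controlled dynamics ``apply $\map V_k$ to $\rA$ when $\rE$ is in $\eta_k$'', the counterpart of $W = \sum_k V_k\otimes\ket{\eta_k}\bra{\eta_k}$. Precisely, I need $\map W$ reversible and satisfying $\left(\map I_\rA\otimes u_\rE\right)\map W\left(\rho\otimes\eta_k\right) = \map V_k\rho$ for every $k$ and \emph{every} normalised $\rho$. Granting this, the verification is immediate: using $\chi_\rE = \frac1N\sum_k\eta_k$ and linearity, $\Tr_\rE\!\left[\map W\left(\rho\otimes\chi_\rE\right)\right] = \frac1N\sum_{k=1}^N \map V_k\rho = \map R\,\rho$. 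Choosing $\rE'=\rE$, $\map U = \map W$ and $e = u_\rE$ in definition~\ref{def:basicnoisy} (with $\rA\otimes\rE\simeq\rA\otimes\rE$ holding trivially) then exhibits $\map R$ as a basic noisy operation.

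The hard part, and the only genuine content, is the existence of this controlled $\map W$. I would try to construct it from Purification and the essential uniqueness of purifications, which is the canonical mechanism for manufacturing reversible transformations in these theories. As a preliminary, one checks that $\left\{\alpha_j\otimes\eta_k\right\}_{j,k}$ and $\left\{\left(\map V_k\alpha_j\right)\otimes\eta_k\right\}_{j,k}$ are both pure maximal sets of $\rA\otimes\rE$: this uses proposition~\ref{prop:information locality}, the fact that each reversible $\map V_k$ preserves purity and perfect distinguishability, and the orthogonality $\left(\eta_{k'}^\dagger\middle|\eta_k\right) = \delta_{kk'}$ supplied by proposition~\ref{prop:dagobs}. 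I stress, though, that prescribing $\map W$ on a single maximal set does \emph{not} pin it down, since a maximal set spans only a proper subspace of the state space; hence one cannot simply appeal to transitivity on ordered maximal sets. The real difficulty is to force the \emph{entire} $\rA$-factor, including the off-diagonal components of an arbitrary $\rho$, to undergo $\map V_k$ in the $\eta_k$ sector. This is exactly a controlled-reversible-dynamics statement, and I expect establishing it — that such a $\map W$ exists and is reversible in every sharp theory with purification — to be where the Purification axiom (through essentially unique dilations) does the real work, beyond the kinematic facts recalled in this section.
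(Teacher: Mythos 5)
There is a genuine gap, and it sits exactly where you placed your hopes. The controlled reversible transformation $\map W$ you require---a reversible dynamics on $\rA\otimes\rE$ acting as $\map V_k$ on $\rA$ conditionally on the control state $\eta_k$---is \emph{not} available in a general sharp theory with purification. Its existence is essentially the Reversible Controllability axiom, which by proposition~\ref{prop:permstrong} is equivalent to unrestricted reversibility, a strictly stronger assumption that the lemma deliberately does not make (that is the whole point of the lemma, as the surrounding text stresses when it says the construction of Eqs.~\eqref{trivial1} and \eqref{trivial2} ``cannot be reproduced''). Doubled Quantum Theory is a concrete witness: it is a sharp theory with purification in which unrestricted reversibility fails, and even your weaker, marginal version of the control requirement fails there. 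If the family $\left\{\map V_k\right\}$ mixes sector-preserving and sector-swapping unitaries, any putative $\map W=\sum_k \map V_k\otimes|\eta_k\>\<\eta_k|$ maps an allowed superposition such as $\left(\ket{\psi_0}_\rA\ket{\eta_1}+\ket{\psi_1}_\rA\ket{\eta_2}\right)/\sqrt 2$ (with the two branches in the same global parity sector) onto a superposition \emph{across} global sectors, which is forbidden; so no allowed reversible transformation realises your conditional action, even after discarding the control. Hence ``Purification through essentially unique dilations'' cannot deliver the $\map W$ you ask for, and the plan cannot be completed as stated.

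The paper's actual proof circumvents this with a different construction. It starts from the \emph{non-reversible} measure-and-apply channel $\map C=\sum_x \map V_x\otimes\beta_x^\dag$ from $\rA\otimes\rB$ to $\rA$---a perfectly valid channel in any such theory---and uses Purification to obtain a reversible dilation $\map U$ with a pure ancilla state $\gamma$. The subtlety is that the dilation leaves behind an $x$-dependent pure residue $\gamma_x$ on the purifying system, recording which $\map V_x$ occurred; this both prevents the naive marginal-control reading and leaves the ancilla in a state ($\chi_\rB\otimes\gamma$) that is not microcanonical, as definition~\ref{def:basicnoisy} demands. The fix is to sandwich $\map U$ with $\map U^{-1}$ and a compensating $\map V_x^{-1}$ acting on an \emph{extra copy of} $\rA$ prepared in $\chi_\rA$, yielding a pure transformation $\map P$ on $\rA\otimes\rB\otimes\rA$ with $\map P\left(\rho\otimes\beta_x\otimes\omega\right)=\map V_x\rho\otimes\beta_x\otimes\map V_x^{-1}\omega$; one then shows $\map P$ is deterministic, hence reversible, and feeding $\chi_\rB\otimes\chi_\rA=\chi_{\rB\rA}$ (informational equilibrium) and discarding realises $\map R$ as a basic noisy operation with environment $\rE=\rB\otimes\rA$. (In Doubled Quantum Theory the compensating inverse is precisely what restores parity preservation.) Your preliminary reductions---clearing denominators, the decomposition $\chi_\rE=\frac 1N\sum_k\eta_k$, and the final trace computation---match the easy part of the argument, but the step you defer as ``where the Purification axiom does the real work'' is, in the form you propose it, provably false in general; by theorem~\ref{theo:unitalrare} it would otherwise force the equivalence of the three resource theories in every sharp theory with purification, contradicting the Doubled Quantum Theory counterexample.
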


In quantum theory, this statement is quite immediate, as pointed out in Ref. \cite{Nicole}: a generic  RaRe channel with rational probabilities $\left\{ p_i  =  n_i/n\right\}_{i=1}^r$  and unitary gates $\left\lbrace U_i\right\rbrace _{i  = 1}^r$  can be realised  as the basic noisy operation  
\begin{align}\label{trivial1}
\map B  \left( \rho\right) 
:  =      \Tr_{\rm anc}  \left[   U   \left(\rho \otimes  \frac{I_n}{n} \right) U^\dag\right]  ,   
 \end{align}
where  $\Tr_{\rm anc}$ is the partial trace on the $n$-dimensional system used as ancilla, and $U$ is the control-unitary gate  
\begin{align}\label{trivial2}
U  :=    \sum_{k=1}^n   V_k  \otimes \ket{k}\bra{k} ,    
\end{align} 
$\left\lbrace   V_k\right\rbrace _{ k=1}^n$ being a list of unitary gates, $n_1$ of which are equal to $U_1$, $n_2$ equal to $U_2$, and so on.  

The situation is in general more complicated in sharp theories with purification. The reason is that the simple construction of Eqs.~\eqref{trivial1}  and \eqref{trivial2} cannot be reproduced.   The analogue of the control unitary  $U$ is a control-reversible transformation, which performs a reversible transformation on the target system depending on the state of a control system.  However, later in the paper we will show that not every sharp theory with purification admits control-reversible transformations.  In fact, we will show that the existence of control-reversible transformation is equivalent to a non-trivial property of the dynamics, which we will call ``unrestricted reversibility''.  

The non-trivial content of lemma~\ref{prop:rarenoisy} is that the inclusion  ${\sf RationalRaRe}  \subseteq {\sf Noisy}$ holds {\em in every sharp theory with purification}, without the need of assuming unrestricted reversibility.  To prove such a result we need a new construction, more elaborate than the simple  construction of Eqs.~\eqref{trivial1}  and \eqref{trivial2}. The details can be found in   appendix~\ref{app:rarenoisy}.

Now, since rational RaRe channels are dense in the set of RaRe channels, and since the set of noisy operations is closed (see definition~\ref{def:noisy}), we obtain the following theorem: 
\begin{thm}\label{theo:rarenoisy}
In every sharp theory with purification, RaRe channels are noisy operations.   
\end{thm}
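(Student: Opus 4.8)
The plan is to deduce the theorem from Lemma~\ref{prop:rarenoisy} by a standard density-plus-closure argument, so that all the genuinely difficult construction is already discharged by the lemma. The set of noisy operations is closed by Definition~\ref{def:noisy}, so it suffices to exhibit every RaRe channel as a limit of basic noisy operations. Since Lemma~\ref{prop:rarenoisy} already guarantees that \emph{rational} RaRe channels are basic noisy operations, the entire problem collapses to showing that rational RaRe channels are dense in the set of all RaRe channels, in the topology on transformations fixed in Section~\ref{sec:Framework}.

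To set this up, I would first reduce an arbitrary RaRe channel to a \emph{finite} convex combination. By the finiteness standing assumption, $\Transf(\rA)$ is finite-dimensional, so Carath\'eodory's theorem lets me write any RaRe channel as $\map R = \sum_{i=1}^{m} p_{i}\,\map U_{i}$ with finitely many reversible $\map U_{i}$ and a genuine probability distribution $\{p_{i}\}_{i=1}^{m}$. Next, for each $n$ I would pick rationals $p_{i}^{(n)}\ge 0$ with $\sum_{i=1}^{m} p_{i}^{(n)} = 1$ and $p_{i}^{(n)}\to p_{i}$ (elementary: round each $p_{i}$ to denominator $n$ and absorb the normalisation defect into a single coefficient). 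Then $\map R_{n} := \sum_{i=1}^{m} p_{i}^{(n)}\,\map U_{i}$ is a rational RaRe channel for every $n$. Convergence $\map R_{n}\to\map R$ then follows by linearity of the pairing: for every reference system $\rR$, every $\rho\in\St(\rA\otimes\rR)$ and every $E\in\Eff(\rA\otimes\rR)$,
\begin{align}
\left(E\middle|\map R_{n}\otimes\map I_{\rR}\middle|\rho\right)
&= \sum_{i=1}^{m} p_{i}^{(n)}\left(E\middle|\map U_{i}\otimes\map I_{\rR}\middle|\rho\right) \nonumber \\
&\longrightarrow \sum_{i=1}^{m} p_{i}\left(E\middle|\map U_{i}\otimes\map I_{\rR}\middle|\rho\right)
= \left(E\middle|\map R\otimes\map I_{\rR}\middle|\rho\right),
\end{align}
the limit being a finite sum of fixed probabilities whose coefficients converge. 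Hence $\map R$ is the limit of the sequence $\{\map R_{n}\}$ of rational RaRe channels, each of which is a basic noisy operation by Lemma~\ref{prop:rarenoisy}, so $\map R$ is a noisy operation by Definition~\ref{def:noisy}.

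I do not expect any serious obstacle here, precisely because the hard part---realising a rational mixture of reversible transformations through an interaction with a microcanonical ancilla, \emph{without} assuming the availability of control-reversible transformations---has been relegated to Lemma~\ref{prop:rarenoisy} and its appendix. The only points that need even modest care in the present argument are the reduction to a finite convex combination (which uses finite-dimensionality of $\Transf(\rA)$) and the observation that the rational approximants can be chosen to be \emph{normalised} probability distributions, so that each $\map R_{n}$ is a bona fide rational RaRe channel to which the lemma applies.
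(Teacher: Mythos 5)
Your proposal is correct and follows essentially the same route as the paper, which derives Theorem~\ref{theo:rarenoisy} in one line from Lemma~\ref{prop:rarenoisy} by noting that rational RaRe channels are dense in the set of RaRe channels and that the set of noisy operations is closed by Definition~\ref{def:noisy}. The only difference is that you spell out the density claim the paper leaves implicit (finite decomposition via Carath\'eodory in the finite-dimensional space $\Transf\left(\rA\right)$, normalised rational approximants, and convergence of the pairings $\left(E\middle|\map R_{n}\otimes\map I_{\rR}\middle|\rho\right)$), and these details are accurate.
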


The inclusion of RaRe channels in the set of noisy operations is generally strict: for example, in quantum theory there
exist noisy operations that are not RaRe channels \cite{Shor}. In
summary, we have the inclusions 
\begin{equation}
\mathsf{RaRe}\subseteq\mathsf{Noisy}\subseteq\mathsf{Unital},\label{eq:inclusions}
\end{equation}
illustrated in Fig.~\ref{fig:Inclusions}.

\begin{figure}
\begin{centering}
\includegraphics[scale=0.7]{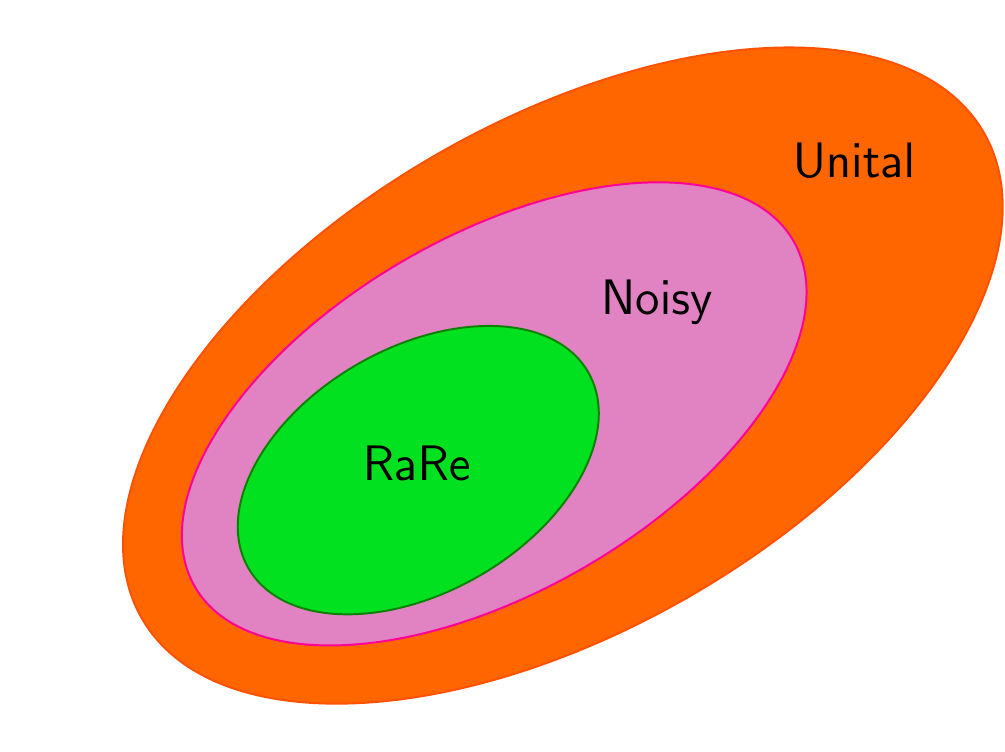}
\par\end{centering}

\caption{\label{fig:Inclusions}  Inclusion relations between the sets of free operations
in the three  resource theories of purity.}
\end{figure}

\section{State convertibility and majorisation\label{sec:Purity-in-sharp}}

In this section we investigate the convertibility of states in the RaRe, Noisy, and Unital Resource Theories.  The main result is that, in every
sharp theory with purification, an input state can be converted into an output state {\em by a unital channel} if and only
if the vector of eigenvalues of the output state is majorised by the vector  of eigenvalues of 
the input state. 
Since the set of unital channels contains the sets of noisy operations, and RaRe channels, our results establishes majorisation as a { necessary} condition for convertibility  under noisy operations and Rare channels. 
Later in the paper, we will determine the physical condition under which majorisation is  also sufficient.


\subsection{State convertibility}
In sharp theories with purification,  the inclusions~\eqref{eq:inclusions} imply the relations
\begin{align}\label{ConvertibilityImplications}
\rho\succeq_{\mathsf{RaRe}}\sigma  \quad \Longrightarrow \quad   \rho\succeq_{\mathsf{Noisy}}\sigma     \quad \Longrightarrow \quad  \rho\succeq_{\mathsf{Unital}}\sigma \, ,
\end{align}
  valid for every pair of states $\rho$ and $\sigma$ of the same system.     Note that  the unital relation
$\succeq_{\mathsf{Unital}}$ is the weakest, i.e.\ the easiest to
satisfy. In the following we will provide a necessary and sufficient condition for
the unital preorder. 

\subsection{Unital channels and doubly stochastic matrices}  
In a broad sense, unital channels are the generalisation of doubly stochastic matrices.  In sharp theories with purification,  there is also a more explicit connection: 
\begin{lem}\label{lem:channelmatrix}
Let $\map D$ be a unital channel acting  on system $\rA$ and let $\left\{\alpha_i\right\}_{i=1}^d$  and $\left\{\alpha_i'\right\}_{i=1}^d$  be two pure maximal sets of system $\rA$.  Then, the matrix $D$ with entries 
\begin{align}
D_{ij}   :  =  \left(  \alpha_i^\dag \middle|  \map D \middle|  \alpha_j' \right)  
\end{align}
is doubly stochastic. 
\end{lem}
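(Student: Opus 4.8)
The plan is to verify the three defining properties of a doubly stochastic matrix for $D$: entrywise non-negativity, unit column sums, and unit row sums. Non-negativity is immediate, so the content lies in the two sum conditions, which I would handle separately using the state--effect duality for the columns and the decomposition of the microcanonical state together with unitality for the rows.

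First, for non-negativity I would simply note that each entry $D_{ij} = (\alpha_i^\dag|\map D|\alpha_j')$ is the scalar obtained by preparing the normalised pure state $\alpha_j'$, applying the channel $\map D$, and measuring the pure effect $\alpha_i^\dag$. By the probabilistic interpretation of scalars this lies in $[0,1]$, so in particular $D_{ij}\ge 0$. Next, for the column sums I would fix $j$ and invoke Proposition~\ref{prop:dagobs}: since $\{\alpha_i\}_{i=1}^d$ is a pure maximal set, the effects $\{\alpha_i^\dag\}_{i=1}^d$ form a perfectly distinguishing observation-test, and therefore $\sum_{i=1}^d \alpha_i^\dag = u$, where $u$ is the unique deterministic effect guaranteed by Causality (Axiom~\ref{ax:causality}). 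Hence $\sum_i D_{ij} = (u|\map D|\alpha_j')$. Because $\map D$ is a channel, i.e.\ a deterministic transformation, in a causal theory it satisfies $u\circ\map D = u$, so that $\sum_i D_{ij} = (u|\alpha_j') = 1$, using that $\alpha_j'$ is normalised.

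Finally, for the row sums I would fix $i$ and push the sum onto the primed states. By Proposition~\ref{prop:chi invariant} applied to the maximal set $\{\alpha_j'\}_{j=1}^d$, one has $\sum_{j=1}^d \alpha_j' = d_\rA\,\chi_\rA$, whence $\sum_j D_{ij} = d_\rA\,(\alpha_i^\dag|\map D|\chi_\rA)$. Unitality, used here precisely once, gives $\map D\chi_\rA = \chi_\rA$, so $\sum_j D_{ij} = d_\rA\,(\alpha_i^\dag|\chi_\rA)$. To evaluate the remaining scalar I would decompose $\chi_\rA = \frac{1}{d_\rA}\sum_k \alpha_k$ (Proposition~\ref{prop:chi invariant} for the unprimed set), reducing it to $\frac{1}{d_\rA}\sum_k (\alpha_i^\dag|\alpha_k)$. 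The orthonormality relation $(\alpha_i^\dag|\alpha_k) = \delta_{ik}$ then follows from the observation-test identity above: for fixed $k$ one has $\sum_i (\alpha_i^\dag|\alpha_k) = (u|\alpha_k) = 1$, all terms are non-negative, and the diagonal term equals $(\alpha_k^\dag|\alpha_k) = 1$ by the defining property of the dagger, forcing every off-diagonal term to vanish. Thus $(\alpha_i^\dag|\chi_\rA) = 1/d_\rA$ and $\sum_j D_{ij} = 1$.

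I do not expect a genuine obstacle here; the proof is essentially bookkeeping. The only points demanding care are the two OPT-specific facts that a causal theory supplies a single deterministic effect with $u\circ\map D = u$ for every channel $\map D$, and that the dagger effects of a maximal set sum to $u$ and are bi-orthonormal to the corresponding states. Both are delivered by Causality and Proposition~\ref{prop:dagobs}, and it is worth emphasising that the hypothesis of unitality enters exactly once, in establishing the row condition via $\map D\chi_\rA = \chi_\rA$.
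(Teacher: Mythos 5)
Your proof is correct and takes essentially the same route as the paper's: non-negativity from the probabilistic interpretation of scalars, column sums from $\sum_i \alpha_i^\dag = u$ (proposition~\ref{prop:dagobs}) together with $u\map D = u$ for channels, and row sums from the decomposition $\sum_j \alpha_j' = d_\rA\,\chi_\rA$ of proposition~\ref{prop:chi invariant} combined with unitality $\map D\chi_\rA = \chi_\rA$. The only difference is cosmetic: you derive the bi-orthonormality $(\alpha_i^\dag|\alpha_k) = \delta_{ik}$ explicitly from the observation-test identity and non-negativity, whereas the paper simply invokes $(\alpha_i^\dag|\chi_\rA) = 1/d_\rA$, which rests on the same facts.
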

\begin{proof}
Every entry $D_{ij}$ is a probability and therefore it is non-negative. Moreover, one has 
\begin{align}
\nonumber 
\sum_{i=1}^{d}   D_{ij}  &= \sum_{i=1}^{d} \left(    \alpha_i^\dag  \middle |    \map D  \middle |   \alpha'_j\right) \\
\nonumber 
  &  = \left(u\middle|\mathcal{D}\middle|\alpha_{j}\right)\\
\nonumber 
  &  = \Tr\left[ \alpha_{j}\right] \\
 \label{bisto1} &  = 1 \, , \qquad \forall j\in  \{1,\dots, d\}
\end{align}
having used the fact that the effects $\left\{ \alpha_{i}'^{\dagger}\right\} _{i=1}^{d}$  form an observation-test and that $\map D$ is a channel, and  therefore $u\mathcal{D}=u$ \cite{Chiribella-purification}. 
On the other hand, one has 
\begin{align}
\nonumber 
\sum_{j=1}^d  \,   D_{ij}   &=  \sum_{j=1}^{d}\left(\alpha_{i}^{\dagger}\middle|\mathcal{D}\middle|\alpha'_{j}\right)\\
\nonumber 
 & =d\left(\alpha_{i}^{\dagger}\middle|\mathcal{D}\middle|\chi\right)\\
\nonumber 
  &=d\left(\alpha_{i}^{\dagger}\middle|\chi\right)\\
  \nonumber 
  &=d\cdot\frac{1}{d}\\
 \label{bisto2} &=1 \, ,\qquad \forall i\in\left\{1,\ldots, d\right\}\, ,
\end{align}
having used  proposition~\ref{prop:chi invariant} and the fact that unital channels leave $\chi$ invariant.   In conclusion, Eqs.~\eqref{bisto1} and \eqref{bisto2} show that the matrix $D$ is doubly-stochastic. \end{proof}
Vice-versa, every doubly stochastic matrix defines a unital channel:  
 \begin{lem}\label{lem:matrixchannel}
Let $D$ be a $d\times d$ doubly stochastic matrix, and let $\left\{\alpha_i\right\}_{i=1}^d$  and $\left\{\alpha_i'\right\}_{i=1}^d$  be two pure maximal sets of system $\rA$.  Then, the channel  defined by 
\begin{equation}
\map D   :  =  \sum_{j=1}^{d}        \rho_j      \alpha_j^{\dag}   \, , \qquad  {\rm with}  \qquad  \rho_j :  =  \sum_{i=1}^{d}   D_{ij}  \alpha'_i ,
\end{equation}
 is unital. 
\end{lem}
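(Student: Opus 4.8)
The plan is to verify two things: that $\map D$ is a genuine (deterministic) channel, and that it leaves the microcanonical state invariant. As a preliminary observation I would record that each $\rho_j$ is a normalised state: because $D$ is doubly stochastic its columns sum to one, so $\rho_j=\sum_i D_{ij}\alpha_i'$ is a convex combination of the normalised pure states $\alpha_i'$, and in particular $\Tr[\rho_j]=\sum_i D_{ij}=1$.

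Next I would argue that $\map D$ is a channel. By proposition~\ref{prop:dagobs}, the dual effects $\{\alpha_j^\dag\}_{j=1}^d$ form a perfectly distinguishing observation-test. The prescription ``measure with $\{\alpha_j^\dag\}$ and, upon outcome $j$, re-prepare $\rho_j$'' is a legitimate measure-and-prepare test $\{\rho_j\alpha_j^\dag\}_{j=1}^d$, whose coarse-graining is exactly $\map D=\sum_j\rho_j\alpha_j^\dag$; hence $\map D$ is deterministic. As an explicit consistency check of trace preservation I would compute $u\map D=\sum_j(u|\rho_j)\,\alpha_j^\dag=\sum_j\alpha_j^\dag=u$, using $(u|\rho_j)=\Tr[\rho_j]=1$ from the preliminary step together with the fact that the effects of an observation-test add up to the unique deterministic effect $u$.

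Finally I would establish unitality. The crux is the identity $(\alpha_j^\dag|\chi_\rA)=1/d$, obtained by writing $\chi_\rA=\frac1d\sum_i\alpha_i$ via proposition~\ref{prop:chi invariant} and invoking the duality relations $(\alpha_j^\dag|\alpha_i)=\delta_{ij}$. Then
\[ \map D\,\chi_\rA=\sum_{j=1}^d\rho_j\,(\alpha_j^\dag|\chi_\rA)=\frac1d\sum_{j=1}^d\rho_j=\frac1d\sum_{i=1}^d\Big(\sum_{j=1}^d D_{ij}\Big)\alpha_i'=\frac1d\sum_{i=1}^d\alpha_i'=\chi_\rA, \]
where the penultimate equality uses that the rows of $D$ sum to one (doubly stochasticity again) and the last applies proposition~\ref{prop:chi invariant} to the maximal set $\{\alpha_i'\}$.

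I expect the only genuinely delicate point to be the middle step: confirming that the formal expression $\sum_j\rho_j\alpha_j^\dag$ corresponds to an actual physical transformation rather than a mere linear combination of maps. This is secured by the measure-and-prepare reading above, which is available because Causality allows the experimenter to condition a state preparation on a prior measurement outcome; once this is granted, the remaining steps are routine bookkeeping with the row and column sums of $D$ and the spectral decomposition of $\chi_\rA$.
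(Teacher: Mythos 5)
Your proposal is correct and takes essentially the same route as the paper's own proof: the measure-and-prepare reading of $\map D=\sum_{j}\rho_j\,\alpha_j^\dag$ (perform the observation-test $\left\{\alpha_j^\dag\right\}_{j=1}^d$ and prepare $\rho_j$ on outcome $j$) secures channel-hood, and unitality follows from $\left(\alpha_j^\dag\middle|\chi\right)=1/d$ together with the row sums of $D$ and proposition~\ref{prop:chi invariant} applied to the maximal set $\left\{\alpha_i'\right\}_{i=1}^d$. Your additional explicit checks---that each $\rho_j$ is normalised via the column sums, and that $u\map D=u$---merely spell out what the paper leaves implicit in asserting the measure-and-prepare form, so they are welcome but not a departure.
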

 \begin{proof} The transformation $\map D$ is a channel of the measure-and-prepare form: it can be implemented by performing the observation-test $\left\{\alpha_j^{\dag}\right\}_{j=1}^{d}$ and by preparing the state $\rho_j$ conditionally on outcome $j$.  Moreover, one has \begin{align}
\nonumber 
\map D   \chi  &  =  \sum_{j=1}^{d}   \rho_j   \left(\alpha_j^{\dag}\middle|\chi\right)  \\
\nonumber 
 &  =  \frac 1 d  \sum_{j=1}^{d}    \sum_{i=1}^{d}   D_{ij}   \alpha'_i     \\
\nonumber 
 &  =  \frac 1d   \sum_{i=1}^{d} \alpha'_i\\
& =\chi ,  
\end{align} the third equality following from the definition of doubly stochastic matrix, and the fourth equality following from the diagonalisation of the  state  $\chi$ (proposition \ref{prop:chi invariant}).
\end{proof}
Lemmas~\ref{lem:channelmatrix} and \ref{lem:matrixchannel} establish a direct connection between unital channels and doubly stochastic matrices.  Using this connection, in the following we  establish a relation between the Unital Resource Theory and the theory of majorisation.   
\subsection{Majorisation criterion for state convertibility under unital channels}
Here we show that the ability to convert states in the Unital Resource Theory is completely determined by a suitable majorisation criterion. 
Let us start by recalling the definition of majorisation \cite{Olkin}:
\begin{defn}
Let $\mathbf{x}$ and $\mathbf{y}$ be two  generic vectors in $\mathbb{R}^{d}$.  One says that  $\mathbf{x}$ \emph{majorises}   $\mathbf{y}$, denoted ${\bf x}  \succeq  {\bf y}$,  if, when the entries of $\mathbf{x}$ and $\mathbf{y}$ are rearranged in decreasing order, one has
\begin{align*} 
\sum_{i=1}^{k}x_{i}\geq\sum_{i=1}^{k}y_{i}   , \quad \forall  k  <d   \qquad {\rm and}  \qquad    \sum_{i=1}^{d}x_{i}=\sum_{i=1}^{d}y_{i}  .
\end{align*}
\end{defn}
Majorisation can be equivalently characterised in terms of doubly
stochastic matrices: one has $\mathbf{x}\succeq\mathbf{y}$ if and only if
$\mathbf{y}=D\mathbf{x}$, where $D$ is a doubly stochastic matrix
\cite{Hardy-Littlewood-Polya1929,Olkin}.

In every sharp theory with purification, majorisation is a necessary and sufficient condition for convertibility under unital channels:  
\begin{thm}
\label{prop:unital channels}Let $\rho$ and $\sigma$ be normalised
states, and let $\mathbf{p}$ and $\mathbf{q}$ be 
the vectors of their eigenvalues, respectively. The state $\rho$ can be converted into the state $\sigma$ by a
unital channel if and only if ${\bf p}$ majorises ${\bf q}$. In formula:
\begin{equation}
\rho\succeq_{\mathsf{Unital}}\sigma\quad\Longleftrightarrow\quad\mathbf{p}\succeq\mathbf{q}.
\end{equation}
 \end{thm}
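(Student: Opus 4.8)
The plan is to reduce the theorem to the characterisation of majorisation in terms of doubly stochastic matrices recalled above, using the dictionary between unital channels and doubly stochastic matrices furnished by Lemmas~\ref{lem:channelmatrix} and~\ref{lem:matrixchannel}. First I would diagonalise both states: write $\rho = \sum_{j} p_j \alpha_j$ and $\sigma = \sum_{i} q_i \beta_i$ via the diagonalisation theorem, extend the eigenstates $\{\alpha_j\}$ and $\{\beta_i\}$ to pure maximal sets of cardinality $d = d_\rA$ (padding the eigenvalue vectors $\mathbf p$ and $\mathbf q$ with zeros), and recall that by Proposition~\ref{prop:dagobs} the dual effects $\{\alpha_j^\dagger\}$ and $\{\beta_i^\dagger\}$ then form perfectly distinguishing observation-tests, so that $\left(\alpha_j^\dagger\middle|\rho\right) = p_j$ and $\left(\beta_i^\dagger\middle|\sigma\right) = q_i$.

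For the necessity direction, I would assume a unital channel $\map D$ with $\map D \rho = \sigma$ and form the matrix $D_{ij} := \left(\beta_i^\dagger\middle|\map D\middle|\alpha_j\right)$, which is doubly stochastic by Lemma~\ref{lem:channelmatrix}. Acting with $\beta_i^\dagger$ on both sides of $\map D \rho = \sigma$ and inserting the diagonalisation of $\rho$ gives
\[ q_i = \left(\beta_i^\dagger\middle|\map D\middle|\rho\right) = \sum_{j} \left(\beta_i^\dagger\middle|\map D\middle|\alpha_j\right) p_j = \sum_{j} D_{ij}\, p_j , \]
that is $\mathbf q = D\mathbf p$ with $D$ doubly stochastic. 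By the doubly-stochastic characterisation of majorisation, this is precisely $\mathbf p \succeq \mathbf q$.

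For the sufficiency direction, I would assume $\mathbf p \succeq \mathbf q$, so that $\mathbf q = M\mathbf p$ for some $d\times d$ doubly stochastic matrix $M$. Guided by Lemma~\ref{lem:matrixchannel}, I would define the measure-and-prepare channel $\map D := \sum_{i,j} M_{ij}\, \beta_i\, \alpha_j^\dagger$, which performs the observation-test $\{\alpha_j^\dagger\}$ and prepares $\beta_i$ with conditional probability $M_{ij}$. Double stochasticity makes $\map D$ a channel ($u\map D = u$ because the columns of $M$ sum to one) and unital ($\map D\chi = \chi$ because the rows sum to one, using $\left(\alpha_j^\dagger\middle|\chi\right)=1/d$ and Proposition~\ref{prop:chi invariant}). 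Finally, $\map D\rho = \sum_{i}\big(\sum_{j} M_{ij}\, p_j\big)\beta_i = \sum_{i} q_i\, \beta_i = \sigma$, so $\rho\succeq_{\mathsf{Unital}}\sigma$.

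I expect the main technical hinge to be the extension/padding step rather than any of the computations: I would need that any set of perfectly distinguishable pure states (here the eigenstates produced by diagonalisation) extends to a pure maximal set of size $d_\rA$, so that the dual effects genuinely constitute an observation-test and Lemmas~\ref{lem:channelmatrix}--\ref{lem:matrixchannel} apply verbatim, and so that the identities $\left(\alpha_j^\dagger\middle|\rho\right)=p_j$ and $\left(\beta_i^\dagger\middle|\sigma\right)=q_i$ hold via $\left(\alpha_j^\dagger\middle|\alpha_k\right)=\delta_{jk}$ from Proposition~\ref{prop:dagobs}. Once this extension is secured, the remainder is the bookkeeping above, and the equivalence follows from the two lemmas together with the classical doubly-stochastic characterisation of majorisation.
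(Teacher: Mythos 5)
Your proposal is correct and follows essentially the same route as the paper's own proof in the appendix: the same matrix $D_{ij}=\left(\beta_i^\dagger\middle|\map D\middle|\alpha_j\right)$ via Lemma~\ref{lem:channelmatrix} for necessity, and the same measure-and-prepare channel built from a doubly stochastic matrix via Lemma~\ref{lem:matrixchannel} for sufficiency, combined with the Hardy--Littlewood--P\'olya characterisation of majorisation. The only difference is that you make explicit the padding of the spectra with zeros and the extension of the eigenstates to pure maximal sets, a step the paper leaves implicit by writing diagonalisations of full length $d$.
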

The proof is provided in appendix~\ref{app:unital channels}. Note that since RaRe channels and noisy operations are special cases of unital channels, majorisation is a {\em necessary} condition for convertibility the RaRe and Noisy Resource Theories.

\subsection{Characterisation of unital monotones}

The majorisation criterion determines whether a state is more resourceful than another. To be more quantitative, one can introduce monotones \cite{Resource-theories,Quantum-resource-2,Resource-monoid}---i.e.\ functions that are non-increasing under free operations: 
\begin{defn}
A \emph{monotone  under the free operations $\set F$}   for system $\rA$ is a function $P:   \St\left( \rA\right)   \to \R$ satisfying the
condition 
\begin{equation}
P\left( \rho\right)   \ge P\left( \sigma\right)   \qquad  \forall   \rho,\sigma\in\St\left( \rA\right)  ,    \rho  \succeq_{\set F}   \sigma   .
\end{equation} 
\end{defn}   
When $\set F$ is the set of unital operations,  we refer to $ P $ as {\em unital monotones}.  
In sharp theories with purification, unital monotones have an elegant mathematical characterisation: 
\begin{prop}
	A function on the state space $P:\mathsf{St}_{1}\left(\mathrm{A}\right)\to\mathbb{R}$
	is a unital monotone if and only if  $P\left(\rho\right)=f\left(\mathbf{p}\right)$, where 
	$\mathbf{p}$
	is the vector of eigenvalues of $\rho$ and 
	$f:\mathbb{R}^{d_\rA}\to\mathbb{R}$ is a {\em Schur-convex function}---that is, a function satisfying the condition $f\left( \st p\right)   \ge f \left( \st q\right) $ whenever $\st p \succeq \st q$. 
	\end{prop}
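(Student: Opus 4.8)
The plan is to obtain the proposition as a direct corollary of the majorisation criterion of Theorem~\ref{prop:unital channels}, which already identifies the preorder $\succeq_{\mathsf{Unital}}$ with majorisation of the eigenvalue vectors. Once that theorem is available, the characterisation of unital monotones becomes a dictionary between two equivalent descriptions of the same preorder. For the \emph{if} direction, suppose $P(\rho)=f(\mathbf{p})$ with $f$ Schur-convex and $\mathbf{p}$ the eigenvalue vector of $\rho$. If $\rho\succeq_{\mathsf{Unital}}\sigma$, then Theorem~\ref{prop:unital channels} gives $\mathbf{p}\succeq\mathbf{q}$, and Schur-convexity yields $f(\mathbf{p})\ge f(\mathbf{q})$, i.e.\ $P(\rho)\ge P(\sigma)$. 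Hence $P$ is a unital monotone, and this direction needs nothing beyond unwrapping the definitions.

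For the \emph{only if} direction I would first show that any unital monotone $P$ depends on $\rho$ only through its eigenvalue vector. If two normalised states $\rho$ and $\sigma$ share the same eigenvalue vector $\mathbf{p}=\mathbf{q}$, then trivially $\mathbf{p}\succeq\mathbf{q}$ and $\mathbf{q}\succeq\mathbf{p}$, so Theorem~\ref{prop:unital channels} makes $\rho$ and $\sigma$ mutually convertible by unital channels; monotonicity then forces $P(\rho)\ge P(\sigma)$ and $P(\sigma)\ge P(\rho)$, hence $P(\rho)=P(\sigma)$. This licenses the definition $f(\mathbf{p}):=P(\rho)$ for any state $\rho$ diagonalising to $\mathbf{p}$. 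Every probability vector $\mathbf{p}$ with $d_\rA$ entries is realised in this way: fixing any pure maximal set $\left\{\alpha_i\right\}_{i=1}^{d_\rA}$ of $\rA$, the state $\rho=\sum_{i=1}^{d_\rA}p_i\alpha_i$ is a mixture of perfectly distinguishable pure states, so by the diagonalisation theorem its eigenvalue vector is exactly $\mathbf{p}$. Finally, Schur-convexity of $f$ follows by running the well-definedness argument in one direction only: given $\mathbf{p}\succeq\mathbf{q}$, choose states $\rho,\sigma$ realising them, use Theorem~\ref{prop:unital channels} to get $\rho\succeq_{\mathsf{Unital}}\sigma$, and apply monotonicity to conclude $f(\mathbf{p})=P(\rho)\ge P(\sigma)=f(\mathbf{q})$.

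The main subtlety I expect to handle carefully is the bookkeeping of dimensions and domains. Eigenvalue vectors of states of different ranks must be padded with zeros to the common length $d_\rA$ before majorisation is applied, and the function $f$ is canonically defined only on the probability simplex in $\mathbb{R}^{d_\rA}$; extending it to all of $\mathbb{R}^{d_\rA}$, as the statement formally requires, is a harmless technicality, since Schur-convexity constrains $f$ only among majorisation-comparable vectors and any extension preserving its values on the simplex suffices. Apart from this point, the result is a straightforward consequence of the majorisation criterion already established.
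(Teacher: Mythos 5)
Your proposal is correct and follows essentially the same route as the paper: both directions are read off from Theorem~\ref{prop:unital channels}, with the \emph{only if} direction defining $f(\mathbf{p}):=P(\rho)$ and deducing Schur-convexity from the majorisation criterion. The only difference is one of detail---you explicitly justify that $P$ depends only on the spectrum (via mutual unital convertibility of states with equal eigenvalue vectors, and realisability of every probability vector on a pure maximal set), a step the paper compresses into a single sentence.
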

\begin{proof}
	Theorem~\ref{prop:unital channels} shows that the convertibility of states under unital channels is fully captured by their eigenvalues. Consequently, a unital monotone will be a function only of the eigenvalues of a state: there exists 
a function $f:\mathbb{R}^{d_\rA}\to\mathbb{R}$
	such that $P\left(\rho\right)=f\left(\mathbf{p}\right)$, for every
	normalised state $\rho$. Now, suppose that $\mathbf{p}$ and $\mathbf{q}$ are
	two probability distributions satisfying $\mathbf{p}\succeq\mathbf{q}$.
	Then, theorem~\ref{prop:unital channels} 	implies that there is a unital channel transforming the state $\rho=\sum_{i=1}^{d}p_{i}\alpha_{i}$ 
	into the state $\sigma=\sum_{i=1}^{d}q_{i}\alpha_{i}$, for any pure
	maximal set $\left\{ \alpha_{i}\right\}_{i=1}^{d}$. As a result,
	we obtain the relation
	\begin{equation}
	f\left(\mathbf{p}\right)=P\left(\rho\right)\geq P\left(\sigma\right)=f\left(\mathbf{q}\right) \, .
	\end{equation}
This means  that $f$ is Schur-convex.  Conversely, given a Schur-convex function $f$ one can define a function $P_f $ on the state space, as $P_f  \left( \rho\right)   :  =  f \left( \st p\right) $,  $\st p$ being the spectrum of $\rho$. This function  is easily proved to be a unital monotone, thanks to theorem \ref{prop:unital channels}.
\end{proof}

A  canonical example of Schur-convex function is the negative of the Shannon entropy, namely the function  
\begin{equation} 
f\left( \st p\right)   := -H\left( \mathbf{p}\right) \, ,\qquad  
H\left( \st p\right)    :  =  -\sum_{i=1}^{d}  p_i \log p_i  . \end{equation}  
 The corresponding purity monotone is   the negative of the Shannon-von Neumann entropy \cite{Scandolo,Chiribellatalk,TowardsThermo,Colleagues}
 \begin{equation}  P\left( \rho\right) :  =  -S\left( \rho\right)   , \qquad   S  \left( \rho\right)   :  =    H \left( \st p\right)       .\end{equation}   
Other important examples are the negatives of the Rényi entropies \cite{TowardsThermo,Colleagues}.   

\section{The counterexample of Doubled Quantum Theory}\label{sec:doubled} 
We have seen that  majorisation  is a necessary and sufficient condition for state convertibility in the Unital Resource Theory.    Is majorisation sufficient also for  convertibility  in the  RaRe Resource Theory?  Now we show that the answer is negative by constructing a counterexample, which we call ``Doubled Quantum Theory''.

\subsection{Individual systems}  
Consider a  theory  where every non-trivial system is the direct sum
of two identical quantum systems with Hilbert spaces  $\spc H_0$ and $\spc H_1$, respectively.  Physically, we can think of the two Hilbert spaces as two superselection sectors.    
 We associate  each ``doubled quantum system'' with a pair of isomorphic Hilbert spaces $  \left(   \spc H_0,  \spc H_1\right) $,  with  $\spc H_0  \simeq \spc H_1$.  
 We define the states of the doubled quantum system to be of the form
  \begin{equation}\label{eq:state doubled}\rho   =     p  \rho_0   \oplus    \left( 1-p\right)   \rho_1 \end{equation}
where $\rho_0$ and $\rho_1$ are two density matrices in the two sectors, respectively, and $p$ is a probability.  Likewise, we define the effects to be all quantum effects of the form  
$e  =   e_0  \oplus e_1$, where $e_0$ and $e_1$ are two quantum effects in the two sectors.  The allowed channels  from the input system $  \left(   \spc H_0,  \spc H_1\right) $ to the output system $ \left(   \spc K_0,  \spc K_1\right) $ are the quantum channels (completely positive trace-preserving maps) that 
\begin{enumerate}
\item send operators on  $\spc H_0 \oplus \spc H_1$  to operators on  $\spc K_0 \oplus \spc K_1$ 
\item map block diagonal operators to block diagonal operators. 
\end{enumerate}

The set of allowed tests is defined as the set of quantum instruments $\left\{\map C_i\right\}_{i\in\set X}$ where each quantum operation $\map C_i$  sends operators on  $\spc H_0 \oplus \spc H_1$  to operators on  $\spc K_0 \oplus \spc K_1$, mapping  block diagonal operators to block diagonal operators.

{\em Remark.} Note that the set of allowed  channels includes quantum channels of the form  $\map C   =   \map C_0  \oplus \map C_1$, where $\map C_0$ and $\map C_1$ are quantum channels acting on the individual sectors. However, not all allowed  channels are of this form.  
For example, our definition includes unitary channels, of the form $\map U  \left( \cdot\right)   =  U  \cdot U^\dag$,  with 
\begin{equation}\label{eq:allowed unitaries}
U= U_{0}\oplus U_{1} ,
\end{equation}
where  $U_{0}$ and $U_{1}$ are unitary operators   
acting on the subspaces $\spc H_0$ and $\spc H_1$, respectively.

The unitary channel $\map U$ is {\em different} from the non-unitary channel  $\map C  =  \map C_0  \oplus \map C_1  $, with $\map C_0  \left( \cdot\right)   =  U_0  \cdot  U_0^\dag$ and  $\map C_1  \left( \cdot\right)   =  U_1  \cdot  U_1^\dag$, even though the two channels act in the same way on every input state of the form~\eqref{eq:state doubled}.   Operationally, the difference between  the channels $\map U$ and $\map C$ will become visible when the channels are applied locally on one part of a composite system.

The existence of different physical transformations that are indistinguishable at the single-system level  is made possible by the fact that Doubled Quantum Theory does \emph{not} satisfy the Local Tomography axiom, as we show  in appendix~\ref{app:double local tomography}.   The inability to distinguish transformations at the single-system level is a fairly generic trait  of theories where the Local Tomography axiom does not hold.   Quantum theory on real Hilbert spaces also exhibit this trait \cite{Chiribella-purification,QuantumFromPrinciples} (but only for non-pure transformations \cite{Chiribella-purification}).

\subsection{Composite systems}

The peculiarity of Doubled Quantum Theory is the way systems are composed.  
The product of two doubled quantum systems $\left(   \spc H^\rA_0,  \spc H^\rA_1\right) $ and $\left(   \spc H^\rB_0,  \spc H^\rB_1\right) $ is the doubled 	quantum system $\left(   \spc H^{\rA\rB}_0,  \spc H^{\rA\rB}_1\right) $ defined by  
\begin{align}\label{hhahhb}
\nonumber \spc H_0^{\rA\rB}    &:  =   \left( \spc H_0^\rA  \otimes \spc H_0^\rB\right)   \oplus     \left( \spc H_1^\rA  \otimes \spc H_1^\rB \right)  \\
\spc H_1^{\rA\rB}   & :  =  \left(  \spc H_0^\rA  \otimes \spc H_1^\rB \right)  \oplus   \left(   \spc H_1^\rA  \otimes \spc H_0^\rB \right)  \, .
\end{align}
As an example,   consider the composite system of  two doubled qubits, corresponding to  $\spc H^\rA_0  \simeq \spc H^\rA_1 \simeq \spc H^\rB_0 \simeq \spc H^\rB_1  \simeq  \mathbb C^2$. An example of state of the composite system is the pure state
\begin{align}\label{example0011}
\ket{\Psi}   =     \frac{   \ket{0,0 }_\rA   \ket{0,0}_\rB  +   \ket{1,0}_\rA  \ket{1,0}_\rB     }{\sqrt 2}   ,
\end{align} 
where $\left\{\ket{0,0}  ,  \ket{0,1}\right\}$ is an orthonormal basis for $\spc H_0$ and $\left\{  \ket{1,0}  , \ket{1,1}\right\}$ is an orthonormal basis for $\spc H_1$.    Note that, when one of the two systems is traced out, the remaining local state has the block diagonal form $\rho  =  \frac{1}{2}  \ket{0,0}\bra{0,0}  \oplus  \frac{1}{2}  \ket{1,0}\bra{1,0}$.       This means that the coherence between the two summands in the state~\eqref{example0011} is invisible   at the single-system level.

From a physical point of view, Doubled Quantum Theory can be thought of as ordinary quantum theory with a superselection rule on the {\em total} parity.    Every system is  split into two identical sectors of even and odd parity, respectively.  When systems are composed, the sectors are grouped together based on the total parity, so that superpositions between subspaces with the same parity are allowed.

\subsection{In Doubled Quantum Theory, majorisation is not sufficient for convertibility under RaRe channels}  

In appendix~\ref{app:operationaldoubled} we summarise a few  operational features of  Doubled Quantum Theory. In particular, we show that Doubled Quantum Theory is a sharp theory with purification.
Nevertheless, now we show that  majorisation  does \emph{not}  guarantee
the convertibility of states under RaRe channels. 

Consider the following states of a doubled qubit:
\begin{equation}\label{eq:rho}
\rho=\frac{1}{2}   \Big(   \ket{0,0}\bra{0,0}  +  \ket{0,1}\bra{0,1} \Big) 
\end{equation}
and 
\begin{equation}\label{eq:sigma}
\sigma=\frac{1}{2}     \ket{0,0}\bra{0,0}  \oplus   \frac 12  \ket{1,0}\bra{1,0}  ,
\end{equation}
where $\left\{  \ket{0,0}  ,   \ket{0,1}\right\}$ is an orthonormal basis for $\spc H_0$ and $\left\{  \ket{1,0}  ,   \ket{1,1}\right\}$ an orthonormal basis for $\spc H_1$.  The key point here is that the state $\rho$ is fully contained in one sector (the even parity sector), while the state $\sigma$ is a mixture of two states in two different sectors.  

The two states have the same spectrum, and therefore they are equivalent  in terms of majorisation.    However, there is no RaRe channel transforming one state into the other.   To see this, we use the following lemmas: 
\begin{lem}[\cite{Muller3D}]
If any two states   $\rho $ and $\sigma$ are interconvertible with Rare channels, then there exists a reversible transformation $\map U$ such that $\sigma  = \map U \left(  \rho\right) $. 
\end{lem}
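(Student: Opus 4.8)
The plan is to exploit the fact that the reversible transformations of a system form a \emph{compact} group, so that by averaging an inner product over it one obtains a strictly convex functional on the state space that is constant along the orbits of the reversible group. Interconvertibility then forces a mixture of orbit points to have the same value of this functional as each of its constituents, which by strict convexity can only happen if the mixture is trivial.

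First I would unfold the hypothesis. Saying that $\rho$ and $\sigma$ are interconvertible with RaRe channels (i.e.\ $\rho\succeq_{\mathsf{RaRe}}\sigma$ and $\sigma\succeq_{\mathsf{RaRe}}\rho$) means that there exist two RaRe channels $\map R_1=\sum_i p_i\map U_i$ and $\map R_2=\sum_j q_j\map V_j$, with all $\map U_i$ and $\map V_j$ reversible, such that $\sigma=\map R_1\rho$ and $\rho=\map R_2\sigma$. The goal is to collapse the first mixture to a single term, producing one reversible $\map U$ with $\sigma=\map U\rho$.

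Next I would construct the functional. The group $\grp G$ of reversible transformations of $\rA$ is compact: it is closed (the fact used in the proof of theorem~\ref{theo:uniquep}) and bounded, since each of its elements maps the compact, finite-dimensional set of normalised states onto itself. Hence $\grp G$ carries a normalised Haar measure $\d\map U$. Starting from an arbitrary inner product $\langle\cdot,\cdot\rangle_0$ on the finite-dimensional real vector space $V$ spanned by $\St(\rA)$, I set $\langle x,y\rangle:=\int_{\grp G}\d\map U\,\langle\map U x,\map U y\rangle_0$. This is again a genuine inner product, positive definiteness surviving the average because every $\map U$ is invertible, and it is $\grp G$-invariant by translation invariance of the Haar measure. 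Consequently the functional $F(\rho):=\langle\rho,\rho\rangle$ is strictly convex on $V$ and satisfies $F(\map U\rho)=F(\rho)$ for every reversible $\map U$. Then I would run the equality-case argument: since all the states $\map U_i\rho$ share the common value $F(\rho)$, strict convexity (Jensen) gives $F(\sigma)=F\!\left(\sum_i p_i\map U_i\rho\right)\le\sum_i p_i F(\map U_i\rho)=F(\rho)$, with equality if and only if all the $\map U_i\rho$ with $p_i>0$ coincide. The symmetric computation applied to $\map R_2$ yields $F(\rho)\le F(\sigma)$. Together these force $F(\rho)=F(\sigma)$, so equality must hold in the first inequality, whence $\sigma=\map U_{i_0}\rho$ for any index $i_0$ with $p_{i_0}>0$, and taking $\map U:=\map U_{i_0}$ completes the proof.

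The hard part will not be the algebra but the two structural inputs behind the averaging step: the compactness of $\grp G$ (requiring both the closure result and the boundedness coming from preservation of the normalised state space) and the verification that the averaged bilinear form stays positive definite, so that $F$ is strictly convex. Once these are secured the conclusion is immediate; it is worth emphasising that the argument relies only on convexity, finite dimensionality, and the group structure of the reversible transformations, and so does not invoke any of the finer features of sharp theories with purification such as diagonalisation or the majorisation criterion.
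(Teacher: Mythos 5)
The paper does not actually prove this lemma---it imports it from Ref.~\cite{Muller3D}---and your argument is essentially the standard proof of that imported result: average an inner product over the compact group of reversible transformations to get an invariant, strictly convex functional, then use the two-sided Jensen inequality and its equality case to collapse the mixture $\sum_i p_i\,\map U_i\rho$ to a single term. Your proof is correct as written (indeed your boundedness justification for compactness of the reversible group, via preservation of the compact normalised state space, is more explicit than the paper's own parenthetical remark in appendix~A), so there is nothing to fix.
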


\begin{lem}  
 No unitary matrices in Doubled Quantum Theory are such that   $\sigma =  U  \rho U^\dag$, where $ \rho $ and $ \sigma $ are defined in Eqs.~\eqref{eq:rho} and \eqref{eq:sigma} respectively.
\end{lem}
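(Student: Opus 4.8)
The plan is to exhibit a quantity that is conserved under conjugation by every allowed unitary of Doubled Quantum Theory, and to show that this quantity takes different values on $\rho$ and $\sigma$, thereby ruling out the existence of a $U$ with $\sigma = U\rho U^\dagger$.

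First I would recall the form of the allowed unitaries from Eq.~\eqref{eq:allowed unitaries}: every unitary matrix in the theory decomposes as $U = U_0\oplus U_1$, with $U_0$ and $U_1$ acting on the two superselection sectors $\spc H_0$ and $\spc H_1$ respectively. The crucial structural consequence is that such a $U$ preserves each sector separately, and in particular commutes with the projector $P_0$ onto the even-parity sector $\spc H_0$, i.e.\ $U^\dagger P_0 U = P_0$. This is precisely the feature that distinguishes Doubled Quantum Theory from ordinary quantum theory: a generic quantum unitary (for instance one mapping $\ket{0,1}\mapsto\ket{1,0}$) could move weight across the sectors, but is forbidden here because it is not block diagonal.

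Next I would introduce the \emph{sector weight} $w\left(\tau\right) := \Tr\left[P_0\,\tau\right]$, the probability that a parity measurement on the state $\tau$ yields the even outcome. Because $U$ commutes with $P_0$, the weight is invariant under conjugation:
\begin{equation}
\Tr\left[P_0\, U\tau U^\dagger\right] = \Tr\left[U^\dagger P_0 U\,\tau\right] = \Tr\left[P_0\,\tau\right] = w\left(\tau\right) .
\end{equation}
I would then evaluate $w$ on the two states of interest. Since $\rho$ in Eq.~\eqref{eq:rho} is supported entirely within $\spc H_0$, one has $w\left(\rho\right) = 1$, whereas $\sigma$ in Eq.~\eqref{eq:sigma} places equal weight in the two sectors, giving $w\left(\sigma\right) = \tfrac{1}{2}$. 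If some allowed unitary satisfied $\sigma = U\rho U^\dagger$, invariance of $w$ would force $w\left(\sigma\right) = w\left(\rho\right)$, i.e.\ $\tfrac{1}{2} = 1$, a contradiction. Hence no such $U$ exists.

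I do not expect a serious obstacle here: the argument is essentially an observation rather than a computation, the content being that block diagonality forbids the transfer of probability weight between the two parity sectors. The only point requiring care is to invoke Eq.~\eqref{eq:allowed unitaries} explicitly, so that the reader sees that it is the \emph{restriction} to sector-preserving unitaries---and not any intrinsic spectral obstruction, since $\rho$ and $\sigma$ share the same spectrum---that prevents the conversion.
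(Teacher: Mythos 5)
There is a genuine gap, and it lies in your very first step: the claim that every unitary of Doubled Quantum Theory decomposes as $U = U_0 \oplus U_1$ and therefore commutes with the sector projector $P_0$. This misreads Eq.~\eqref{eq:allowed unitaries}: the paper says the allowed channels \emph{include} unitaries of that form, not that they exhaust them. The defining condition on an allowed unitary channel is only that conjugation by $U$ maps block diagonal operators to block diagonal operators, and sector-\emph{swapping} unitaries satisfy this just as well as block diagonal ones. In fact the theory \emph{must} contain sector swaps: Doubled Quantum Theory is a sharp theory with purification, so by proposition~\ref{prop:transitivity} every two pure states---in particular $\ket{0,0} \in \spc H_0$ and $\ket{1,0} \in \spc H_1$---are connected by a reversible transformation; the appendix's verification of essential uniqueness of purification likewise exhibits an explicit local unitary exchanging the two sectors. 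For such a swap $E$ one has $E^\dagger P_0 E = P_1 \neq P_0$, so your sector weight obeys $w\left(E \tau E^\dagger\right) = 1 - w\left(\tau\right)$: the displayed invariance equation is false for some allowed unitaries, and the proof as written collapses. (Relatedly, your parenthetical remark that a unitary mapping $\ket{0,1} \mapsto \ket{1,0}$ is forbidden ``because it is not block diagonal'' conflates two different things: a full sector swap is also not block diagonal, yet it is allowed; what is forbidden is failing to preserve block diagonality under conjugation.)

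The idea is salvageable, but it needs one more structural step, and that step is essentially where the paper's proof does its work. One can show that any allowed unitary maps \emph{all} pure states of a given sector coherently into a \emph{single} sector (a rank-one block diagonal operator is supported in one block, and the assignment of output sector cannot vary over the connected unit sphere of $\spc H_0$); granting this, $w\left(U \rho\, U^\dagger\right) \in \left\{0,1\right\}$ whenever $w\left(\rho\right) = 1$, which still contradicts $w\left(\sigma\right) = \tfrac{1}{2}$, since $\rho$ of Eq.~\eqref{eq:rho} lives in $\spc H_0$ while $\sigma$ of Eq.~\eqref{eq:sigma} straddles both sectors. The paper's own proof takes the complementary route: assuming $U \rho\, U^\dagger = \sigma$, it deduces that $U\ket{0,0}$ and $U\ket{0,1}$ would have to land in \emph{different} sectors, and then shows that any such $U$ maps the legitimate state $\ket{0,+} = \frac{1}{\sqrt{2}}\left(\ket{0,0} + \ket{0,1}\right)$ to the forbidden cross-sector superposition $\frac{1}{\sqrt{2}}\left(\ket{0,0} + \ket{1,0}\right)$, so $U$ is not an allowed unitary of the theory. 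Either way, the obstruction is that no allowed unitary can \emph{split} a single sector across the two parity blocks---not, as your proposal asserts, that every allowed unitary preserves each block individually.
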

\begin{proof}
The proof is by contradiction. Suppose that one has $\sigma =  U  \rho U^\dag$. Then,  define  the vectors $  \ket{\varphi_0}  :   =   U \ket{ 0,0} $ and $ \ket{\varphi_1}  :  =  U  \ket{0,1}$.   With this definition, we have 
\begin{align}
U \rho U^\dag  =    \frac 1 2  \left(    \ket{\varphi_0}\bra{\varphi_0}  +  \ket{\varphi_1}\bra{\varphi_1}\right) \, .
\end{align}
  Now,  $U\rho U^\dag$ must be an allowed state in Double Quantum Theory. This means that there are only two  possibilities: either $\ket{\varphi_0}$ and $\ket{\varphi_1}$ belong to the same sector, or they do not.  But $\sigma$ is a mixture of states in both sectors. Hence,  $\ket{\varphi_0}$ and $\ket{\varphi_1}$  must belong to different sectors, if the relation  $U\rho U^\dag = \sigma$ is to hold.     At this point, there are only two possibilities:  either  
\begin{align}\label{0000}
U \ket{0,0}   =   \ket{0,0}   \qquad \textrm{and} \qquad U  \ket{0,1}   =  \ket{1,0}    ,
\end{align}  
 or 
\begin{align}\label{0010}
U \ket{0,0}   =   \ket{1,0}   \qquad \textrm{and} \qquad U  \ket{0,1}   =  \ket{0,0}   .
\end{align}  
However, none of these conditions can be satisfied by a unitary in Double Quantum Theory:  every unitary matrix satisfying either  condition would map the valid state $  \ket{0  ,  +}  =   \frac{1}{\sqrt{2}}\left(    \ket{0,0}  +  \ket{0,1}\right)$ into the \emph{invalid} state $   \frac{1}{\sqrt{2}}\left(   \ket{0,0}   +  \ket{1,0}\right)$, which is forbidden by the parity superselection rule.   \end{proof}  

 Since unitary channels are the only reversible transformations in Doubled Quantum Theory, we conclude that no Rare channel can convert $\rho$ into $\sigma$.  Summarising:  majorisation is \emph{not} sufficient for the convertibility via RaRe channels.

\section{Equivalence of the three resource theories}\label{sec:sharp theories unrestricted}  
In this section we will determine when the  RaRe, Noisy, and Unital Resource Theories are equivalent in terms of state convertibility. 

\subsection{Unrestricted reversibility}  
The condition for the equivalence of the RaRe, Noisy, and Unital Resource Theories  can be expressed in three, mutually equivalent ways, corresponding to three axioms independently introduced by different authors: 
 
\begin{ax}[Permutability \cite{Hardy-informational-2,hardy2013}]
Every permutation of every pure maximal set can be implemented by a reversible transformation.
\end{ax}

\begin{bis}[Strong Symmetry \cite{Barnum-interference}]
For every two  pure maximal sets, there exists  a reversible transformation that converts the states in one set into the states in the other.  
\end{bis}

\begin{ter}[Reversible Controllability \cite{Control-reversible}]
For every pair of systems $\rA$ and  $\rB$, every  pure maximal set $\left\{  \alpha_i\right\}_{i=1}^d$ of system $\rA$ and every  set of reversible transformations  $\left\{\map U_i\right\}_{i=1}^d$ on system $\rB$, there exists a reversible transformation $\map U$ on the composite system $\rA\otimes \rB$ such that 
\begin{align}
\begin{aligned} \Qcircuit @C=1em @R=.7em @!R { & \prepareC{\alpha_i} & \qw \poloFantasmaCn{\rA} & \multigate{1}{\mathcal U} & \qw \poloFantasmaCn{\rA} & \qw \\ &  & \qw \poloFantasmaCn{\rB} & \ghost{\mathcal U} & \qw \poloFantasmaCn{\rB} & \qw  } \end{aligned} =
\begin{aligned} \Qcircuit @C=1em @R=.7em @!R { & \prepareC{\alpha_i} & \qw \poloFantasmaCn{\rA} & \qw  &\qw &\qw \\
&  & \qw \poloFantasmaCn{\rB}   &  \gate{\map U_i}  &  \qw \poloFantasmaCn{\rB} &\qw} \end{aligned}
\end{align}
for every $i \in  \left\{1,\dots,  d\right\}$.
\end{ter}

Permutability, Strong Symmetry, and Reversible Controllability are logically distinct requirements.  
For example, Strong Symmetry implies Permutability, but the converse is not true in general, as shown by the  example of the \emph{square bit} \cite{Barrett} in Fig.~\ref{fig:square} (see  appendix~\ref{app:square} for more details).    

\begin{figure}
	\begin{centering}
	\includegraphics[bb=0bp 0bp 120bp 115bp]{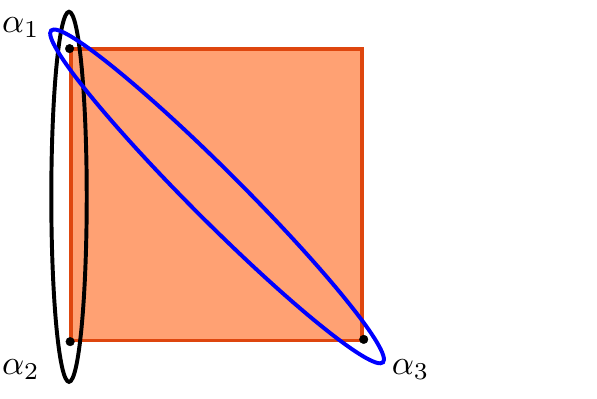}
	\par\end{centering}

	\caption{\label{fig:square}Normalised states of the square bit. The two sets $\left\{ \alpha_{1},\alpha_{2}\right\} $
		(circled in black) and  $\left\{ \alpha_{1},\alpha_{3}\right\} $ (circled in blue) consist of perfectly distinguishable pure states. Permutability holds,  because every permutation of every pair  of perfectly
		distinguishable pure states can be implemented by a reversible transformation, corresponding to a symmetry of the square.  
		However, no reversible transformation can transform $\alpha_2$ into $\alpha_3$ while leaving $\alpha_1$  unchanged. Hence, Strong Symmetry cannot hold for the square bit.}
\end{figure}

Although different in general, Permutability, Strong Symmetry, and Reversible Controllability become equivalent in sharp theories with purification:
\begin{prop}\label{prop:permstrong}
In every sharp theory with purification, Permutability, Strong Symmetry, and Reversible Controllability are equivalent requirements. 
\end{prop}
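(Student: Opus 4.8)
The plan is to prove the three requirements pairwise equivalent by closing a short cycle of implications, invoking at each step the transitivity of the reversible group on pure states (Proposition~\ref{prop:transitivity}) and the rigidity supplied by essential uniqueness of purifications (Axiom~\ref{ax:purification} and Definition~\ref{def:uniqueness}). One implication is immediate: Strong Symmetry $\Rightarrow$ Permutability, since any permutation $\pi$ carries the pure maximal set $\{\alpha_i\}_{i=1}^d$ to the pure maximal set $\{\alpha_{\pi(i)}\}_{i=1}^d$, and the reversible transformation that Strong Symmetry provides for this pair of sets is exactly a gate implementing $\pi$. So the genuine content lies in upgrading Permutability to Strong Symmetry and in tying both to Reversible Controllability.

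For Permutability $\Rightarrow$ Strong Symmetry I would argue as follows. Given two pure maximal sets $\{\alpha_i\}_{i=1}^d$ and $\{\beta_i\}_{i=1}^d$ of system $\rA$, Proposition~\ref{prop:transitivity} first supplies a reversible transformation aligning a single pair, say $\alpha_1 \mapsto \beta_1$; applying it reduces the problem to the case where the two maximal sets share one element. To align the remaining elements \emph{simultaneously} I would pass to a purifying system. By Proposition~\ref{prop:chi invariant} the microcanonical state decomposes uniformly over \emph{any} pure maximal set, so a fixed purification of $\chi_\rA$ correlates each maximal set of $\rA$ with a corresponding maximal set on the purifying system. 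The two sets $\{\alpha_i\}$ and $\{\beta_i\}$ thus give rise to two purifications of the \emph{same} microcanonical state; essential uniqueness then furnishes a reversible transformation relating them, and the residual ambiguity in matching eigenstates across the two decompositions is precisely a permutation, which Permutability allows us to absorb. Composing these reversible transformations yields a single gate sending $\alpha_i \mapsto \beta_i$ for all $i$, that is, Strong Symmetry.

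It remains to connect these to Reversible Controllability. The implication Strong Symmetry $\Rightarrow$ Reversible Controllability is the construction of Ref.~\cite{Control-reversible}: given $\{\alpha_i\}_{i=1}^d$ on $\rA$ and reversible transformations $\{\map U_i\}_{i=1}^d$ on $\rB$, one defines the controlled gate on the pure maximal set $\{\alpha_i \otimes \gamma_j\}$ of $\rA\otimes\rB$ (with $\{\gamma_j\}$ a pure maximal set of $\rB$, maximal for the composite by Proposition~\ref{prop:information locality}) through $\alpha_i\otimes\gamma_j \mapsto \alpha_i \otimes \map U_i\gamma_j$, and extends it to a reversible transformation using Strong Symmetry, with well-definedness guaranteed by purification. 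For the converse I would feed a purification of $\chi_\rA$ that is perfectly correlated in the set $\{\alpha_i\}$ into the controlled gate, taking the control to carry $\{\alpha_i\}$ and the target reversibles to be the single-system maps $\map V_i$ with $\map V_i\alpha_i=\beta_i$ obtained from Proposition~\ref{prop:transitivity}, and then discard the control; the resulting channel is reversible and sends $\alpha_i\mapsto\beta_i$, recovering Strong Symmetry and closing the cycle.

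The hard part will be Permutability $\Rightarrow$ Strong Symmetry: Permutability only guarantees transformations that permute a \emph{fixed} maximal set, whereas Strong Symmetry relates genuinely distinct maximal sets, and bridging this gap is exactly where the global rigidity of essentially unique purifications must be used to pin the matching of eigenstates down to a permutation. Care is also needed to check throughout that reversible transformations preserve perfect distinguishability and maximality (via Proposition~\ref{prop:dagobs}), so that every intermediate set appearing in the argument is again a pure maximal set.
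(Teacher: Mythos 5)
Your cycle of implications is organised like the paper's, and the easy parts agree with it: Strong Symmetry $\Rightarrow$ Permutability is indeed immediate, and Strong Symmetry $\Rightarrow$ Reversible Controllability is correctly delegated to Lee--Selby \cite{Control-reversible}, which is all the paper does too. But both of your substantive implications rest on steps that fail. For Permutability $\Rightarrow$ Strong Symmetry: essential uniqueness of purification (definition~\ref{def:uniqueness}) yields a reversible transformation acting on the \emph{purifying} system, not on $\rA$, so comparing two purifications of $\chi_\rA$ cannot directly output the reversible channel on $\rA$ that Strong Symmetry demands. To make your scheme produce a reversible on the right system you would need purifications of $\chi$ perfectly correlated with the \emph{prescribed} maximal sets $\left\{\alpha_i\right\}$ and $\left\{\beta_i\right\}$, but the existence of such states is not available from the four axioms at this stage---it is essentially a consequence of the Strong Symmetry you are trying to prove. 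Moreover, your claim that ``the residual ambiguity in matching eigenstates is precisely a permutation'' is unjustified: $\chi$ admits a continuum of convex decompositions into pure states, and the steered ensembles related by the essential-uniqueness reversible need not be permutations of one another within a single maximal set. For Reversible Controllability $\Rightarrow$ Strong Symmetry, the step ``discard the control'' destroys reversibility: with the control uncorrelated with an arbitrary target input, tracing it out of the controlled gate yields the RaRe channel $\frac{1}{d}\sum_i \map V_i$, not a reversible transformation, and you cannot pre-correlate the control with an unknown external input.

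The missing ingredient that makes the paper's proof work is the combination of the product maximal set on $\rA\otimes\rA$ with the lemma (from Ref.~\cite{Chiribella-purification}) that every pure deterministic transformation of a system into itself is reversible. For Permutability $\Rightarrow$ Strong Symmetry, the paper applies Permutability to the maximal set $\left\{\varphi_i\otimes\psi_j\right\}$ to obtain a reversible $\map U$ with $\map U\left(\varphi_i\otimes\psi_1\right)=\varphi_1\otimes\psi_i$ for all $i$, and then defines $\map P:=\left(\varphi_1^\dagger\otimes\map I\right)\map U\left(\,\cdot\,\otimes\psi_1\right)$. This $\map P$ is pure by Purity Preservation; it occurs with probability one on every $\varphi_i$, hence on $\chi=\frac{1}{d}\sum_i\varphi_i$, hence on every state; being pure and deterministic it is reversible, and it sends $\varphi_i\mapsto\psi_i$. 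For Reversible Controllability $\Rightarrow$ Strong Symmetry, the same post-selection trick is applied to the composition of two controlled gates, one controlled on the first factor implementing $\map U_i\psi_1=\psi_i$ and one controlled on the second implementing $\map V_i\varphi_i=\varphi_1$, so that $\varphi_i\otimes\psi_1\mapsto\varphi_1\otimes\psi_i$; only the transitivity of proposition~\ref{prop:transitivity} is used, never a correlated purification of $\chi$. Note that conditioning on the pure effect $\varphi_1^\dagger$ is exactly what replaces your ``discard the control'': it preserves purity, which is what lets reversibility be recovered.
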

The proof is presented in appendix~\ref{app:permstrong}. The fact that three desirable properties become equivalent under our axioms gives a further evidence that the axioms capture an important structure of physical theories. 

Since Permutability,  Strong Symmetry, and Reversible Controllability are equivalent in the present
context, we conflate them into a single notion: 
\begin{defn}
A sharp theory with purification has \emph{unrestricted reversibility} if the theory satisfies Permutability, or Strong Symmetry, or Reversible Controllability. 
\end{defn}

\subsection{When the three resource theories of purity are equivalent}    

We now  characterise exactly when  the RaRe, Noisy, and Unital Resource theories are equivalent in terms of state convertibility.   Owing to the inclusions ${\sf RaRe}  \subseteq {\sf Noisy}  \subseteq {\sf Unital}$, a sufficient condition for the equivalence is that the convertibility under unital channels implies the convertibility under RaRe channels.  The characterisation is as follows:     

\begin{thm}\label{theo:unitalrare}  In every sharp theory with purification,
the following statements are equivalent:
\begin{enumerate}
\item   the RaRe, Noisy, and Unital Resource Theories are equivalent in terms of state convertibility
\item  the theory has unrestricted reversibility.  
 \end{enumerate}
\end{thm}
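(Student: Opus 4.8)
The plan is to first collapse the three-way equivalence in statement (1) to a single implication. Because of the inclusions ${\sf RaRe}\subseteq{\sf Noisy}\subseteq{\sf Unital}$, the chain~\eqref{ConvertibilityImplications} already gives $\rho\succeq_{\mathsf{RaRe}}\sigma\Rightarrow\rho\succeq_{\mathsf{Noisy}}\sigma\Rightarrow\rho\succeq_{\mathsf{Unital}}\sigma$ for every pair of states. Hence the three preorders coincide if and only if the weakest implies the strongest, i.e.\ $\rho\succeq_{\mathsf{Unital}}\sigma\Rightarrow\rho\succeq_{\mathsf{RaRe}}\sigma$. Invoking the majorisation criterion (theorem~\ref{prop:unital channels}), statement (1) is therefore equivalent to the assertion that, whenever the eigenvalue vector $\mathbf p$ of $\rho$ majorises the eigenvalue vector $\mathbf q$ of $\sigma$, there is a RaRe channel taking $\rho$ to $\sigma$. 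By proposition~\ref{prop:permstrong} I may use Permutability and Strong Symmetry interchangeably with statement (2), so the whole task reduces to matching the implication ``$\mathbf p\succeq\mathbf q\Rightarrow\rho\succeq_{\mathsf{RaRe}}\sigma$'' with unrestricted reversibility.

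For the direction (2)$\Rightarrow$(1) I would argue constructively. Diagonalise $\rho=\sum_i p_i\alpha_i$ and $\sigma=\sum_i q_i\beta_i$. Using Strong Symmetry, pick a reversible $\mathcal W$ with $\mathcal W\alpha_i=\beta_i$, so that $\mathcal W\rho=\sum_i p_i\beta_i$ is diagonal in the same maximal set as $\sigma$. Since $\mathbf p\succeq\mathbf q$, classical majorisation theory gives a doubly stochastic $D$ with $\mathbf q=D\mathbf p$, and Birkhoff's theorem writes $D=\sum_\pi\lambda_\pi P_\pi$ as a convex mixture of permutation matrices. Permutability supplies, for each $\pi$, a reversible $\mathcal U_\pi$ with $\mathcal U_\pi\beta_i=\beta_{\pi(i)}$. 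Setting $\mathcal R:=\sum_\pi\lambda_\pi\mathcal U_\pi$, a direct reindexing ($j=\pi(i)$) shows $\mathcal R(\mathcal W\rho)=\sum_j\big(\sum_\pi\lambda_\pi p_{\pi^{-1}(j)}\big)\beta_j=\sum_j q_j\beta_j=\sigma$. As $\mathcal R\circ\mathcal W$ is a composition of RaRe channels, it is itself RaRe, so $\rho\succeq_{\mathsf{RaRe}}\sigma$.

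For (1)$\Rightarrow$(2) I would prove Permutability and then appeal to proposition~\ref{prop:permstrong}. Fix a pure maximal set $\{\alpha_i\}_{i=1}^d$ and a permutation $\pi$, choose strictly decreasing probabilities $p_1>\dots>p_d>0$, and set $\rho=\sum_i p_i\alpha_i$, $\sigma=\sum_i p_i\alpha_{\pi(i)}$. These states share the same sorted spectrum, so $\mathbf p\succeq\mathbf q$ and $\mathbf q\succeq\mathbf p$; by theorem~\ref{prop:unital channels} they are interconvertible under unital channels, and hence, by the assumed equivalence (1), interconvertible under RaRe channels. The lemma of Ref.~\cite{Muller3D} then yields a single reversible $\mathcal U$ with $\sigma=\mathcal U\rho$. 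Because the spectrum is non-degenerate, uniqueness of the eigenvalues, together with uniqueness of the eigenstates for distinct eigenvalues, forces $\mathcal U\alpha_i=\alpha_{\pi(i)}$ for every $i$; thus $\pi$ is implemented reversibly, establishing Permutability and therefore unrestricted reversibility.

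The main obstacle I anticipate lies in the two ``bridging'' steps that connect the combinatorics of majorisation with the operational structure. In (2)$\Rightarrow$(1) I must be sure that the Birkhoff decomposition can be lifted to genuine reversible channels---this is exactly where Permutability is indispensable---and that the basis-alignment $\mathcal W$ exists, which is where Strong Symmetry enters. In (1)$\Rightarrow$(2) the delicate point is the final identification $\mathcal U\alpha_i=\alpha_{\pi(i)}$: it rests on the uniqueness of eigenstates for a non-degenerate spectrum in sharp theories with purification, a property that must be invoked carefully. Reassuringly, this is precisely what has no grip in the Doubled Quantum Theory counterexample of section~\ref{sec:doubled}, where same-spectrum states supported in distinct sectors are not reversibly connected at all, so that (1) fails together with (2).
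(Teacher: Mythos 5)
Your overall architecture coincides with the paper's. The reduction of statement (1) to the single implication ``$\rho\succeq_{\mathsf{Unital}}\sigma\Rightarrow\rho\succeq_{\mathsf{RaRe}}\sigma$'', the choice of test states with a strictly decreasing spectrum, and the appeal to the lemma of Ref.~\cite{Muller3D} to upgrade mutual RaRe convertibility to a single reversible transformation $\map U$ with $\sigma=\map U\rho$ are exactly the paper's moves; the only cosmetic difference is that the paper targets Strong Symmetry using two arbitrary pure maximal sets $\left\{\alpha_i\right\}$ and $\left\{\alpha_i'\right\}$, while you target Permutability within a single set and then invoke proposition~\ref{prop:permstrong}---either route is legitimate. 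Your constructive treatment of (2)$\Rightarrow$(1) (align the eigenbases via Strong Symmetry, decompose the doubly stochastic matrix via Birkhoff, implement the permutations reversibly via Permutability, and mix) is correct and is a welcome self-contained substitute for the paper's citation of Ref.~\cite{QPL15}; the reindexing computation checks out, and the composite $\map R\circ\map W$ is indeed RaRe.

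The one substantive gap is the final step of (1)$\Rightarrow$(2): you derive $\map U\alpha_i=\alpha_{\pi(i)}$ from ``uniqueness of the eigenstates for distinct eigenvalues'', but no such result is established in the paper---only uniqueness of the \emph{eigenvalues} is proved, and Refs.~\cite{Krumm-Muller,Krumm-thesis} show that in general theories diagonalisations of a single state need not be unique, so this property cannot be invoked as folklore; it is exactly what has to be proved. The paper closes this step with an explicit iterative argument that you should reproduce: apply the pure effect $\alpha_{\pi(1)}^{\dagger}$ dual to the top eigenstate of $\sigma$ to both sides of $\sigma=\map U\rho$; by lemma~\ref{lem:channelmatrix} the numbers $D_{ij}:=\left(\alpha_{\pi(i)}^{\dagger}\middle|\map U\middle|\alpha_{j}\right)$ form a doubly stochastic matrix, so $p_{1}=\sum_{j}D_{1j}p_{j}\le p_{1}$, and strict decrease of the spectrum forces $D_{11}=1$; the state-effect duality (proposition~\ref{prop:duality states-effects}) then yields $\map U\alpha_{1}=\alpha_{\pi(1)}$. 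Writing $\rho=p_{1}\alpha_{1}+\left(1-p_{1}\right)\rho_{1}$ and $\sigma=p_{1}\alpha_{\pi(1)}+\left(1-p_{1}\right)\sigma_{1}$, linearity gives $\map U\rho_{1}=\sigma_{1}$, and recursing $d-1$ times completes the identification. Once this is spelled out, your proof is essentially the paper's; left as an appeal to a ``known'' eigenstate-uniqueness property, the step you yourself flag as delicate is precisely the main technical content of the paper's proof and must not be black-boxed.
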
 \begin{proof}
The implication $2\Rightarrow 1$ was already proven in  Ref.~\cite{QPL15}
 To prove the implication $1\Rightarrow 2$, we show that condition $1$ implies the validity of Strong Symmetry. Let $\left\{\alpha_i\right\}_{i=1}^d$ and  $\left\{\alpha'_i\right\}_{i=1}^d$ be two pure maximal sets, and let  $\left\{p_i\right\}_{i=1}^d$ be a probability distribution, with $p_{1}  > p_{2}  >\dots  >p_{d}>0$.    
Consider the two states $\rho$ and $\sigma$ defined by $\rho=\sum_{i=1}^{d}p_{i}\alpha_{i}$,
and $\sigma=\sum_{i=1}^{d}p_{i}\alpha_{i}'$. Since the two states  $\rho $ and $\sigma$ have the same eigenvalues, the majorisation criterion guarantees that $\rho$ can be converted into $\sigma$ by a unital  channel, and vice versa (theorem~\ref{prop:unital channels}).    Now, our hypothesis is that convertibility under unital channels implies convertibility under RaRe channels.       The mutual convertibility of $\rho$ and $\sigma$ under RaRe channels implies that  there exists a reversible transformation $\mathcal{U}$ such that
$\sigma=\mathcal{U}\rho$ \cite{Muller3D,Chiribella-Scandolo-entanglement}.  
Applying the effect $\alpha_{1}'^{\dagger}$ to both sides of the
equality  $\sigma=\mathcal{U}\rho$, we obtain
\begin{align}
\nonumber p_{1} &=\left(\alpha_{1}'^{\dagger}\middle|\sigma\right)\\
\nonumber  &=\sum_{j=1}^{d}p_{j}\left(\alpha_{1}'^{\dagger}  \middle|\mathcal{U}\middle|\alpha_{j}\right)\\
\nonumber  &=\sum_{j=1}^{d}D_{1j} p_{j}\\
&\le p_{1},
\end{align}
having used the fact that  $D_{ij}:=\left(\alpha_{i}'^{\dagger}\middle|\mathcal{U}\middle|\alpha_{j}\right)$
are the entries of a doubly stochastic matrix (lemma~\ref{lem:channelmatrix}).    The above
condition is satisfied only if $\left(\alpha_{1}'^{\dagger}\middle|\mathcal{U}\middle|\alpha_{1}\right)=1$.     By the state-effect duality (proposition~\ref{prop:duality states-effects}), this  condition is equivalent to the condition 
\begin{align}\label{ualpha}
\mathcal{U}\alpha_{1}=\alpha_{1}' .
\end{align}   
 
 Now, decompose the states $\rho$ and $\sigma$ as 
\begin{align}
\rho   =  p_1  \alpha_1  +  \left( 1-p_1\right)     \rho_1    ,  \qquad  \rho_1  :  =    \frac{\sum_{i=2}^{d}p_{i}\alpha_{i}}{\sum_{i=2}^{d}p_{i}}
\end{align}
and 
\begin{align}
\sigma   =  p_1  \alpha'_1  +  \left( 1-p_1\right)     \sigma_1 ,  \qquad  \sigma_1  :  =    \frac{\sum_{i=2}^{d}p_{i}\alpha'_{i}}{\sum_{i=2}^{d}p_{i}}  .
\end{align}
Combining Eq.~\eqref{ualpha} with  the equality $\map U  \rho  =  \sigma$, we obtain the condition  $\map U  \rho_1  =  \sigma_1$.  Applying to $\rho_1$ and $\sigma_1$ the same argument we used for $\rho$ and $\sigma$,  we obtain the equality $\mathcal{U}\alpha_{2}=\alpha'_{2}$.
Iterating the procedure $d-1$ times, we finally obtain the 
equality $\mathcal{U}\alpha_{i}=\alpha'_{i}$ for every $i\in\left\lbrace 1,\dots,d\right\rbrace $. Hence,
every two maximal sets of perfectly distinguishable pure states are
connected by a reversible transformation. 
\end{proof}
Theorem~\ref{theo:unitalrare} gives  necessary and sufficient conditions for the equivalence of the three resource theories of microcanonical thermodynamics. In addition, it provides a thermodynamic motivation for the condition of unrestricted reversibility. 

\subsection{The equivalence in a nutshell}

The results of this Section can be summed up in the following theorem: 
\begin{thm}\label{theo:equivalence}
In every sharp theory with purification and unrestricted reversibility, the following are
equivalent
\begin{enumerate}
\item $\rho\succeq_{\mathsf{RaRe}}\sigma$
\item $\rho\succeq_{\mathsf{Noisy}}\sigma$
\item $\rho\succeq_{\mathsf{Unital}}\sigma$
\item $\mathbf{p}\succeq\mathbf{q}$
\end{enumerate}
for arbitrary normalised states $\rho$ and $\sigma$, where $\mathbf{p}$ and
$\mathbf{q}$ are the vectors of eigenvalues of $\rho$ and $\sigma$, respectively. 
\end{thm}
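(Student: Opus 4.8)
The plan is to assemble this theorem from results already established earlier, since all the heavy lifting has been done elsewhere. The statement asserts a four-way equivalence, which I would prove by combining a cycle of implications among conditions (1)--(3) with the already-proven bi-implication linking (3) and (4).

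First I would observe that the implications $(1)\Rightarrow(2)\Rightarrow(3)$ hold unconditionally in any sharp theory with purification, as a direct consequence of the set inclusions $\mathsf{RaRe}\subseteq\mathsf{Noisy}\subseteq\mathsf{Unital}$ recorded in Eq.~\eqref{eq:inclusions}. These are precisely the convertibility implications already displayed in Eq.~\eqref{ConvertibilityImplications}: a free operation belonging to a smaller class belongs a fortiori to a larger class, so convertibility in the RaRe theory entails convertibility in the Noisy theory, which in turn entails convertibility in the Unital theory. Next I would invoke Theorem~\ref{prop:unital channels} to supply the equivalence $(3)\Leftrightarrow(4)$, since the majorisation criterion states exactly that $\rho\succeq_{\mathsf{Unital}}\sigma$ holds if and only if $\mathbf{p}\succeq\mathbf{q}$.

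At this point the only missing link in the cycle is an implication turning the weakest relation back into the strongest, namely $(3)\Rightarrow(1)$. This is where the hypothesis of unrestricted reversibility enters. By Theorem~\ref{theo:unitalrare} (the direction $2\Rightarrow1$), unrestricted reversibility guarantees that the RaRe, Noisy, and Unital Resource Theories are equivalent in terms of state convertibility; in particular $\rho\succeq_{\mathsf{Unital}}\sigma$ implies $\rho\succeq_{\mathsf{RaRe}}\sigma$. This closes the cycle $(1)\Rightarrow(2)\Rightarrow(3)\Rightarrow(1)$, so conditions (1), (2), (3) are mutually equivalent, and appending $(3)\Leftrightarrow(4)$ yields the full four-way equivalence.

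There is no substantive obstacle beyond correct bookkeeping: the theorem is essentially a synthesis statement that collects the majorisation criterion of Theorem~\ref{prop:unital channels} and the equivalence criterion of Theorem~\ref{theo:unitalrare} into a single clean package. The genuine mathematical content lives in those two prior results---the doubly-stochastic-matrix construction underlying the majorisation theorem and the iterated argument (using the state--effect duality) that promotes unrestricted reversibility to equivalence of the preorders---so here I would simply state the equivalence and cite them, taking care to flag explicitly that conditions (3) and (4) are equivalent \emph{without} any assumption on reversibility, whereas the collapse of (1)--(3) onto a single relation is exactly what the unrestricted-reversibility hypothesis buys.
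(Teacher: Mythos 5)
Your proposal is correct and matches the paper's own proof essentially step for step: both derive $1\Rightarrow2\Rightarrow3$ from the inclusions~\eqref{eq:inclusions}, both use theorem~\ref{prop:unital channels} to link condition 3 with majorisation, and both close the cycle by invoking theorem~\ref{theo:unitalrare} under unrestricted reversibility (the paper routes the closure as $4\Rightarrow1$ while you route it as $3\Rightarrow1$, a purely cosmetic difference). Your added remark that $3\Leftrightarrow4$ holds without any reversibility assumption, while the collapse of 1--3 is exactly what unrestricted reversibility buys, is a faithful reading of the paper's structure.
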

\begin{proof}

The implications $1\Rightarrow 2$ and $2\Rightarrow 3$ follow  from the inclusions~\eqref{eq:inclusions}.   The implication  
 $3 \Rightarrow 4$ follows from theorem~\ref{prop:unital channels}.   The implication $4\Rightarrow 1$ follows from the equivalence between majorisation and unital convertibility, combined with theorem~\ref{theo:unitalrare}.
\end{proof}

Theorem~\ref{theo:equivalence} tells us that the Rare, Noisy, and Unital Resource Theories  are all  equivalent in terms of state convertibility. It is
important to stress that the equivalence holds despite the fact that
the three sets of operations are generally  different.

An important consequence of the equivalence is that the RaRe, Noisy, and Unital Resource Theories have the same quantitative measures of resourcefulness:  
\begin{prop}
Let $P:\St_1 \left( \rA\right)   \to \R$ be a real-valued function on the state space of system $\rA$.  If $P$ is a monotone under one of the sets $\mathsf{RaRe}$, $\mathsf{Noisy}$ and $\mathsf{Unital}$,  then it is a monotone under all the other sets. 
\end{prop}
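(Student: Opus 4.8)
The plan is to reduce the statement entirely to Theorem~\ref{theo:equivalence}, which already establishes that, in a sharp theory with purification enjoying unrestricted reversibility, the three preorders $\succeq_{\mathsf{RaRe}}$, $\succeq_{\mathsf{Noisy}}$, and $\succeq_{\mathsf{Unital}}$ coincide (all being equivalent to the majorisation relation on the vectors of eigenvalues). The decisive observation is that the notion of a monotone depends on the set of free operations \emph{only through the induced preorder}: by definition, $P$ is a monotone under $\set F$ precisely when $P\left(\rho\right) \ge P\left(\sigma\right)$ for every pair $\rho,\sigma$ with $\rho \succeq_{\set F} \sigma$. Hence, if two sets of free operations induce the same preorder, they induce the same monotones.

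First I would recall that, by Theorem~\ref{theo:equivalence}, for arbitrary normalised states $\rho$ and $\sigma$ of the same system one has the chain of equivalences
\begin{align}
\rho\succeq_{\mathsf{RaRe}}\sigma \quad\Longleftrightarrow\quad \rho\succeq_{\mathsf{Noisy}}\sigma \quad\Longleftrightarrow\quad \rho\succeq_{\mathsf{Unital}}\sigma \, ,
\end{align}
so that the three relations define one and the same preorder on $\St_1\left(\rA\right)$.

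Then I would unwind the definitions. Suppose $P$ is a monotone under one of the sets, say $\set F \in \left\{\mathsf{RaRe}, \mathsf{Noisy}, \mathsf{Unital}\right\}$, and let $\set G$ be any other of the three. Taking any $\rho,\sigma \in \St_1\left(\rA\right)$ with $\rho \succeq_{\set G} \sigma$, the equivalence above yields $\rho \succeq_{\set F} \sigma$, whence the assumed monotonicity of $P$ under $\set F$ gives $P\left(\rho\right) \ge P\left(\sigma\right)$. Since $\rho$ and $\sigma$ are arbitrary, $P$ is a monotone under $\set G$; and since $\set G$ is an arbitrary choice among the remaining sets, $P$ is simultaneously a monotone under all three.

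There is no genuine obstacle here: the whole substantive content is carried by Theorem~\ref{theo:equivalence}, and what remains is the purely formal remark that a monotone is an order-preserving function of the preorder alone. The only point worth stating carefully is that the equivalence of the \emph{preorders}—rather than of the sets of operations themselves, which are generally distinct—is exactly what licenses transferring monotonicity from one set to the others.
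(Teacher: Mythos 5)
Your proposal is correct and matches the paper's own reasoning: the paper states this proposition as an immediate consequence of Theorem~\ref{theo:equivalence}, precisely because the three sets induce one and the same preorder and monotonicity depends on the free operations only through that preorder. Your write-up simply makes explicit the one-line unwinding of definitions that the paper leaves implicit.
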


Since the preorders $\rho\succeq_{\mathsf{RaRe}}$,  $\rho\succeq_{\mathsf{Noisy}}$,    and $\succeq_{\mathsf{Unital}}$ coincide, we can say that the   RaRe, Noisy, and Unital Resource Theories define the  same notion of resource, which one may   call {\em purity}.    Accordingly, we will talk about ``the resource theory of purity'' without specifying the set of free operations.  

\section{The entanglement-thermodynamics duality  \label{sec:monotones}}  

We conclude the paper by showing that sharp theories with purification and unrestricted reversibility exhibit a fundamental duality between the resource theory of purity and the resource theory of entanglement  \cite{Chiribella-Scandolo-entanglement}.    
The entanglement-thermodynamics duality is a duality between two resource theories: the resource theory of purity (with RaRe, or Noisy, or Unital channels as free operations) and the resource theory of pure bipartite entanglement (with local operations and classical communication as  free operations). The content of the duality is that a pure bipartite state is more entangled than another if and only if the marginal states of the latter are purer than the marginal states of the former.   
More formally, the duality can be stated as follows \cite{Chiribella-Scandolo-entanglement}: 
\begin{defn}
A theory satisfies the \emph{entanglement-thermodynamics duality} if for every pair of systems $\rA$ and $\rB$, and every pair of pure states $  \Phi, \Psi \in  \Pur\St_1 \left( \rA\otimes \rB\right) $ the following are equivalent  
\begin{enumerate}
\item $\Phi$ can be converted into $\Psi$ by local operations and classical communication
\item the marginal of $\Psi$  on system $\rA$ can be converted into the marginal of $\Phi$ on system $\rA$ by a RaRe channel
\item the marginal of $\Psi$  on system $\rB$ can be converted into the marginal of $\Phi$ on system $\rB$ by a RaRe channel. 
\end{enumerate}
 \end{defn}

Our earlier work  \cite{Chiribella-Scandolo-entanglement}  showed that  the entanglement-thermodynamics duality can be proved from four axioms: Causality, Purity Preservation, Purification, and Local Exchangeability---the latter defined as follows: 
\begin{defn} 
A theory satisfies \emph{Local Exchangeability} if for every pair of systems $\rA$ and $\rB$, and for every pure state  $\Psi  \in \Pur\St \left( \rA\otimes \rB\right) $  there exist two channels $\map C \in  \Det\Transf  \left( \rA,\rB\right) $ and $\map D\in\Det\Transf\left( \rB,\rA\right) $ such that
\begin{equation}    \begin{aligned}\Qcircuit @C=1em @R=.7em @!R { &\multiprepareC{1}{\Psi}  & \qw \poloFantasmaCn{\rA} &\gate{\map C}  &   \qw \poloFantasmaCn{\rB}  & \qw   \\ & \pureghost{\Psi}  & \qw \poloFantasmaCn{\rB} & \gate{\map D}  &   \qw \poloFantasmaCn{\rA}  & \qw   }  
  \end{aligned}   
   ~=\!\!\!\!  
 \begin{aligned}\Qcircuit @C=1em @R=.7em @!R { &\multiprepareC{1}{\Psi}  & \qw \poloFantasmaCn{\rA} &\multigate{1}{\tt SWAP}  &   \qw \poloFantasmaCn{\rB}  & \qw   \\ & \pureghost{\Psi}  & \qw \poloFantasmaCn{\rB} & \ghost{\tt SWAP}  &   \qw \poloFantasmaCn{\rA}  & \qw   }  
  \end{aligned}   ~,   \end{equation}
  where $\tt SWAP$ is the channel that exchanges system $\rA$ and system $\rB$. 
\end{defn}  

Since Causality, Purity Preservation, and Purification are already assumed among our axioms,   proving the entanglement-thermodynamics duality  is reduced to proving the validity of Local Exchangeability.   The proof is presented in appendix~\ref{app:localexchange}, which backs the following claim: 
\begin{thm}
Every sharp theory with purification and unrestricted reversibility satisfies the entanglement-thermodynamics duality. 
\end{thm}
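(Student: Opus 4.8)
The plan is to reduce the theorem to the single ingredient that is not already among our hypotheses, and then to supply that ingredient. By definition a sharp theory with purification satisfies Causality, Purity Preservation, and Purification; and Ref.~\cite{Chiribella-Scandolo-entanglement} establishes that these three axioms \emph{together with} Local Exchangeability imply the entanglement-thermodynamics duality. Hence it suffices to show that unrestricted reversibility forces Local Exchangeability. Concretely, given any pure state $\Psi\in\Pur\St\left( \rA\otimes \rB\right)$, I must exhibit channels $\map C\in\Det\Transf\left( \rA,\rB\right)$ and $\map D\in\Det\Transf\left( \rB,\rA\right)$ such that $\left( \map C\otimes \map D\right)\Psi=\mathtt{SWAP}\,\Psi$, where $\mathtt{SWAP}$ is the reversible exchange channel guaranteed by the symmetry of composition.

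First I would bring $\Psi$ into a Schmidt-like canonical form. Diagonalise the marginal $\rho_\rA=\Tr_\rB\left[\Psi\right]=\sum_{i=1}^{r}p_i\,\alpha_i$ into perfectly distinguishable pure states and complete $\left\{\alpha_i\right\}$ to a pure maximal set of $\rA$; do likewise for $\rho_\rB=\Tr_\rA\left[\Psi\right]=\sum_{i=1}^{r}p_i\,\beta_i$ on $\rB$. The two marginals share the eigenvalues $\left\{p_i\right\}$, a fact that follows from the essential uniqueness of purification. Using essential uniqueness together with Strong Symmetry — the incarnation of unrestricted reversibility that transports one pure maximal set onto another — I would argue that, up to local reversible transformations, $\Psi$ is the canonical purification correlating $\alpha_i$ with $\beta_i$: applying the pure effect $\alpha_i^\dag$ on $\rA$ heralds $\beta_i$ on $\rB$ and vice versa, and $\Psi$ carries precisely the coherence between these branches that Purification fixes.

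Next I would construct the two local channels directly from this correlation. The channel $\map C\colon\rA\to\rB$ should act on the Schmidt sector as a coherent ``isometry'' sending $\alpha_i\mapsto\beta_i$, and $\map D\colon\rB\to\rA$ as the complementary map sending $\beta_i\mapsto\alpha_i$. I would realise each as a reversible transformation on a dilated system — obtained by adjoining an ancilla in its microcanonical state $\chi$ and invoking unrestricted reversibility to permute/relabel the relevant pure maximal set — followed by discarding a complementary factor with an appropriate deterministic effect. Verifying $\left( \map C\otimes \map D\right)\Psi=\mathtt{SWAP}\,\Psi$ then reduces to checking the action on the canonical form: the diagonal branches map as $\alpha_i\otimes\beta_i\mapsto\beta_i\otimes\alpha_i$ by construction, and the cross-branch coherence is reproduced because $\map C$ and $\map D$ are implemented coherently (via reversible dilations) rather than by a measure-and-prepare scheme.

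I expect the main obstacle to be exactly this last point. A naive measure-and-prepare realisation of $\map C$ and $\map D$ would match $\mathtt{SWAP}\,\Psi$ only on the diagonal and annihilate the interference between Schmidt branches, so the delicate task is to keep the branches coherent while still producing legitimate channels. This is where both axioms do genuine work: essential uniqueness of purification pins down the exact coherent correlation carried by $\Psi$ and by $\mathtt{SWAP}\,\Psi$, while unrestricted reversibility guarantees that the branch-matching maps relating the pure maximal sets $\left\{\alpha_i\right\}$ and $\left\{\beta_i\right\}$ can be implemented reversibly, hence without decoherence. The secondary technical nuisance is the case $d_\rA\neq d_\rB$, which I would handle by dilating both sides to a common system on which Strong Symmetry applies and then discarding an ancilla prepared in its microcanonical state.
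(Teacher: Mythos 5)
Your overall route is the same as the paper's: you reduce the theorem to Local Exchangeability by noting that Causality, Purity Preservation and Purification are already axioms and invoking the duality result of Ref.~\cite{Chiribella-Scandolo-entanglement}, and your idea of building branch-matching maps $\alpha_i\mapsto\beta_i$, $\beta_i\mapsto\alpha_i$ from reversible dilations is in the spirit of the paper's appendix~F. The reduction step is fine; the gap lies in how you establish Local Exchangeability itself.

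Two steps of your outline do not go through as stated. First, the ``canonical Schmidt form'' claim --- that $\alpha_i^\dag$ on $\rA$ heralds $\beta_i$ on $\rB$, and that $\Psi$ is, up to local reversibles, \emph{the} canonical purification correlating the two eigenbases --- is asserted rather than proved; the established input (from Ref.~\cite{TowardsThermo}) is only that the two marginals have the same spectrum, not a coherent branch structure. Second, and more seriously, verifying $\left(\map C\otimes\map D\right)\Psi=\mathtt{SWAP}\,\Psi$ ``by checking the action on the canonical form'' is not a valid argument in this setting: these theories need not satisfy Local Tomography (doubled quantum theory is the paper's own example), so agreement of two channels on the branches $\alpha_i\otimes\beta_j$, or even on all product states, does not determine their action on the entangled state $\Psi$; nor does implementing the maps ``coherently'' by itself force the cross-branch phases to come out right. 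The paper closes exactly this hole without ever invoking a canonical form: it constructs $\map P$ (prepare $\beta_1$, apply a reversible $\map U$ with $\map U\left(\beta_1\otimes\alpha_i\right)=\alpha_1\otimes\beta_i$, supplied by Permutability, then apply the \emph{pure} effect $\alpha_1^\dag$) and similarly $\map Q$; proves they are deterministic via $\left(u\middle|\map P\middle|\chi\right)=1$ using $\chi=\frac{1}{d}\sum_i\alpha_i$; notes that $\Psi'=\left(\map P\otimes\map Q\right)\Psi$ is pure by Purity Preservation and has the same marginals as $\mathtt{SWAP}\,\Psi$; and then applies the essential uniqueness of purification to obtain local reversibles $\map W_\rB,\map W_\rA$ with $\mathtt{SWAP}\,\Psi=\left(\map W_\rB\otimes\map W_\rA\right)\Psi'$, which are absorbed into $\map C:=\map W_\rB\map P$ and $\map D:=\map W_\rA\map Q$. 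That final uniqueness step --- comparing $\Psi'$ with $\mathtt{SWAP}\,\Psi$ and correcting by local reversibles --- is the missing idea in your proposal; you deploy essential uniqueness only to motivate the canonical form, not to conclude. Two minor further points: discarding the complementary factor ``with a deterministic effect,'' as you propose, would generally destroy the purity of $\map C$ on which your coherence argument relies (the paper instead uses the pure effect $\alpha_1^\dag$ and then must prove determinism separately); and the case $d_\rA\neq d_\rB$ requires no common dilation --- one simply takes $d_\rA\le d_\rB$ without loss of generality and indexes over $i\in\left\{1,\dots,d_\rA\right\}$.
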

As a consequence of the duality, the purity monotones characterised in the previous subsection are in one-to-one correspondence with measures of pure bipartite entanglement.  For example, Shannon-von Neumann entropy of the marginals of a pure bipartite state can be regarded as the \emph{entanglement entropy} \cite{Entanglement-entropy1,Entanglement-entropy2,Janzing2009}, an entropic measure of entanglement that is playing an increasingly important role in quantum field theory \cite{Ryu1,Ryu2} and condensed matter \cite{Area-law}.

\section{Conclusions\label{sec:Conclusions}}

In this work we developed a microcanonical framework for general physical theories. The framework is based on two requirements: the  uniqueness of the invariant probability distribution over pure states, needed to define the microcanonical state,   and the stability of the microcanonical  state under composition.  Under these requirements, we defined three resource theories, where free operations are random reversible channels, noisy operations, and unital channels, respectively.    
 We explored the connections between these three sets of operations  in a special class of physical theories, called sharp theories with purification, which  enable a fundamentally reversible description of every process.     In sharp theories with purification,  the sets of random reversible channels is contained in the set of noisy operations, which in turn is contained in the set of unital channels.   Convertibility under unital channels is equivalent to majorisation, which is a necessary condition for convertibility under the other sets of operations.  
    Majorisation becomes a sufficient condition for convertibility under \emph{all} sets of operations if and only if the dynamics allowed by the  theory have a property, called unrestricted reversibility.   In this case, one obtains the  entanglement-thermodynamics duality, which connects the entanglement of pure bipartite states with the purity of their marginals.   
    
    Our results identify  sharp theories with purification and unrestricted reversibility as the natural candidate for the information-theoretic foundation of microcanonical thermodynamics.    At the same time, it is interesting to go beyond the microcanonical scenario and to develop a general probabilistic framework for the canonical ensemble.    Some steps in this direction can be found  in a companion paper \cite{TowardsThermo}, where we give an operational definition of the Gibbs state and use it in an information-theoretic derivation of Landauer's principle.  These results are only the surface of a deep operational structure, where thermodynamic and information-theoretic features are interwoven at the level of  fundamental principles.   Many interesting directions of research  remain open, including, for example, an extension of the notion of thermomajorisation \cite{Horodecki-Oppenheim-2},  a derivation of the monotonicity of the relative entropy \cite{Petz}, and a derivation of  the ``second laws of thermodynamics''  \cite{2ndlaws} from operational axioms.

\begin{acknowledgments}
We thank the Referees for insightful comments that stimulated a substantial extension of our original manuscript. 

This work has been supported by the Foundational Questions Institute (FQXi-RFP3-1325 and FQXi-RFP-1608),   by the Canadian Institute for Advanced Research (CIFAR), by the 1000 Youth Fellowship Program of China and by the Hong Kong Research Grant Council through grant 17326616. 
 GC acknowledges the hospitality of the Simons Institute for the Theory of Computation and of Perimeter Institute.  
CMS  thanks
J Barrett, L Hardy, R Spekkens, M Hoban, C Lee, J Selby, and A Tosini for valuable discussions, acknowledges the support by EPSRC doctoral training grant and
by Oxford-Google Deepmind Graduate Scholarship, and  the hospitality of  Perimeter Institute.     Research at Perimeter Institute Theoretical Physics is supported in part by the Government of Canada through NSERC
and by the Province of Ontario through MRI.
\end{acknowledgments}

\bibliographystyle{apsrev4-1}
\bibliography{bibliographyQPL16}

\begin{thebibliography}{102}%
\makeatletter
\providecommand \@ifxundefined [1]{%
 \@ifx{#1\undefined}
}%
\providecommand \@ifnum [1]{%
 \ifnum #1\expandafter \@firstoftwo
 \else \expandafter \@secondoftwo
 \fi
}%
\providecommand \@ifx [1]{%
 \ifx #1\expandafter \@firstoftwo
 \else \expandafter \@secondoftwo
 \fi
}%
\providecommand \natexlab [1]{#1}%
\providecommand \enquote  [1]{``#1''}%
\providecommand \bibnamefont  [1]{#1}%
\providecommand \bibfnamefont [1]{#1}%
\providecommand \citenamefont [1]{#1}%
\providecommand \href@noop [0]{\@secondoftwo}%
\providecommand \href [0]{\begingroup \@sanitize@url \@href}%
\providecommand \@href[1]{\@@startlink{#1}\@@href}%
\providecommand \@@href[1]{\endgroup#1\@@endlink}%
\providecommand \@sanitize@url [0]{\catcode `\\12\catcode `\$12\catcode
  `\&12\catcode `\#12\catcode `\^12\catcode `\_12\catcode `\%12\relax}%
\providecommand \@@startlink[1]{}%
\providecommand \@@endlink[0]{}%
\providecommand \url  [0]{\begingroup\@sanitize@url \@url }%
\providecommand \@url [1]{\endgroup\@href {#1}{\urlprefix }}%
\providecommand \urlprefix  [0]{URL }%
\providecommand \Eprint [0]{\href }%
\providecommand \doibase [0]{http://dx.doi.org/}%
\providecommand \selectlanguage [0]{\@gobble}%
\providecommand \bibinfo  [0]{\@secondoftwo}%
\providecommand \bibfield  [0]{\@secondoftwo}%
\providecommand \translation [1]{[#1]}%
\providecommand \BibitemOpen [0]{}%
\providecommand \bibitemStop [0]{}%
\providecommand \bibitemNoStop [0]{.\EOS\space}%
\providecommand \EOS [0]{\spacefactor3000\relax}%
\providecommand \BibitemShut  [1]{\csname bibitem#1\endcsname}%
\let\auto@bib@innerbib\@empty
\bibitem [{\citenamefont {Goold}\ \emph {et~al.}(2016)\citenamefont {Goold},
  \citenamefont {Huber}, \citenamefont {Riera}, \citenamefont {del Rio},\ and\
  \citenamefont {Skrzypczyk}}]{delRio}%
  \BibitemOpen
  \bibfield  {author} {\bibinfo {author} {\bibfnamefont {J.}~\bibnamefont
  {Goold}}, \bibinfo {author} {\bibfnamefont {M.}~\bibnamefont {Huber}},
  \bibinfo {author} {\bibfnamefont {A.}~\bibnamefont {Riera}}, \bibinfo
  {author} {\bibfnamefont {L.}~\bibnamefont {del Rio}}, \ and\ \bibinfo
  {author} {\bibfnamefont {P.}~\bibnamefont {Skrzypczyk}},\ }\href {\doibase
  10.1088/1751-8113/49/14/143001} {\bibfield  {journal} {\bibinfo  {journal}
  {J. Phys. A}\ }\textbf {\bibinfo {volume} {49}},\ \bibinfo {pages} {143001}
  (\bibinfo {year} {2016})}\BibitemShut {NoStop}%
\bibitem [{\citenamefont {Vinjanampathy}\ and\ \citenamefont
  {Anders}(2016)}]{Anders-thermo}%
  \BibitemOpen
  \bibfield  {author} {\bibinfo {author} {\bibfnamefont {S.}~\bibnamefont
  {Vinjanampathy}}\ and\ \bibinfo {author} {\bibfnamefont {J.}~\bibnamefont
  {Anders}},\ }\href {\doibase 10.1080/00107514.2016.1201896} {\bibfield
  {journal} {\bibinfo  {journal} {Contemp. Phys.}\ }\textbf {\bibinfo {volume}
  {57}},\ \bibinfo {pages} {1} (\bibinfo {year} {2016})}\BibitemShut {NoStop}%
\bibitem [{\citenamefont {Millen}\ and\ \citenamefont {Xuereb}(2016)}]{Xuereb}%
  \BibitemOpen
  \bibfield  {author} {\bibinfo {author} {\bibfnamefont {J.}~\bibnamefont
  {Millen}}\ and\ \bibinfo {author} {\bibfnamefont {A.}~\bibnamefont
  {Xuereb}},\ }\href {http://stacks.iop.org/1367-2630/18/i=1/a=011002}
  {\bibfield  {journal} {\bibinfo  {journal} {New J. Phys.}\ }\textbf {\bibinfo
  {volume} {18}},\ \bibinfo {pages} {011002} (\bibinfo {year}
  {2016})}\BibitemShut {NoStop}%
\bibitem [{\citenamefont {Dahlsten}\ \emph {et~al.}(2011)\citenamefont
  {Dahlsten}, \citenamefont {Renner}, \citenamefont {Rieper},\ and\
  \citenamefont {Vedral}}]{Dahlsten-extractable}%
  \BibitemOpen
  \bibfield  {author} {\bibinfo {author} {\bibfnamefont {O.~C.~O.}\
  \bibnamefont {Dahlsten}}, \bibinfo {author} {\bibfnamefont {R.}~\bibnamefont
  {Renner}}, \bibinfo {author} {\bibfnamefont {E.}~\bibnamefont {Rieper}}, \
  and\ \bibinfo {author} {\bibfnamefont {V.}~\bibnamefont {Vedral}},\ }\href
  {\doibase 10.1088/1367-2630/13/5/053015} {\bibfield  {journal} {\bibinfo
  {journal} {New J. Phys.}\ }\textbf {\bibinfo {volume} {13}},\ \bibinfo
  {pages} {053015} (\bibinfo {year} {2011})}\BibitemShut {NoStop}%
\bibitem [{\citenamefont {\r{A}berg}(2013)}]{Aberg}%
  \BibitemOpen
  \bibfield  {author} {\bibinfo {author} {\bibfnamefont {J.}~\bibnamefont
  {\r{A}berg}},\ }\href {\doibase 10.1038/ncomms2712} {\bibfield  {journal}
  {\bibinfo  {journal} {Nat. Commun.}\ }\textbf {\bibinfo {volume} {4}},\
  \bibinfo {pages} {1925} (\bibinfo {year} {2013})}\BibitemShut {NoStop}%
\bibitem [{\citenamefont {Skrzypczyk}\ \emph {et~al.}(2014)\citenamefont
  {Skrzypczyk}, \citenamefont {Short},\ and\ \citenamefont
  {Popescu}}]{Popescu-single-shot}%
  \BibitemOpen
  \bibfield  {author} {\bibinfo {author} {\bibfnamefont {P.}~\bibnamefont
  {Skrzypczyk}}, \bibinfo {author} {\bibfnamefont {A.~J.}\ \bibnamefont
  {Short}}, \ and\ \bibinfo {author} {\bibfnamefont {S.}~\bibnamefont
  {Popescu}},\ }\href {\doibase 10.1038/ncomms5185} {\bibfield  {journal}
  {\bibinfo  {journal} {Nat. Commun.}\ }\textbf {\bibinfo {volume} {5}},\
  \bibinfo {pages} {4185} (\bibinfo {year} {2014})}\BibitemShut {NoStop}%
\bibitem [{\citenamefont {Gallego}\ \emph {et~al.}(2016)\citenamefont
  {Gallego}, \citenamefont {Eisert},\ and\ \citenamefont
  {Wilming}}]{Work-operational}%
  \BibitemOpen
  \bibfield  {author} {\bibinfo {author} {\bibfnamefont {R.}~\bibnamefont
  {Gallego}}, \bibinfo {author} {\bibfnamefont {J.}~\bibnamefont {Eisert}}, \
  and\ \bibinfo {author} {\bibfnamefont {H.}~\bibnamefont {Wilming}},\ }\href
  {\doibase 10.1088/1367-2630/18/10/103017} {\bibfield  {journal} {\bibinfo
  {journal} {New J. Phys.}\ }\textbf {\bibinfo {volume} {18}},\ \bibinfo
  {pages} {103017} (\bibinfo {year} {2016})}\BibitemShut {NoStop}%
\bibitem [{\citenamefont {Lostaglio}\ \emph
  {et~al.}(2015{\natexlab{a}})\citenamefont {Lostaglio}, \citenamefont
  {M\"uller},\ and\ \citenamefont {Pastena}}]{Lostaglio-Muller}%
  \BibitemOpen
  \bibfield  {author} {\bibinfo {author} {\bibfnamefont {M.}~\bibnamefont
  {Lostaglio}}, \bibinfo {author} {\bibfnamefont {M.~P.}\ \bibnamefont
  {M\"uller}}, \ and\ \bibinfo {author} {\bibfnamefont {M.}~\bibnamefont
  {Pastena}},\ }\href {\doibase 10.1103/PhysRevLett.115.150402} {\bibfield
  {journal} {\bibinfo  {journal} {Phys. Rev. Lett.}\ }\textbf {\bibinfo
  {volume} {115}},\ \bibinfo {pages} {150402} (\bibinfo {year}
  {2015}{\natexlab{a}})}\BibitemShut {NoStop}%
\bibitem [{\citenamefont {Gemmer}\ and\ \citenamefont
  {Anders}(2015)}]{Gemmer-single-shot}%
  \BibitemOpen
  \bibfield  {author} {\bibinfo {author} {\bibfnamefont {J.}~\bibnamefont
  {Gemmer}}\ and\ \bibinfo {author} {\bibfnamefont {J.}~\bibnamefont
  {Anders}},\ }\href {\doibase 10.1088/1367-2630/17/8/085006} {\bibfield
  {journal} {\bibinfo  {journal} {New J. Phys.}\ }\textbf {\bibinfo {volume}
  {17}},\ \bibinfo {pages} {085006} (\bibinfo {year} {2015})}\BibitemShut
  {NoStop}%
\bibitem [{\citenamefont {Yunger~Halpern}\ \emph
  {et~al.}(2015{\natexlab{a}})\citenamefont {Yunger~Halpern}, \citenamefont
  {Garner}, \citenamefont {Dahlsten},\ and\ \citenamefont
  {Vedral}}]{Garner-one-shot1}%
  \BibitemOpen
  \bibfield  {author} {\bibinfo {author} {\bibfnamefont {N.}~\bibnamefont
  {Yunger~Halpern}}, \bibinfo {author} {\bibfnamefont {A.~J.~P.}\ \bibnamefont
  {Garner}}, \bibinfo {author} {\bibfnamefont {O.~C.~O.}\ \bibnamefont
  {Dahlsten}}, \ and\ \bibinfo {author} {\bibfnamefont {V.}~\bibnamefont
  {Vedral}},\ }\href {\doibase 10.1088/1367-2630/17/9/095003} {\bibfield
  {journal} {\bibinfo  {journal} {New J. Phys.}\ }\textbf {\bibinfo {volume}
  {17}},\ \bibinfo {pages} {095003} (\bibinfo {year}
  {2015}{\natexlab{a}})}\BibitemShut {NoStop}%
\bibitem [{\citenamefont {Yunger~Halpern}\ \emph
  {et~al.}(2015{\natexlab{b}})\citenamefont {Yunger~Halpern}, \citenamefont
  {Garner}, \citenamefont {Dahlsten},\ and\ \citenamefont
  {Vedral}}]{Garner-one-shot2}%
  \BibitemOpen
  \bibfield  {author} {\bibinfo {author} {\bibfnamefont {N.}~\bibnamefont
  {Yunger~Halpern}}, \bibinfo {author} {\bibfnamefont {A.~J.~P.}\ \bibnamefont
  {Garner}}, \bibinfo {author} {\bibfnamefont {O.~C.~O.}\ \bibnamefont
  {Dahlsten}}, \ and\ \bibinfo {author} {\bibfnamefont {V.}~\bibnamefont
  {Vedral}},\ }\href {http://arxiv.org/abs/1505.06217} {\bibfield  {journal}
  {\bibinfo  {journal} {arXiv:1505.06217 [cond-mat.stat-mech]}\ } (\bibinfo
  {year} {2015}{\natexlab{b}})}\BibitemShut {NoStop}%
\bibitem [{\citenamefont {Salek}\ and\ \citenamefont
  {Wiesner}(2015)}]{Fluctuations1}%
  \BibitemOpen
  \bibfield  {author} {\bibinfo {author} {\bibfnamefont {S.}~\bibnamefont
  {Salek}}\ and\ \bibinfo {author} {\bibfnamefont {K.}~\bibnamefont
  {Wiesner}},\ }\href {http://arxiv.org/abs/1504.05111} {\bibfield  {journal}
  {\bibinfo  {journal} {arXiv:1504.05111 [quant-ph]}\ } (\bibinfo {year}
  {2015})}\BibitemShut {NoStop}%
\bibitem [{\citenamefont {Alhambra}\ \emph
  {et~al.}(2016{\natexlab{a}})\citenamefont {Alhambra}, \citenamefont
  {Oppenheim},\ and\ \citenamefont {Perry}}]{Fluctuations2}%
  \BibitemOpen
  \bibfield  {author} {\bibinfo {author} {\bibfnamefont {A.~M.}\ \bibnamefont
  {Alhambra}}, \bibinfo {author} {\bibfnamefont {J.}~\bibnamefont {Oppenheim}},
  \ and\ \bibinfo {author} {\bibfnamefont {C.}~\bibnamefont {Perry}},\ }\href
  {\doibase 10.1103/PhysRevX.6.041016} {\bibfield  {journal} {\bibinfo
  {journal} {Phys. Rev. X}\ }\textbf {\bibinfo {volume} {6}},\ \bibinfo {pages}
  {041016} (\bibinfo {year} {2016}{\natexlab{a}})}\BibitemShut {NoStop}%
\bibitem [{\citenamefont {Korzekwa}\ \emph {et~al.}(2016)\citenamefont
  {Korzekwa}, \citenamefont {Lostaglio}, \citenamefont {Oppenheim},\ and\
  \citenamefont {Jennings}}]{Work-coherence}%
  \BibitemOpen
  \bibfield  {author} {\bibinfo {author} {\bibfnamefont {K.}~\bibnamefont
  {Korzekwa}}, \bibinfo {author} {\bibfnamefont {M.}~\bibnamefont {Lostaglio}},
  \bibinfo {author} {\bibfnamefont {J.}~\bibnamefont {Oppenheim}}, \ and\
  \bibinfo {author} {\bibfnamefont {D.}~\bibnamefont {Jennings}},\ }\href
  {\doibase 10.1088/1367-2630/18/2/023045} {\bibfield  {journal} {\bibinfo
  {journal} {New J. Phys.}\ }\textbf {\bibinfo {volume} {18}},\ \bibinfo
  {pages} {023045} (\bibinfo {year} {2016})}\BibitemShut {NoStop}%
\bibitem [{\citenamefont {Wilming}\ \emph {et~al.}(2016)\citenamefont
  {Wilming}, \citenamefont {Gallego},\ and\ \citenamefont
  {Eisert}}]{2ndlaw-control}%
  \BibitemOpen
  \bibfield  {author} {\bibinfo {author} {\bibfnamefont {H.}~\bibnamefont
  {Wilming}}, \bibinfo {author} {\bibfnamefont {R.}~\bibnamefont {Gallego}}, \
  and\ \bibinfo {author} {\bibfnamefont {J.}~\bibnamefont {Eisert}},\ }\href
  {\doibase 10.1103/PhysRevE.93.042126} {\bibfield  {journal} {\bibinfo
  {journal} {Phys. Rev. E}\ }\textbf {\bibinfo {volume} {93}},\ \bibinfo
  {pages} {042126} (\bibinfo {year} {2016})}\BibitemShut {NoStop}%
\bibitem [{\citenamefont {Alhambra}\ \emph
  {et~al.}(2016{\natexlab{b}})\citenamefont {Alhambra}, \citenamefont
  {Masanes}, \citenamefont {Oppenheim},\ and\ \citenamefont
  {Perry}}]{2ndlaw-equality}%
  \BibitemOpen
  \bibfield  {author} {\bibinfo {author} {\bibfnamefont {A.~M.}\ \bibnamefont
  {Alhambra}}, \bibinfo {author} {\bibfnamefont {L.}~\bibnamefont {Masanes}},
  \bibinfo {author} {\bibfnamefont {J.}~\bibnamefont {Oppenheim}}, \ and\
  \bibinfo {author} {\bibfnamefont {C.}~\bibnamefont {Perry}},\ }\href
  {\doibase 10.1103/PhysRevX.6.041017} {\bibfield  {journal} {\bibinfo
  {journal} {Phys. Rev. X}\ }\textbf {\bibinfo {volume} {6}},\ \bibinfo {pages}
  {041017} (\bibinfo {year} {2016}{\natexlab{b}})}\BibitemShut {NoStop}%
\bibitem [{\citenamefont {Sparaciari}\ \emph {et~al.}(2017)\citenamefont
  {Sparaciari}, \citenamefont {Jennings},\ and\ \citenamefont
  {Oppenheim}}]{Sparaciari-2}%
  \BibitemOpen
  \bibfield  {author} {\bibinfo {author} {\bibfnamefont {C.}~\bibnamefont
  {Sparaciari}}, \bibinfo {author} {\bibfnamefont {D.}~\bibnamefont
  {Jennings}}, \ and\ \bibinfo {author} {\bibfnamefont {J.}~\bibnamefont
  {Oppenheim}},\ }\href {http://arxiv.org/abs/1701.01703} {\bibfield  {journal}
  {\bibinfo  {journal} {arXiv:1701.01703 [quant-ph]}\ } (\bibinfo {year}
  {2017})}\BibitemShut {NoStop}%
\bibitem [{\citenamefont {Horodecki}\ and\ \citenamefont
  {Oppenheim}(2013{\natexlab{a}})}]{Quantum-resource-1}%
  \BibitemOpen
  \bibfield  {author} {\bibinfo {author} {\bibfnamefont {M.}~\bibnamefont
  {Horodecki}}\ and\ \bibinfo {author} {\bibfnamefont {J.}~\bibnamefont
  {Oppenheim}},\ }\href {\doibase 10.1142/S0217979213450197} {\bibfield
  {journal} {\bibinfo  {journal} {Int. J. Mod. Phys. B}\ }\textbf {\bibinfo
  {volume} {27}},\ \bibinfo {pages} {1345019} (\bibinfo {year}
  {2013}{\natexlab{a}})}\BibitemShut {NoStop}%
\bibitem [{\citenamefont {Brand\~ao}\ and\ \citenamefont
  {Gour}(2015)}]{Quantum-resource-2}%
  \BibitemOpen
  \bibfield  {author} {\bibinfo {author} {\bibfnamefont {F.~G. S.~L.}\
  \bibnamefont {Brand\~ao}}\ and\ \bibinfo {author} {\bibfnamefont
  {G.}~\bibnamefont {Gour}},\ }\href {\doibase 10.1103/PhysRevLett.115.070503}
  {\bibfield  {journal} {\bibinfo  {journal} {Phys. Rev. Lett.}\ }\textbf
  {\bibinfo {volume} {115}},\ \bibinfo {pages} {070503} (\bibinfo {year}
  {2015})}\BibitemShut {NoStop}%
\bibitem [{\citenamefont {Coecke}\ \emph
  {et~al.}(2016{\natexlab{a}})\citenamefont {Coecke}, \citenamefont {Fritz},\
  and\ \citenamefont {Spekkens}}]{Resource-theories}%
  \BibitemOpen
  \bibfield  {author} {\bibinfo {author} {\bibfnamefont {B.}~\bibnamefont
  {Coecke}}, \bibinfo {author} {\bibfnamefont {T.}~\bibnamefont {Fritz}}, \
  and\ \bibinfo {author} {\bibfnamefont {R.~W.}\ \bibnamefont {Spekkens}},\
  }\href {\doibase http://dx.doi.org/10.1016/j.ic.2016.02.008} {\bibfield
  {journal} {\bibinfo  {journal} {Inform. Comput.}\ }\textbf {\bibinfo {volume}
  {250}},\ \bibinfo {pages} {59} (\bibinfo {year} {2016}{\natexlab{a}})},\
  \bibinfo {note} {quantum Physics and Logic}\BibitemShut {NoStop}%
\bibitem [{\citenamefont {Fritz}(2017)}]{Resource-monoid}%
  \BibitemOpen
  \bibfield  {author} {\bibinfo {author} {\bibfnamefont {T.}~\bibnamefont
  {Fritz}},\ }\href {\doibase 10.1017/S0960129515000444} {\bibfield  {journal}
  {\bibinfo  {journal} {Math. Structures Comput. Sci.}\ }\textbf {\bibinfo
  {volume} {27}},\ \bibinfo {pages} {850} (\bibinfo {year} {2017})}\BibitemShut
  {NoStop}%
\bibitem [{\citenamefont {del Rio}\ \emph {et~al.}(2015)\citenamefont {del
  Rio}, \citenamefont {Kr\"amer},\ and\ \citenamefont
  {Renner}}]{Resource-knowledge}%
  \BibitemOpen
  \bibfield  {author} {\bibinfo {author} {\bibfnamefont {L.}~\bibnamefont {del
  Rio}}, \bibinfo {author} {\bibfnamefont {L.}~\bibnamefont {Kr\"amer}}, \ and\
  \bibinfo {author} {\bibfnamefont {R.}~\bibnamefont {Renner}},\ }\href
  {http://arxiv.org/abs/1511.08818} {\bibfield  {journal} {\bibinfo  {journal}
  {arXiv:1511.08818 [quant-ph]}\ } (\bibinfo {year} {2015})}\BibitemShut
  {NoStop}%
\bibitem [{\citenamefont {Kr\"amer}\ and\ \citenamefont {del
  Rio}(2016)}]{Resource-currencies}%
  \BibitemOpen
  \bibfield  {author} {\bibinfo {author} {\bibfnamefont {L.}~\bibnamefont
  {Kr\"amer}}\ and\ \bibinfo {author} {\bibfnamefont {L.}~\bibnamefont {del
  Rio}},\ }\href {http://arxiv.org/abs/1605.01064} {\bibfield  {journal}
  {\bibinfo  {journal} {arXiv:1605.01064 [quant-ph]}\ } (\bibinfo {year}
  {2016})}\BibitemShut {NoStop}%
\bibitem [{\citenamefont {Janzing}\ \emph {et~al.}(2000)\citenamefont
  {Janzing}, \citenamefont {Wocjan}, \citenamefont {Zeier}, \citenamefont
  {Geiss},\ and\ \citenamefont {Beth}}]{Athermality1}%
  \BibitemOpen
  \bibfield  {author} {\bibinfo {author} {\bibfnamefont {D.}~\bibnamefont
  {Janzing}}, \bibinfo {author} {\bibfnamefont {P.}~\bibnamefont {Wocjan}},
  \bibinfo {author} {\bibfnamefont {R.}~\bibnamefont {Zeier}}, \bibinfo
  {author} {\bibfnamefont {R.}~\bibnamefont {Geiss}}, \ and\ \bibinfo {author}
  {\bibfnamefont {T.}~\bibnamefont {Beth}},\ }\href {\doibase
  10.1023/A:1026422630734} {\bibfield  {journal} {\bibinfo  {journal} {Int. J.
  Theor. Phys.}\ }\textbf {\bibinfo {volume} {39}},\ \bibinfo {pages} {2717}
  (\bibinfo {year} {2000})}\BibitemShut {NoStop}%
\bibitem [{\citenamefont {Horodecki}\ and\ \citenamefont
  {Oppenheim}(2013{\natexlab{b}})}]{Horodecki-Oppenheim-2}%
  \BibitemOpen
  \bibfield  {author} {\bibinfo {author} {\bibfnamefont {M.}~\bibnamefont
  {Horodecki}}\ and\ \bibinfo {author} {\bibfnamefont {J.}~\bibnamefont
  {Oppenheim}},\ }\href {\doibase 10.1038/ncomms3059} {\bibfield  {journal}
  {\bibinfo  {journal} {Nat. Commun.}\ }\textbf {\bibinfo {volume} {4}},\
  \bibinfo {pages} {2059} (\bibinfo {year} {2013}{\natexlab{b}})}\BibitemShut
  {NoStop}%
\bibitem [{\citenamefont {Brand\~ao}\ \emph {et~al.}(2015)\citenamefont
  {Brand\~ao}, \citenamefont {Horodecki}, \citenamefont {Ng}, \citenamefont
  {Oppenheim},\ and\ \citenamefont {Wehner}}]{2ndlaws}%
  \BibitemOpen
  \bibfield  {author} {\bibinfo {author} {\bibfnamefont {F.~G. S.~L.}\
  \bibnamefont {Brand\~ao}}, \bibinfo {author} {\bibfnamefont {M.}~\bibnamefont
  {Horodecki}}, \bibinfo {author} {\bibfnamefont {N.}~\bibnamefont {Ng}},
  \bibinfo {author} {\bibfnamefont {J.}~\bibnamefont {Oppenheim}}, \ and\
  \bibinfo {author} {\bibfnamefont {S.}~\bibnamefont {Wehner}},\ }\href
  {\doibase 10.1073/pnas.1411728112} {\bibfield  {journal} {\bibinfo  {journal}
  {Proc. Natl. Acad. Sci.}\ }\textbf {\bibinfo {volume} {112}},\ \bibinfo
  {pages} {3275} (\bibinfo {year} {2015})}\BibitemShut {NoStop}%
\bibitem [{\citenamefont {\'{C}wikli\'{n}ski}\ \emph
  {et~al.}(2015)\citenamefont {\'{C}wikli\'{n}ski}, \citenamefont
  {Studzi\'{n}ski}, \citenamefont {Horodecki},\ and\ \citenamefont
  {Oppenheim}}]{quantum2ndlaw}%
  \BibitemOpen
  \bibfield  {author} {\bibinfo {author} {\bibfnamefont {P.}~\bibnamefont
  {\'{C}wikli\'{n}ski}}, \bibinfo {author} {\bibfnamefont {M.}~\bibnamefont
  {Studzi\'{n}ski}}, \bibinfo {author} {\bibfnamefont {M.}~\bibnamefont
  {Horodecki}}, \ and\ \bibinfo {author} {\bibfnamefont {J.}~\bibnamefont
  {Oppenheim}},\ }\href {\doibase 10.1103/PhysRevLett.115.210403} {\bibfield
  {journal} {\bibinfo  {journal} {Phys. Rev. Lett.}\ }\textbf {\bibinfo
  {volume} {115}},\ \bibinfo {pages} {210403} (\bibinfo {year}
  {2015})}\BibitemShut {NoStop}%
\bibitem [{\citenamefont {Lostaglio}\ \emph
  {et~al.}(2015{\natexlab{b}})\citenamefont {Lostaglio}, \citenamefont
  {Jennings},\ and\ \citenamefont {Rudolph}}]{Lostaglio-Jennings-Rudolph}%
  \BibitemOpen
  \bibfield  {author} {\bibinfo {author} {\bibfnamefont {M.}~\bibnamefont
  {Lostaglio}}, \bibinfo {author} {\bibfnamefont {D.}~\bibnamefont {Jennings}},
  \ and\ \bibinfo {author} {\bibfnamefont {T.}~\bibnamefont {Rudolph}},\ }\href
  {\doibase 10.1038/ncomms7383} {\bibfield  {journal} {\bibinfo  {journal}
  {Nat. Commun.}\ }\textbf {\bibinfo {volume} {6}},\ \bibinfo {pages} {6383}
  (\bibinfo {year} {2015}{\natexlab{b}})}\BibitemShut {NoStop}%
\bibitem [{\citenamefont {Lostaglio}\ \emph
  {et~al.}(2015{\natexlab{c}})\citenamefont {Lostaglio}, \citenamefont
  {Korzekwa}, \citenamefont {Jennings},\ and\ \citenamefont
  {Rudolph}}]{Lostaglio-coherence}%
  \BibitemOpen
  \bibfield  {author} {\bibinfo {author} {\bibfnamefont {M.}~\bibnamefont
  {Lostaglio}}, \bibinfo {author} {\bibfnamefont {K.}~\bibnamefont {Korzekwa}},
  \bibinfo {author} {\bibfnamefont {D.}~\bibnamefont {Jennings}}, \ and\
  \bibinfo {author} {\bibfnamefont {T.}~\bibnamefont {Rudolph}},\ }\href
  {\doibase 10.1103/PhysRevX.5.021001} {\bibfield  {journal} {\bibinfo
  {journal} {Phys. Rev. X}\ }\textbf {\bibinfo {volume} {5}},\ \bibinfo {pages}
  {021001} (\bibinfo {year} {2015}{\natexlab{c}})}\BibitemShut {NoStop}%
\bibitem [{\citenamefont {Yunger~Halpern}\ and\ \citenamefont
  {Renes}(2016)}]{Nicole-beyond}%
  \BibitemOpen
  \bibfield  {author} {\bibinfo {author} {\bibfnamefont {N.}~\bibnamefont
  {Yunger~Halpern}}\ and\ \bibinfo {author} {\bibfnamefont {J.~M.}\
  \bibnamefont {Renes}},\ }\href {\doibase 10.1103/PhysRevE.93.022126}
  {\bibfield  {journal} {\bibinfo  {journal} {Phys. Rev. E}\ }\textbf {\bibinfo
  {volume} {93}},\ \bibinfo {pages} {022126} (\bibinfo {year}
  {2016})}\BibitemShut {NoStop}%
\bibitem [{\citenamefont {Yunger~Halpern}\ \emph {et~al.}(2016)\citenamefont
  {Yunger~Halpern}, \citenamefont {Faist}, \citenamefont {Oppenheim},\ and\
  \citenamefont {Winter}}]{Nicole-non-commuting}%
  \BibitemOpen
  \bibfield  {author} {\bibinfo {author} {\bibfnamefont {N.}~\bibnamefont
  {Yunger~Halpern}}, \bibinfo {author} {\bibfnamefont {P.}~\bibnamefont
  {Faist}}, \bibinfo {author} {\bibfnamefont {J.}~\bibnamefont {Oppenheim}}, \
  and\ \bibinfo {author} {\bibfnamefont {A.}~\bibnamefont {Winter}},\ }\href
  {\doibase 10.1038/ncomms12051} {\bibfield  {journal} {\bibinfo  {journal}
  {Nat. Commun.}\ }\textbf {\bibinfo {volume} {7}},\ \bibinfo {pages} {12051}
  (\bibinfo {year} {2016})}\BibitemShut {NoStop}%
\bibitem [{\citenamefont {Guryanova}\ \emph {et~al.}(2016)\citenamefont
  {Guryanova}, \citenamefont {Popescu}, \citenamefont {Short}, \citenamefont
  {Silva},\ and\ \citenamefont {Skrzypczyk}}]{Non-commuting-Bristol}%
  \BibitemOpen
  \bibfield  {author} {\bibinfo {author} {\bibfnamefont {Y.}~\bibnamefont
  {Guryanova}}, \bibinfo {author} {\bibfnamefont {S.}~\bibnamefont {Popescu}},
  \bibinfo {author} {\bibfnamefont {A.~J.}\ \bibnamefont {Short}}, \bibinfo
  {author} {\bibfnamefont {R.}~\bibnamefont {Silva}}, \ and\ \bibinfo {author}
  {\bibfnamefont {P.}~\bibnamefont {Skrzypczyk}},\ }\href {\doibase
  10.1038/ncomms12049} {\bibfield  {journal} {\bibinfo  {journal} {Nat.
  Commun.}\ }\textbf {\bibinfo {volume} {7}},\ \bibinfo {pages} {12049}
  (\bibinfo {year} {2016})}\BibitemShut {NoStop}%
\bibitem [{\citenamefont {Lostaglio}\ \emph {et~al.}(2017)\citenamefont
  {Lostaglio}, \citenamefont {Jennings},\ and\ \citenamefont
  {Rudolph}}]{David-non-commuting}%
  \BibitemOpen
  \bibfield  {author} {\bibinfo {author} {\bibfnamefont {M.}~\bibnamefont
  {Lostaglio}}, \bibinfo {author} {\bibfnamefont {D.}~\bibnamefont {Jennings}},
  \ and\ \bibinfo {author} {\bibfnamefont {T.}~\bibnamefont {Rudolph}},\ }\href
  {\doibase 10.1088/1367-2630/aa617f} {\bibfield  {journal} {\bibinfo
  {journal} {New J. Phys.}\ }\textbf {\bibinfo {volume} {19}},\ \bibinfo
  {pages} {043008} (\bibinfo {year} {2017})}\BibitemShut {NoStop}%
\bibitem [{\citenamefont {Richens}\ \emph {et~al.}(2017)\citenamefont
  {Richens}, \citenamefont {Alhambra},\ and\ \citenamefont
  {Masanes}}]{Richens-2ndlaw}%
  \BibitemOpen
  \bibfield  {author} {\bibinfo {author} {\bibfnamefont {J.~G.}\ \bibnamefont
  {Richens}}, \bibinfo {author} {\bibfnamefont {A.~M.}\ \bibnamefont
  {Alhambra}}, \ and\ \bibinfo {author} {\bibfnamefont {L.}~\bibnamefont
  {Masanes}},\ }\href {http://arxiv.org/abs/1702.03357} {\bibfield  {journal}
  {\bibinfo  {journal} {arXiv:1702.03357 [quant-ph]}\ } (\bibinfo {year}
  {2017})}\BibitemShut {NoStop}%
\bibitem [{\citenamefont {Masanes}\ and\ \citenamefont
  {Oppenheim}(2017)}]{3rdlaw}%
  \BibitemOpen
  \bibfield  {author} {\bibinfo {author} {\bibfnamefont {L.}~\bibnamefont
  {Masanes}}\ and\ \bibinfo {author} {\bibfnamefont {J.}~\bibnamefont
  {Oppenheim}},\ }\href {\doibase 10.1038/ncomms14538} {\bibfield  {journal}
  {\bibinfo  {journal} {Nat. Commun.}\ }\textbf {\bibinfo {volume} {8}},\
  \bibinfo {pages} {14538} (\bibinfo {year} {2017})}\BibitemShut {NoStop}%
\bibitem [{\citenamefont {Brand\~ao}\ \emph {et~al.}(2013)\citenamefont
  {Brand\~ao}, \citenamefont {Horodecki}, \citenamefont {Oppenheim},
  \citenamefont {Renes},\ and\ \citenamefont {Spekkens}}]{Athermality2}%
  \BibitemOpen
  \bibfield  {author} {\bibinfo {author} {\bibfnamefont {F.~G. S.~L.}\
  \bibnamefont {Brand\~ao}}, \bibinfo {author} {\bibfnamefont {M.}~\bibnamefont
  {Horodecki}}, \bibinfo {author} {\bibfnamefont {J.}~\bibnamefont
  {Oppenheim}}, \bibinfo {author} {\bibfnamefont {J.~M.}\ \bibnamefont
  {Renes}}, \ and\ \bibinfo {author} {\bibfnamefont {R.~W.}\ \bibnamefont
  {Spekkens}},\ }\href {\doibase 10.1103/PhysRevLett.111.250404} {\bibfield
  {journal} {\bibinfo  {journal} {Phys. Rev. Lett.}\ }\textbf {\bibinfo
  {volume} {111}},\ \bibinfo {pages} {250404} (\bibinfo {year}
  {2013})}\BibitemShut {NoStop}%
\bibitem [{\citenamefont {Egloff}\ \emph {et~al.}(2015)\citenamefont {Egloff},
  \citenamefont {Dahlsten}, \citenamefont {Renner},\ and\ \citenamefont
  {Vedral}}]{Egloff}%
  \BibitemOpen
  \bibfield  {author} {\bibinfo {author} {\bibfnamefont {D.}~\bibnamefont
  {Egloff}}, \bibinfo {author} {\bibfnamefont {O.~C.~O.}\ \bibnamefont
  {Dahlsten}}, \bibinfo {author} {\bibfnamefont {R.}~\bibnamefont {Renner}}, \
  and\ \bibinfo {author} {\bibfnamefont {V.}~\bibnamefont {Vedral}},\ }\href
  {\doibase 10.1088/1367-2630/17/7/073001} {\bibfield  {journal} {\bibinfo
  {journal} {New J. Phys.}\ }\textbf {\bibinfo {volume} {17}},\ \bibinfo
  {pages} {073001} (\bibinfo {year} {2015})}\BibitemShut {NoStop}%
\bibitem [{\citenamefont {Renes}(2014)}]{Renes}%
  \BibitemOpen
  \bibfield  {author} {\bibinfo {author} {\bibfnamefont {J.~M.}\ \bibnamefont
  {Renes}},\ }\href {\doibase 10.1140/epjp/i2014-14153-8} {\bibfield  {journal}
  {\bibinfo  {journal} {Eur. Phys. J. Plus}\ }\textbf {\bibinfo {volume}
  {129}},\ \bibinfo {pages} {1} (\bibinfo {year} {2014})}\BibitemShut {NoStop}%
\bibitem [{\citenamefont {Faist}\ \emph {et~al.}(2015)\citenamefont {Faist},
  \citenamefont {Oppenheim},\ and\ \citenamefont
  {Renner}}]{Gibbs-preserving-maps}%
  \BibitemOpen
  \bibfield  {author} {\bibinfo {author} {\bibfnamefont {P.}~\bibnamefont
  {Faist}}, \bibinfo {author} {\bibfnamefont {J.}~\bibnamefont {Oppenheim}}, \
  and\ \bibinfo {author} {\bibfnamefont {R.}~\bibnamefont {Renner}},\ }\href
  {\doibase 10.1088/1367-2630/17/4/043003} {\bibfield  {journal} {\bibinfo
  {journal} {New J. Phys.}\ }\textbf {\bibinfo {volume} {17}},\ \bibinfo
  {pages} {043003} (\bibinfo {year} {2015})}\BibitemShut {NoStop}%
\bibitem [{\citenamefont {Alhambra}\ \emph {et~al.}(2015)\citenamefont
  {Alhambra}, \citenamefont {Wehner}, \citenamefont {Wilde},\ and\
  \citenamefont {Woods}}]{Alhambra-reversibility}%
  \BibitemOpen
  \bibfield  {author} {\bibinfo {author} {\bibfnamefont {A.~M.}\ \bibnamefont
  {Alhambra}}, \bibinfo {author} {\bibfnamefont {S.}~\bibnamefont {Wehner}},
  \bibinfo {author} {\bibfnamefont {M.~M.}\ \bibnamefont {Wilde}}, \ and\
  \bibinfo {author} {\bibfnamefont {M.~P.}\ \bibnamefont {Woods}},\ }\href
  {http://arxiv.org/abs/1506.08145} {\bibfield  {journal} {\bibinfo  {journal}
  {arXiv:1506.08145 [quant-ph]}\ } (\bibinfo {year} {2015})}\BibitemShut
  {NoStop}%
\bibitem [{\citenamefont {M\"uller}\ and\ \citenamefont
  {Pastena}(2016)}]{Muller-generalization}%
  \BibitemOpen
  \bibfield  {author} {\bibinfo {author} {\bibfnamefont {M.~P.}\ \bibnamefont
  {M\"uller}}\ and\ \bibinfo {author} {\bibfnamefont {M.}~\bibnamefont
  {Pastena}},\ }\href {\doibase 10.1109/TIT.2016.2528285} {\bibfield  {journal}
  {\bibinfo  {journal} {IEEE Trans. Inf. Theory}\ }\textbf {\bibinfo {volume}
  {62}},\ \bibinfo {pages} {1711} (\bibinfo {year} {2016})}\BibitemShut
  {NoStop}%
\bibitem [{\citenamefont {Sparaciari}\ \emph {et~al.}(2016)\citenamefont
  {Sparaciari}, \citenamefont {Oppenheim},\ and\ \citenamefont
  {Fritz}}]{Sparaciari}%
  \BibitemOpen
  \bibfield  {author} {\bibinfo {author} {\bibfnamefont {C.}~\bibnamefont
  {Sparaciari}}, \bibinfo {author} {\bibfnamefont {J.}~\bibnamefont
  {Oppenheim}}, \ and\ \bibinfo {author} {\bibfnamefont {T.}~\bibnamefont
  {Fritz}},\ }\href {http://arxiv.org/abs/1607.01302} {\bibfield  {journal}
  {\bibinfo  {journal} {arXiv:1607.01302 [quant-ph]}\ } (\bibinfo {year}
  {2016})}\BibitemShut {NoStop}%
\bibitem [{\citenamefont {Uhlmann}(1971)}]{Uhlmann1}%
  \BibitemOpen
  \bibfield  {author} {\bibinfo {author} {\bibfnamefont {A.}~\bibnamefont
  {Uhlmann}},\ }\href@noop {} {\bibfield  {journal} {\bibinfo  {journal} {Wiss.
  Z. Karl-Marx Univ. Leipzig}\ }\textbf {\bibinfo {volume} {20}},\ \bibinfo
  {pages} {633} (\bibinfo {year} {1971})}\BibitemShut {NoStop}%
\bibitem [{\citenamefont {Uhlmann}(1972)}]{Uhlmann2}%
  \BibitemOpen
  \bibfield  {author} {\bibinfo {author} {\bibfnamefont {A.}~\bibnamefont
  {Uhlmann}},\ }\href@noop {} {\bibfield  {journal} {\bibinfo  {journal} {Wiss.
  Z. Karl-Marx Univ. Leipzig.}\ }\textbf {\bibinfo {volume} {21}},\ \bibinfo
  {pages} {421} (\bibinfo {year} {1972})}\BibitemShut {NoStop}%
\bibitem [{\citenamefont {Uhlmann}(1973)}]{Uhlmann3}%
  \BibitemOpen
  \bibfield  {author} {\bibinfo {author} {\bibfnamefont {A.}~\bibnamefont
  {Uhlmann}},\ }\href@noop {} {\bibfield  {journal} {\bibinfo  {journal} {Wiss.
  Z. Karl-Marx Univ. Leipzig}\ }\textbf {\bibinfo {volume} {22}},\ \bibinfo
  {pages} {139} (\bibinfo {year} {1973})}\BibitemShut {NoStop}%
\bibitem [{\citenamefont {Horodecki}\ \emph
  {et~al.}(2003{\natexlab{a}})\citenamefont {Horodecki}, \citenamefont
  {Horodecki}, \citenamefont {Horodecki}, \citenamefont {Horodecki},
  \citenamefont {Oppenheim}, \citenamefont {Sen(De)},\ and\ \citenamefont
  {Sen}}]{Local-Information}%
  \BibitemOpen
  \bibfield  {author} {\bibinfo {author} {\bibfnamefont {M.}~\bibnamefont
  {Horodecki}}, \bibinfo {author} {\bibfnamefont {K.}~\bibnamefont
  {Horodecki}}, \bibinfo {author} {\bibfnamefont {P.}~\bibnamefont
  {Horodecki}}, \bibinfo {author} {\bibfnamefont {R.}~\bibnamefont
  {Horodecki}}, \bibinfo {author} {\bibfnamefont {J.}~\bibnamefont
  {Oppenheim}}, \bibinfo {author} {\bibfnamefont {A.}~\bibnamefont {Sen(De)}},
  \ and\ \bibinfo {author} {\bibfnamefont {U.}~\bibnamefont {Sen}},\ }\href
  {\doibase 10.1103/PhysRevLett.90.100402} {\bibfield  {journal} {\bibinfo
  {journal} {Phys. Rev. Lett.}\ }\textbf {\bibinfo {volume} {90}},\ \bibinfo
  {pages} {100402} (\bibinfo {year} {2003}{\natexlab{a}})}\BibitemShut
  {NoStop}%
\bibitem [{\citenamefont {Horodecki}\ \emph
  {et~al.}(2003{\natexlab{b}})\citenamefont {Horodecki}, \citenamefont
  {Horodecki},\ and\ \citenamefont {Oppenheim}}]{Horodecki-Oppenheim}%
  \BibitemOpen
  \bibfield  {author} {\bibinfo {author} {\bibfnamefont {M.}~\bibnamefont
  {Horodecki}}, \bibinfo {author} {\bibfnamefont {P.}~\bibnamefont
  {Horodecki}}, \ and\ \bibinfo {author} {\bibfnamefont {J.}~\bibnamefont
  {Oppenheim}},\ }\href {\doibase 10.1103/PhysRevA.67.062104} {\bibfield
  {journal} {\bibinfo  {journal} {Phys. Rev. A}\ }\textbf {\bibinfo {volume}
  {67}},\ \bibinfo {pages} {062104} (\bibinfo {year}
  {2003}{\natexlab{b}})}\BibitemShut {NoStop}%
\bibitem [{\citenamefont {Gour}\ \emph {et~al.}(2015)\citenamefont {Gour},
  \citenamefont {M\"uller}, \citenamefont {Narasimhachar}, \citenamefont
  {Spekkens},\ and\ \citenamefont {Yunger~Halpern}}]{Nicole}%
  \BibitemOpen
  \bibfield  {author} {\bibinfo {author} {\bibfnamefont {G.}~\bibnamefont
  {Gour}}, \bibinfo {author} {\bibfnamefont {M.~P.}\ \bibnamefont {M\"uller}},
  \bibinfo {author} {\bibfnamefont {V.}~\bibnamefont {Narasimhachar}}, \bibinfo
  {author} {\bibfnamefont {R.~W.}\ \bibnamefont {Spekkens}}, \ and\ \bibinfo
  {author} {\bibfnamefont {N.}~\bibnamefont {Yunger~Halpern}},\ }\href
  {\doibase 10.1016/j.physrep.2015.04.003} {\bibfield  {journal} {\bibinfo
  {journal} {Phys. Rep.}\ }\textbf {\bibinfo {volume} {583}},\ \bibinfo {pages}
  {1} (\bibinfo {year} {2015})}\BibitemShut {NoStop}%
\bibitem [{\citenamefont {Landau}\ and\ \citenamefont
  {Streater}(1993)}]{Streater}%
  \BibitemOpen
  \bibfield  {author} {\bibinfo {author} {\bibfnamefont {L.~J.}\ \bibnamefont
  {Landau}}\ and\ \bibinfo {author} {\bibfnamefont {R.~F.}\ \bibnamefont
  {Streater}},\ }\href {\doibase 10.1016/0024-3795(93)90274-R} {\bibfield
  {journal} {\bibinfo  {journal} {Linear Algebra Appl.}\ }\textbf {\bibinfo
  {volume} {193}},\ \bibinfo {pages} {107} (\bibinfo {year}
  {1993})}\BibitemShut {NoStop}%
\bibitem [{\citenamefont {Mendl}\ and\ \citenamefont
  {Wolf}(2009)}]{Mendl-Wolf}%
  \BibitemOpen
  \bibfield  {author} {\bibinfo {author} {\bibfnamefont {C.~B.}\ \bibnamefont
  {Mendl}}\ and\ \bibinfo {author} {\bibfnamefont {M.~M.}\ \bibnamefont
  {Wolf}},\ }\href {\doibase 10.1007/s00220-009-0824-2} {\bibfield  {journal}
  {\bibinfo  {journal} {Commun. Math. Phys.}\ }\textbf {\bibinfo {volume}
  {289}},\ \bibinfo {pages} {1057} (\bibinfo {year} {2009})}\BibitemShut
  {NoStop}%
\bibitem [{\citenamefont {Shor}(2010)}]{Shor}%
  \BibitemOpen
  \bibfield  {author} {\bibinfo {author} {\bibfnamefont {P.~W.}\ \bibnamefont
  {Shor}},\ }\href@noop {} {\enquote {\bibinfo {title} {Structure of unital
  maps and the asymptotic quantum birkhoff conjecture},}\ }\bibinfo
  {howpublished} {presentation} (\bibinfo {year} {2010})\BibitemShut {NoStop}%
\bibitem [{\citenamefont {Haagerup}\ and\ \citenamefont
  {Musat}(2011)}]{Haagerup-Musat}%
  \BibitemOpen
  \bibfield  {author} {\bibinfo {author} {\bibfnamefont {U.}~\bibnamefont
  {Haagerup}}\ and\ \bibinfo {author} {\bibfnamefont {M.}~\bibnamefont
  {Musat}},\ }\href {\doibase 10.1007/s00220-011-1216-y} {\bibfield  {journal}
  {\bibinfo  {journal} {Commun. Math. Phys.}\ }\textbf {\bibinfo {volume}
  {303}},\ \bibinfo {pages} {555} (\bibinfo {year} {2011})}\BibitemShut
  {NoStop}%
\bibitem [{\citenamefont {Streltsov}\ \emph {et~al.}(2016)\citenamefont
  {Streltsov}, \citenamefont {Kampermann}, \citenamefont {W{\"o}lk},
  \citenamefont {Gessner},\ and\ \citenamefont
  {Bru{\ss}}}]{streltsov2016maximal}%
  \BibitemOpen
  \bibfield  {author} {\bibinfo {author} {\bibfnamefont {A.}~\bibnamefont
  {Streltsov}}, \bibinfo {author} {\bibfnamefont {H.}~\bibnamefont
  {Kampermann}}, \bibinfo {author} {\bibfnamefont {S.}~\bibnamefont
  {W{\"o}lk}}, \bibinfo {author} {\bibfnamefont {M.}~\bibnamefont {Gessner}}, \
  and\ \bibinfo {author} {\bibfnamefont {D.}~\bibnamefont {Bru{\ss}}},\ }\href
  {http://arxiv.org/abs/1612.07570} {\bibfield  {journal} {\bibinfo  {journal}
  {arXiv:1612.07570 [quant-ph]}\ } (\bibinfo {year} {2016})}\BibitemShut
  {NoStop}%
\bibitem [{\citenamefont {{Hardy}}(2001)}]{Hardy-informational-1}%
  \BibitemOpen
  \bibfield  {author} {\bibinfo {author} {\bibfnamefont {L.}~\bibnamefont
  {{Hardy}}},\ }\href {http://arxiv.org/abs/quant-ph/0101012} {\bibfield
  {journal} {\bibinfo  {journal} {arXiv quant-ph/0101012}\ } (\bibinfo {year}
  {2001})}\BibitemShut {NoStop}%
\bibitem [{\citenamefont {Barrett}(2007)}]{Barrett}%
  \BibitemOpen
  \bibfield  {author} {\bibinfo {author} {\bibfnamefont {J.}~\bibnamefont
  {Barrett}},\ }\href {\doibase 10.1103/PhysRevA.75.032304} {\bibfield
  {journal} {\bibinfo  {journal} {Phys. Rev. A}\ }\textbf {\bibinfo {volume}
  {75}},\ \bibinfo {pages} {032304} (\bibinfo {year} {2007})}\BibitemShut
  {NoStop}%
\bibitem [{\citenamefont {Barnum}\ \emph {et~al.}(2007)\citenamefont {Barnum},
  \citenamefont {Barrett}, \citenamefont {Leifer},\ and\ \citenamefont
  {Wilce}}]{Barnum-1}%
  \BibitemOpen
  \bibfield  {author} {\bibinfo {author} {\bibfnamefont {H.}~\bibnamefont
  {Barnum}}, \bibinfo {author} {\bibfnamefont {J.}~\bibnamefont {Barrett}},
  \bibinfo {author} {\bibfnamefont {M.}~\bibnamefont {Leifer}}, \ and\ \bibinfo
  {author} {\bibfnamefont {A.}~\bibnamefont {Wilce}},\ }\href {\doibase
  10.1103/PhysRevLett.99.240501} {\bibfield  {journal} {\bibinfo  {journal}
  {Phys. Rev. Lett.}\ }\textbf {\bibinfo {volume} {99}},\ \bibinfo {pages}
  {240501} (\bibinfo {year} {2007})}\BibitemShut {NoStop}%
\bibitem [{\citenamefont {Chiribella}\ \emph {et~al.}(2010)\citenamefont
  {Chiribella}, \citenamefont {D'Ariano},\ and\ \citenamefont
  {Perinotti}}]{Chiribella-purification}%
  \BibitemOpen
  \bibfield  {author} {\bibinfo {author} {\bibfnamefont {G.}~\bibnamefont
  {Chiribella}}, \bibinfo {author} {\bibfnamefont {G.~M.}\ \bibnamefont
  {D'Ariano}}, \ and\ \bibinfo {author} {\bibfnamefont {P.}~\bibnamefont
  {Perinotti}},\ }\href {\doibase 10.1103/PhysRevA.81.062348} {\bibfield
  {journal} {\bibinfo  {journal} {Phys. Rev. A}\ }\textbf {\bibinfo {volume}
  {81}},\ \bibinfo {pages} {062348} (\bibinfo {year} {2010})}\BibitemShut
  {NoStop}%
\bibitem [{\citenamefont {Chiribella}(2014)}]{Chiribella14}%
  \BibitemOpen
  \bibfield  {author} {\bibinfo {author} {\bibfnamefont {G.}~\bibnamefont
  {Chiribella}},\ }in\ \href {\doibase 10.4204/EPTCS.172.1} {\emph {\bibinfo
  {booktitle} {{\rm Proceedings 11th workshop on} Quantum Physics and Logic,
  {\rm Kyoto, Japan, 4-6th June 2014}}}},\ \bibinfo {series} {Electronic
  Proceedings in Theoretical Computer Science}, Vol.\ \bibinfo {volume} {172},\
  \bibinfo {editor} {edited by\ \bibinfo {editor} {\bibfnamefont
  {B.}~\bibnamefont {Coecke}}, \bibinfo {editor} {\bibfnamefont
  {I.}~\bibnamefont {Hasuo}}, \ and\ \bibinfo {editor} {\bibfnamefont
  {P.}~\bibnamefont {Panangaden}}}\ (\bibinfo {year} {2014})\ pp.\ \bibinfo
  {pages} {1--14}\BibitemShut {NoStop}%
\bibitem [{\citenamefont {Chiribella}\ \emph {et~al.}(2016)\citenamefont
  {Chiribella}, \citenamefont {D'Ariano},\ and\ \citenamefont
  {Perinotti}}]{QuantumFromPrinciples}%
  \BibitemOpen
  \bibfield  {author} {\bibinfo {author} {\bibfnamefont {G.}~\bibnamefont
  {Chiribella}}, \bibinfo {author} {\bibfnamefont {G.~M.}\ \bibnamefont
  {D'Ariano}}, \ and\ \bibinfo {author} {\bibfnamefont {P.}~\bibnamefont
  {Perinotti}},\ }\enquote {\bibinfo {title} {Quantum theory: Informational
  foundations and foils},}\ \ (\bibinfo  {publisher} {Springer Netherlands},\
  \bibinfo {address} {Dordrecht},\ \bibinfo {year} {2016})\ Chap.\ \bibinfo
  {chapter} {Quantum from Principles}, pp.\ \bibinfo {pages}
  {171--221}\BibitemShut {NoStop}%
\bibitem [{\citenamefont {Chiribella}\ and\ \citenamefont
  {Spekkens}(2016)}]{chiribella2016quantum}%
  \BibitemOpen
  \bibinfo {editor} {\bibfnamefont {G.}~\bibnamefont {Chiribella}}\ and\
  \bibinfo {editor} {\bibfnamefont {R.~W.}\ \bibnamefont {Spekkens}},\ eds.,\
  \href {\doibase 10.1007/978-94-017-7303-4} {\emph {\bibinfo {title} {Quantum
  Theory: Informational Foundations and Foils}}},\ \bibinfo {series}
  {Fundamental Theories of Physics}, Vol.\ \bibinfo {volume} {181}\ (\bibinfo
  {publisher} {Springer Netherlands},\ \bibinfo {address} {Dordrecht},\
  \bibinfo {year} {2016})\BibitemShut {NoStop}%
\bibitem [{\citenamefont {D'Ariano}\ \emph {et~al.}(2017)\citenamefont
  {D'Ariano}, \citenamefont {Chiribella},\ and\ \citenamefont
  {Perinotti}}]{chiribella2017quantum}%
  \BibitemOpen
  \bibfield  {author} {\bibinfo {author} {\bibfnamefont {G.~M.}\ \bibnamefont
  {D'Ariano}}, \bibinfo {author} {\bibfnamefont {G.}~\bibnamefont
  {Chiribella}}, \ and\ \bibinfo {author} {\bibfnamefont {P.}~\bibnamefont
  {Perinotti}},\ }\href@noop {} {\emph {\bibinfo {title} {Quantum Theory from
  First Principles: An Informational Approach}}}\ (\bibinfo  {publisher}
  {Cambridge University Press},\ \bibinfo {address} {Cambridge},\ \bibinfo
  {year} {2017})\BibitemShut {NoStop}%
\bibitem [{\citenamefont {Chiribella}\ and\ \citenamefont
  {Scandolo}(2016)}]{TowardsThermo}%
  \BibitemOpen
  \bibfield  {author} {\bibinfo {author} {\bibfnamefont {G.}~\bibnamefont
  {Chiribella}}\ and\ \bibinfo {author} {\bibfnamefont {C.~M.}\ \bibnamefont
  {Scandolo}},\ }\href {http://arxiv.org/abs/1608.04459} {\bibfield  {journal}
  {\bibinfo  {journal} {arXiv:1608.04459 [quant-ph]}\ } (\bibinfo {year}
  {2016})}\BibitemShut {NoStop}%
\bibitem [{\citenamefont {Lee}\ and\ \citenamefont
  {Selby}(2016{\natexlab{a}})}]{Control-reversible}%
  \BibitemOpen
  \bibfield  {author} {\bibinfo {author} {\bibfnamefont {C.~M.}\ \bibnamefont
  {Lee}}\ and\ \bibinfo {author} {\bibfnamefont {J.~H.}\ \bibnamefont
  {Selby}},\ }\href {\doibase 10.1088/1367-2630/18/3/033023} {\bibfield
  {journal} {\bibinfo  {journal} {New J. Phys.}\ }\textbf {\bibinfo {volume}
  {18}},\ \bibinfo {pages} {033023} (\bibinfo {year}
  {2016}{\natexlab{a}})}\BibitemShut {NoStop}%
\bibitem [{\citenamefont {Lee}\ and\ \citenamefont
  {Selby}(2016{\natexlab{b}})}]{Lee-Selby-Grover}%
  \BibitemOpen
  \bibfield  {author} {\bibinfo {author} {\bibfnamefont {C.~M.}\ \bibnamefont
  {Lee}}\ and\ \bibinfo {author} {\bibfnamefont {J.~H.}\ \bibnamefont
  {Selby}},\ }\href {\doibase 10.1088/1367-2630/18/9/093047} {\bibfield
  {journal} {\bibinfo  {journal} {New J. Phys.}\ }\textbf {\bibinfo {volume}
  {18}},\ \bibinfo {pages} {093047} (\bibinfo {year}
  {2016}{\natexlab{b}})}\BibitemShut {NoStop}%
\bibitem [{\citenamefont {Barnum}\ \emph {et~al.}(2017)\citenamefont {Barnum},
  \citenamefont {Lee}, \citenamefont {Scandolo},\ and\ \citenamefont
  {Selby}}]{HOP}%
  \BibitemOpen
  \bibfield  {author} {\bibinfo {author} {\bibfnamefont {H.}~\bibnamefont
  {Barnum}}, \bibinfo {author} {\bibfnamefont {C.}~\bibnamefont {Lee}},
  \bibinfo {author} {\bibfnamefont {C.~M.}\ \bibnamefont {Scandolo}}, \ and\
  \bibinfo {author} {\bibfnamefont {J.}~\bibnamefont {Selby}},\ }\href
  {\doibase 10.3390/e19060253} {\bibfield  {journal} {\bibinfo  {journal}
  {Entropy}\ }\textbf {\bibinfo {volume} {19}},\ \bibinfo {pages} {253}
  (\bibinfo {year} {2017})}\BibitemShut {NoStop}%
\bibitem [{\citenamefont {Marshall}\ \emph {et~al.}(2011)\citenamefont
  {Marshall}, \citenamefont {Olkin},\ and\ \citenamefont {Arnold}}]{Olkin}%
  \BibitemOpen
  \bibfield  {author} {\bibinfo {author} {\bibfnamefont {A.~W.}\ \bibnamefont
  {Marshall}}, \bibinfo {author} {\bibfnamefont {I.}~\bibnamefont {Olkin}}, \
  and\ \bibinfo {author} {\bibfnamefont {B.~C.}\ \bibnamefont {Arnold}},\
  }\href {\doibase 10.1007/978-0-387-68276-1} {\emph {\bibinfo {title}
  {Inequalities: Theory of Majorization and Its Applications}}},\ Springer
  Series in Statistics\ (\bibinfo  {publisher} {Springer},\ \bibinfo {address}
  {New York},\ \bibinfo {year} {2011})\BibitemShut {NoStop}%
\bibitem [{\citenamefont {Chiribella}\ and\ \citenamefont
  {Scandolo}(2015{\natexlab{a}})}]{Chiribella-Scandolo-entanglement}%
  \BibitemOpen
  \bibfield  {author} {\bibinfo {author} {\bibfnamefont {G.}~\bibnamefont
  {Chiribella}}\ and\ \bibinfo {author} {\bibfnamefont {C.~M.}\ \bibnamefont
  {Scandolo}},\ }\href {\doibase 10.1088/1367-2630/17/10/103027} {\bibfield
  {journal} {\bibinfo  {journal} {New J. Phys.}\ }\textbf {\bibinfo {volume}
  {17}},\ \bibinfo {pages} {103027} (\bibinfo {year}
  {2015}{\natexlab{a}})}\BibitemShut {NoStop}%
\bibitem [{\citenamefont {Chiribella}\ \emph {et~al.}(2011)\citenamefont
  {Chiribella}, \citenamefont {D'Ariano},\ and\ \citenamefont
  {Perinotti}}]{Chiribella-informational}%
  \BibitemOpen
  \bibfield  {author} {\bibinfo {author} {\bibfnamefont {G.}~\bibnamefont
  {Chiribella}}, \bibinfo {author} {\bibfnamefont {G.~M.}\ \bibnamefont
  {D'Ariano}}, \ and\ \bibinfo {author} {\bibfnamefont {P.}~\bibnamefont
  {Perinotti}},\ }\href {\doibase 10.1103/PhysRevA.84.012311} {\bibfield
  {journal} {\bibinfo  {journal} {Phys. Rev. A}\ }\textbf {\bibinfo {volume}
  {84}},\ \bibinfo {pages} {012311} (\bibinfo {year} {2011})}\BibitemShut
  {NoStop}%
\bibitem [{\citenamefont {Hardy}(2011{\natexlab{a}})}]{hardy2011}%
  \BibitemOpen
  \bibfield  {author} {\bibinfo {author} {\bibfnamefont {L.}~\bibnamefont
  {Hardy}},\ }\enquote {\bibinfo {title} {Foliable operational structures for
  general probabilistic theories},}\ in\ \href {\doibase
  10.1017/CBO9780511976971.013} {\emph {\bibinfo {booktitle} {Deep Beauty:
  Understanding the Quantum World through Mathematical Innovation}}},\ \bibinfo
  {editor} {edited by\ \bibinfo {editor} {\bibfnamefont {H.}~\bibnamefont
  {Halvorson}}}\ (\bibinfo  {publisher} {Cambridge University Press},\ \bibinfo
  {address} {Cambridge},\ \bibinfo {year} {2011})\ pp.\ \bibinfo {pages}
  {409--442}\BibitemShut {NoStop}%
\bibitem [{\citenamefont {Hardy}(2011{\natexlab{b}})}]{Hardy-informational-2}%
  \BibitemOpen
  \bibfield  {author} {\bibinfo {author} {\bibfnamefont {L.}~\bibnamefont
  {Hardy}},\ }\href {http://arxiv.org/abs/1104.2066} {\bibfield  {journal}
  {\bibinfo  {journal} {arXiv:1104.2066 [quant-ph]}\ } (\bibinfo {year}
  {2011}{\natexlab{b}})}\BibitemShut {NoStop}%
\bibitem [{\citenamefont {Hardy}(2016)}]{hardy2013}%
  \BibitemOpen
  \bibfield  {author} {\bibinfo {author} {\bibfnamefont {L.}~\bibnamefont
  {Hardy}},\ }\enquote {\bibinfo {title} {Quantum theory: Informational
  foundations and foils},}\ \ (\bibinfo  {publisher} {Springer Netherlands},\
  \bibinfo {address} {Dordrecht},\ \bibinfo {year} {2016})\ Chap.\ \bibinfo
  {chapter} {Reconstructing Quantum Theory}, pp.\ \bibinfo {pages}
  {223--248}\BibitemShut {NoStop}%
\bibitem [{\citenamefont {Wilce}(2010)}]{Wilce-formalism}%
  \BibitemOpen
  \bibfield  {author} {\bibinfo {author} {\bibfnamefont {A.}~\bibnamefont
  {Wilce}},\ }\href {\doibase 10.1007/s10701-010-9410-x} {\bibfield  {journal}
  {\bibinfo  {journal} {Found. Phys.}\ }\textbf {\bibinfo {volume} {40}},\
  \bibinfo {pages} {434} (\bibinfo {year} {2010})}\BibitemShut {NoStop}%
\bibitem [{\citenamefont {Barnum}\ and\ \citenamefont
  {Wilce}(2011)}]{Barnum-2}%
  \BibitemOpen
  \bibfield  {author} {\bibinfo {author} {\bibfnamefont {H.}~\bibnamefont
  {Barnum}}\ and\ \bibinfo {author} {\bibfnamefont {A.}~\bibnamefont {Wilce}},\
  }\href {\doibase 10.1016/j.entcs.2011.01.002} {\bibfield  {journal} {\bibinfo
   {journal} {Electronic Notes in Theoretical Computer Science}\ }\textbf
  {\bibinfo {volume} {270}},\ \bibinfo {pages} {3} (\bibinfo {year} {2011})},\
  \bibinfo {note} {proceedings of the Joint 5th International Workshop on
  Quantum Physics and Logic and 4th Workshop on Developments in Computational
  Models (QPL/DCM 2008)}\BibitemShut {NoStop}%
\bibitem [{\citenamefont {Barnum}\ and\ \citenamefont
  {Wilce}(2016)}]{Barnum2016}%
  \BibitemOpen
  \bibfield  {author} {\bibinfo {author} {\bibfnamefont {H.}~\bibnamefont
  {Barnum}}\ and\ \bibinfo {author} {\bibfnamefont {A.}~\bibnamefont {Wilce}},\
  }\enquote {\bibinfo {title} {Post-classical probability theory},}\ in\ \href
  {\doibase 10.1007/978-94-017-7303-4\_11} {\emph {\bibinfo {booktitle}
  {Quantum Theory: Informational Foundations and Foils}}},\ \bibinfo {editor}
  {edited by\ \bibinfo {editor} {\bibfnamefont {G.}~\bibnamefont {Chiribella}}\
  and\ \bibinfo {editor} {\bibfnamefont {R.~W.}\ \bibnamefont {Spekkens}}}\
  (\bibinfo  {publisher} {Springer Netherlands},\ \bibinfo {address}
  {Dordrecht},\ \bibinfo {year} {2016})\ pp.\ \bibinfo {pages}
  {367--420}\BibitemShut {NoStop}%
\bibitem [{\citenamefont {Abramsky}\ and\ \citenamefont
  {Coecke}(2004)}]{Abramsky2004}%
  \BibitemOpen
  \bibfield  {author} {\bibinfo {author} {\bibfnamefont {S.}~\bibnamefont
  {Abramsky}}\ and\ \bibinfo {author} {\bibfnamefont {B.}~\bibnamefont
  {Coecke}},\ }in\ \href {\doibase 10.1109/LICS.2004.1319636} {\emph {\bibinfo
  {booktitle} {Proceedings of the 19th Annual IEEE Symposium on Logic in
  Computer Science}}}\ (\bibinfo {year} {2004})\ pp.\ \bibinfo {pages}
  {415--425}\BibitemShut {NoStop}%
\bibitem [{\citenamefont {Coecke}(2006)}]{Coecke-Kindergarten}%
  \BibitemOpen
  \bibfield  {author} {\bibinfo {author} {\bibfnamefont {B.}~\bibnamefont
  {Coecke}},\ }\href {\doibase 10.1063/1.2158713} {\bibfield  {journal}
  {\bibinfo  {journal} {AIP Conference Proceedings}\ }\textbf {\bibinfo
  {volume} {810}},\ \bibinfo {pages} {81} (\bibinfo {year} {2006})}\BibitemShut
  {NoStop}%
\bibitem [{\citenamefont {Coecke}(2010)}]{Coecke-Picturalism}%
  \BibitemOpen
  \bibfield  {author} {\bibinfo {author} {\bibfnamefont {B.}~\bibnamefont
  {Coecke}},\ }\href {\doibase 10.1080/00107510903257624} {\bibfield  {journal}
  {\bibinfo  {journal} {Contemp. Phys.}\ }\textbf {\bibinfo {volume} {51}},\
  \bibinfo {pages} {59} (\bibinfo {year} {2010})}\BibitemShut {NoStop}%
\bibitem [{\citenamefont {Selinger}(2011)}]{Selinger}%
  \BibitemOpen
  \bibfield  {author} {\bibinfo {author} {\bibfnamefont {P.}~\bibnamefont
  {Selinger}},\ }in\ \href {\doibase 10.1007/978-3-642-12821-9\_4} {\emph
  {\bibinfo {booktitle} {New Structures for Physics}}},\ \bibinfo {series}
  {Lecture Notes in Physics}, Vol.\ \bibinfo {volume} {813},\ \bibinfo {editor}
  {edited by\ \bibinfo {editor} {\bibfnamefont {B.}~\bibnamefont {Coecke}}}\
  (\bibinfo  {publisher} {Springer},\ \bibinfo {address} {Berlin, Heidelberg},\
  \bibinfo {year} {2011})\ pp.\ \bibinfo {pages} {289--356}\BibitemShut
  {NoStop}%
\bibitem [{\citenamefont {Coecke}\ \emph
  {et~al.}(2016{\natexlab{b}})\citenamefont {Coecke}, \citenamefont {Duncan},
  \citenamefont {Kissinger},\ and\ \citenamefont {Wang}}]{Coecke2016}%
  \BibitemOpen
  \bibfield  {author} {\bibinfo {author} {\bibfnamefont {B.}~\bibnamefont
  {Coecke}}, \bibinfo {author} {\bibfnamefont {R.}~\bibnamefont {Duncan}},
  \bibinfo {author} {\bibfnamefont {A.}~\bibnamefont {Kissinger}}, \ and\
  \bibinfo {author} {\bibfnamefont {Q.}~\bibnamefont {Wang}},\ }\enquote
  {\bibinfo {title} {Quantum theory: Informational foundations and foils},}\ \
  (\bibinfo  {publisher} {Springer Netherlands},\ \bibinfo {address}
  {Dordrecht},\ \bibinfo {year} {2016})\ Chap.\ \bibinfo {chapter} {Generalised
  Compositional Theories and Diagrammatic Reasoning}, pp.\ \bibinfo {pages}
  {309--366}\BibitemShut {NoStop}%
\bibitem [{\citenamefont {Coecke}\ and\ \citenamefont
  {Kissinger}(2017)}]{Coecke2017picturing}%
  \BibitemOpen
  \bibfield  {author} {\bibinfo {author} {\bibfnamefont {B.}~\bibnamefont
  {Coecke}}\ and\ \bibinfo {author} {\bibfnamefont {A.}~\bibnamefont
  {Kissinger}},\ }\href@noop {} {\emph {\bibinfo {title} {Picturing Quantum
  Processes: {A} First Course in Quantum Theory and Diagrammatic Reasoning}}}\
  (\bibinfo  {publisher} {Cambridge University Press},\ \bibinfo {address}
  {Cambridge},\ \bibinfo {year} {2017})\BibitemShut {NoStop}%
\bibitem [{\citenamefont {Chiribella}\ and\ \citenamefont
  {Yuan}(2016)}]{chiribella2016bridging}%
  \BibitemOpen
  \bibfield  {author} {\bibinfo {author} {\bibfnamefont {G.}~\bibnamefont
  {Chiribella}}\ and\ \bibinfo {author} {\bibfnamefont {X.}~\bibnamefont
  {Yuan}},\ }\href {\doibase 10.1016/j.ic.2016.02.006} {\bibfield  {journal}
  {\bibinfo  {journal} {Inform. Comput.}\ }\textbf {\bibinfo {volume} {250}},\
  \bibinfo {pages} {15} (\bibinfo {year} {2016})}\BibitemShut {NoStop}%
\bibitem [{\citenamefont {Daki\'c}\ and\ \citenamefont
  {Brukner}(2011)}]{Brukner}%
  \BibitemOpen
  \bibfield  {author} {\bibinfo {author} {\bibfnamefont {B.}~\bibnamefont
  {Daki\'c}}\ and\ \bibinfo {author} {\bibfnamefont {v.}~\bibnamefont
  {Brukner}},\ }\enquote {\bibinfo {title} {Quantum theory and beyond: Is
  entanglement special?}}\ in\ \href {\doibase 10.1017/CBO9780511976971.011}
  {\emph {\bibinfo {booktitle} {Deep Beauty: Understanding the Quantum World
  through Mathematical Innovation}}},\ \bibinfo {editor} {edited by\ \bibinfo
  {editor} {\bibfnamefont {H.}~\bibnamefont {Halvorson}}}\ (\bibinfo
  {publisher} {Cambridge University Press},\ \bibinfo {address} {Cambridge},\
  \bibinfo {year} {2011})\ pp.\ \bibinfo {pages} {365--392}\BibitemShut
  {NoStop}%
\bibitem [{\citenamefont {Masanes}\ and\ \citenamefont
  {M\"uller}(2011)}]{masanes}%
  \BibitemOpen
  \bibfield  {author} {\bibinfo {author} {\bibfnamefont {L.}~\bibnamefont
  {Masanes}}\ and\ \bibinfo {author} {\bibfnamefont {M.~P.}\ \bibnamefont
  {M\"uller}},\ }\href@noop {} {\bibfield  {journal} {\bibinfo  {journal} {New
  J. Phys.}\ }\textbf {\bibinfo {volume} {13}},\ \bibinfo {pages} {063001}
  (\bibinfo {year} {2011})}\BibitemShut {NoStop}%
\bibitem [{\citenamefont {Barnum}\ \emph {et~al.}(2014)\citenamefont {Barnum},
  \citenamefont {M\"uller},\ and\ \citenamefont
  {Ududec}}]{Barnum-interference}%
  \BibitemOpen
  \bibfield  {author} {\bibinfo {author} {\bibfnamefont {H.}~\bibnamefont
  {Barnum}}, \bibinfo {author} {\bibfnamefont {M.~P.}\ \bibnamefont
  {M\"uller}}, \ and\ \bibinfo {author} {\bibfnamefont {C.}~\bibnamefont
  {Ududec}},\ }\href {\doibase 10.1088/1367-2630/16/12/123029} {\bibfield
  {journal} {\bibinfo  {journal} {New J. Phys.}\ }\textbf {\bibinfo {volume}
  {16}},\ \bibinfo {pages} {123029} (\bibinfo {year} {2014})}\BibitemShut
  {NoStop}%
\bibitem [{\citenamefont {Birkhoff}(1946)}]{Birkhoff}%
  \BibitemOpen
  \bibfield  {author} {\bibinfo {author} {\bibfnamefont {G.}~\bibnamefont
  {Birkhoff}},\ }\href@noop {} {\bibfield  {journal} {\bibinfo  {journal}
  {Univ. Nac. Tucum{\'a}n Rev. Ser. A}\ }\textbf {\bibinfo {volume} {5}},\
  \bibinfo {pages} {147} (\bibinfo {year} {1946})}\BibitemShut {NoStop}%
\bibitem [{\citenamefont {Chiribella}\ and\ \citenamefont
  {Scandolo}(2015{\natexlab{b}})}]{Scandolo14}%
  \BibitemOpen
  \bibfield  {author} {\bibinfo {author} {\bibfnamefont {G.}~\bibnamefont
  {Chiribella}}\ and\ \bibinfo {author} {\bibfnamefont {C.~M.}\ \bibnamefont
  {Scandolo}},\ }\href {\doibase 10.1051/epjconf/20149503003} {\bibfield
  {journal} {\bibinfo  {journal} {EPJ Web of Conferences}\ }\textbf {\bibinfo
  {volume} {95}},\ \bibinfo {pages} {03003} (\bibinfo {year}
  {2015}{\natexlab{b}})}\BibitemShut {NoStop}%
\bibitem [{\citenamefont {Piron}(1976)}]{PironBook}%
  \BibitemOpen
  \bibfield  {author} {\bibinfo {author} {\bibfnamefont {C.}~\bibnamefont
  {Piron}},\ }\href@noop {} {\emph {\bibinfo {title} {Foundations of quantum
  physics}}},\ Mathematical Physics Monograph Series\ (\bibinfo  {publisher}
  {Benjamin-Cummings Publishing Company},\ \bibinfo {year} {1976})\BibitemShut
  {NoStop}%
\bibitem [{\citenamefont {Chiribella}\ and\ \citenamefont
  {Scandolo}(2015{\natexlab{c}})}]{QPL15}%
  \BibitemOpen
  \bibfield  {author} {\bibinfo {author} {\bibfnamefont {G.}~\bibnamefont
  {Chiribella}}\ and\ \bibinfo {author} {\bibfnamefont {C.~M.}\ \bibnamefont
  {Scandolo}},\ }in\ \href {\doibase 10.4204/EPTCS.195.8} {\emph {\bibinfo
  {booktitle} {{\rm Proceedings of the 12th International Workshop on} Quantum
  Physics and Logic, {\rm Oxford, U.K., July 15-17, 2015}}}},\ \bibinfo
  {series} {Electronic Proceedings in Theoretical Computer Science}, Vol.\
  \bibinfo {volume} {195},\ \bibinfo {editor} {edited by\ \bibinfo {editor}
  {\bibfnamefont {C.}~\bibnamefont {Heunen}}, \bibinfo {editor} {\bibfnamefont
  {P.}~\bibnamefont {Selinger}}, \ and\ \bibinfo {editor} {\bibfnamefont
  {J.}~\bibnamefont {Vicary}}}\ (\bibinfo {year} {2015})\ pp.\ \bibinfo {pages}
  {96--115}\BibitemShut {NoStop}%
\bibitem [{\citenamefont {Barnum}\ \emph {et~al.}(2015)\citenamefont {Barnum},
  \citenamefont {Barrett}, \citenamefont {Krumm},\ and\ \citenamefont
  {M\"uller}}]{Krumm-Muller}%
  \BibitemOpen
  \bibfield  {author} {\bibinfo {author} {\bibfnamefont {H.}~\bibnamefont
  {Barnum}}, \bibinfo {author} {\bibfnamefont {J.}~\bibnamefont {Barrett}},
  \bibinfo {author} {\bibfnamefont {M.}~\bibnamefont {Krumm}}, \ and\ \bibinfo
  {author} {\bibfnamefont {M.~P.}\ \bibnamefont {M\"uller}},\ }in\ \href
  {\doibase 10.4204/EPTCS.195.4} {\emph {\bibinfo {booktitle} {{\rm Proceedings
  of the 12th International Workshop on} Quantum Physics and Logic, {\rm
  Oxford, U.K., July 15-17, 2015}}}},\ \bibinfo {series} {Electronic
  Proceedings in Theoretical Computer Science}, Vol.\ \bibinfo {volume} {195},\
  \bibinfo {editor} {edited by\ \bibinfo {editor} {\bibfnamefont
  {C.}~\bibnamefont {Heunen}}, \bibinfo {editor} {\bibfnamefont
  {P.}~\bibnamefont {Selinger}}, \ and\ \bibinfo {editor} {\bibfnamefont
  {J.}~\bibnamefont {Vicary}}}\ (\bibinfo {year} {2015})\ pp.\ \bibinfo {pages}
  {43--58}\BibitemShut {NoStop}%
\bibitem [{\citenamefont {Krumm}(2015)}]{Krumm-thesis}%
  \BibitemOpen
  \bibfield  {author} {\bibinfo {author} {\bibfnamefont {M.}~\bibnamefont
  {Krumm}},\ }\href {http://arxiv.org/abs/1508.03299} {\bibfield  {journal}
  {\bibinfo  {journal} {arXiv:1508.03299 [quant-ph]}\ } (\bibinfo {year}
  {2015})},\ \bibinfo {note} {{M}aster's thesis}\BibitemShut {NoStop}%
\bibitem [{\citenamefont {Hardy}\ \emph {et~al.}(1929)\citenamefont {Hardy},
  \citenamefont {Littlewood},\ and\ \citenamefont
  {P{\'o}lya}}]{Hardy-Littlewood-Polya1929}%
  \BibitemOpen
  \bibfield  {author} {\bibinfo {author} {\bibfnamefont {G.~H.}\ \bibnamefont
  {Hardy}}, \bibinfo {author} {\bibfnamefont {J.~E.}\ \bibnamefont
  {Littlewood}}, \ and\ \bibinfo {author} {\bibfnamefont {G.}~\bibnamefont
  {P{\'o}lya}},\ }\href@noop {} {\bibfield  {journal} {\bibinfo  {journal}
  {Messenger Math.}\ }\textbf {\bibinfo {volume} {58}},\ \bibinfo {pages} {310}
  (\bibinfo {year} {1929})}\BibitemShut {NoStop}%
\bibitem [{\citenamefont {Scandolo}(2014)}]{Scandolo}%
  \BibitemOpen
  \bibfield  {author} {\bibinfo {author} {\bibfnamefont {C.~M.}\ \bibnamefont
  {Scandolo}},\ }\emph {\bibinfo {title} {Entanglement and thermodynamics in
  general probabilistic theories}},\ \href
  {http://tesi.cab.unipd.it/46015/1/Scandolo_carlo_maria.pdf} {Master's
  thesis},\ \bibinfo  {school} {Universit\`a degli Studi di Padova}, \bibinfo
  {address} {Italy} (\bibinfo {year} {2014})\BibitemShut {NoStop}%
\bibitem [{\citenamefont {Chiribella}(2015)}]{Chiribellatalk}%
  \BibitemOpen
  \bibfield  {author} {\bibinfo {author} {\bibfnamefont {G.}~\bibnamefont
  {Chiribella}},\ }\href {http://pirsa.org/15050072/} {\enquote {\bibinfo
  {title} {Towards an information-theoretic foundation of (quantum)
  thermodynamics},}\ } (\bibinfo {year} {2015}),\ \bibinfo {note} {talk given
  at the conference ``Information theoretic foundations of physics'', 11--15
  May 2015, Perimeter Institute, Waterloo, Canada}\BibitemShut {NoStop}%
\bibitem [{\citenamefont {Krumm}\ \emph {et~al.}(2017)\citenamefont {Krumm},
  \citenamefont {Barnum}, \citenamefont {Barrett},\ and\ \citenamefont
  {M\"uller}}]{Colleagues}%
  \BibitemOpen
  \bibfield  {author} {\bibinfo {author} {\bibfnamefont {M.}~\bibnamefont
  {Krumm}}, \bibinfo {author} {\bibfnamefont {H.}~\bibnamefont {Barnum}},
  \bibinfo {author} {\bibfnamefont {J.}~\bibnamefont {Barrett}}, \ and\
  \bibinfo {author} {\bibfnamefont {M.~P.}\ \bibnamefont {M\"uller}},\ }\href
  {http://stacks.iop.org/1367-2630/19/i=4/a=043025} {\bibfield  {journal}
  {\bibinfo  {journal} {New J. Phys.}\ }\textbf {\bibinfo {volume} {19}},\
  \bibinfo {pages} {043025} (\bibinfo {year} {2017})}\BibitemShut {NoStop}%
\bibitem [{\citenamefont {M\"uller}\ and\ \citenamefont
  {Masanes}(2013)}]{Muller3D}%
  \BibitemOpen
  \bibfield  {author} {\bibinfo {author} {\bibfnamefont {M.~P.}\ \bibnamefont
  {M\"uller}}\ and\ \bibinfo {author} {\bibfnamefont {L.}~\bibnamefont
  {Masanes}},\ }\href {\doibase 10.1088/1367-2630/15/5/053040} {\bibfield
  {journal} {\bibinfo  {journal} {New J. Phys.}\ }\textbf {\bibinfo {volume}
  {15}},\ \bibinfo {pages} {053040} (\bibinfo {year} {2013})}\BibitemShut
  {NoStop}%
\bibitem [{\citenamefont {Plenio}\ and\ \citenamefont
  {Virmani}(2007)}]{Entanglement-entropy1}%
  \BibitemOpen
  \bibfield  {author} {\bibinfo {author} {\bibfnamefont {M.~B.}\ \bibnamefont
  {Plenio}}\ and\ \bibinfo {author} {\bibfnamefont {S.}~\bibnamefont
  {Virmani}},\ }\href@noop {} {\bibfield  {journal} {\bibinfo  {journal}
  {Quant. Inf. Comp.}\ }\textbf {\bibinfo {volume} {7}},\ \bibinfo {pages} {1}
  (\bibinfo {year} {2007})}\BibitemShut {NoStop}%
\bibitem [{\citenamefont {Horodecki}\ \emph {et~al.}(2009)\citenamefont
  {Horodecki}, \citenamefont {Horodecki}, \citenamefont {Horodecki},\ and\
  \citenamefont {Horodecki}}]{Entanglement-entropy2}%
  \BibitemOpen
  \bibfield  {author} {\bibinfo {author} {\bibfnamefont {R.}~\bibnamefont
  {Horodecki}}, \bibinfo {author} {\bibfnamefont {P.}~\bibnamefont
  {Horodecki}}, \bibinfo {author} {\bibfnamefont {M.}~\bibnamefont
  {Horodecki}}, \ and\ \bibinfo {author} {\bibfnamefont {K.}~\bibnamefont
  {Horodecki}},\ }\href {\doibase 10.1103/RevModPhys.81.865} {\bibfield
  {journal} {\bibinfo  {journal} {Rev. Mod. Phys.}\ }\textbf {\bibinfo {volume}
  {81}},\ \bibinfo {pages} {865} (\bibinfo {year} {2009})}\BibitemShut
  {NoStop}%
\bibitem [{\citenamefont {Janzing}(2009)}]{Janzing2009}%
  \BibitemOpen
  \bibfield  {author} {\bibinfo {author} {\bibfnamefont {D.}~\bibnamefont
  {Janzing}},\ }\enquote {\bibinfo {title} {Entropy of entanglement},}\ in\
  \href {\doibase 10.1007/978-3-540-70626-7\_66} {\emph {\bibinfo {booktitle}
  {Compendium of Quantum Physics}}},\ \bibinfo {editor} {edited by\ \bibinfo
  {editor} {\bibfnamefont {D.}~\bibnamefont {Greenberger}}, \bibinfo {editor}
  {\bibfnamefont {K.}~\bibnamefont {Hentschel}}, \ and\ \bibinfo {editor}
  {\bibfnamefont {F.}~\bibnamefont {Weinert}}}\ (\bibinfo  {publisher}
  {Springer},\ \bibinfo {address} {Berlin, Heidelberg},\ \bibinfo {year}
  {2009})\ pp.\ \bibinfo {pages} {205--209}\BibitemShut {NoStop}%
\bibitem [{\citenamefont {Ryu}\ and\ \citenamefont {Takayanagi}(2006)}]{Ryu1}%
  \BibitemOpen
  \bibfield  {author} {\bibinfo {author} {\bibfnamefont {S.}~\bibnamefont
  {Ryu}}\ and\ \bibinfo {author} {\bibfnamefont {T.}~\bibnamefont
  {Takayanagi}},\ }\href {\doibase 10.1103/PhysRevLett.96.181602} {\bibfield
  {journal} {\bibinfo  {journal} {Phys. Rev. Lett.}\ }\textbf {\bibinfo
  {volume} {96}},\ \bibinfo {pages} {181602} (\bibinfo {year}
  {2006})}\BibitemShut {NoStop}%
\bibitem [{\citenamefont {Nishioka}\ \emph {et~al.}(2009)\citenamefont
  {Nishioka}, \citenamefont {Ryu},\ and\ \citenamefont {Takayanagi}}]{Ryu2}%
  \BibitemOpen
  \bibfield  {author} {\bibinfo {author} {\bibfnamefont {T.}~\bibnamefont
  {Nishioka}}, \bibinfo {author} {\bibfnamefont {S.}~\bibnamefont {Ryu}}, \
  and\ \bibinfo {author} {\bibfnamefont {T.}~\bibnamefont {Takayanagi}},\
  }\href {\doibase 10.1088/1751-8113/42/50/504008} {\bibfield  {journal}
  {\bibinfo  {journal} {J. Phys. A}\ }\textbf {\bibinfo {volume} {42}},\
  \bibinfo {pages} {504008} (\bibinfo {year} {2009})}\BibitemShut {NoStop}%
\bibitem [{\citenamefont {Eisert}\ \emph {et~al.}(2010)\citenamefont {Eisert},
  \citenamefont {Cramer},\ and\ \citenamefont {Plenio}}]{Area-law}%
  \BibitemOpen
  \bibfield  {author} {\bibinfo {author} {\bibfnamefont {J.}~\bibnamefont
  {Eisert}}, \bibinfo {author} {\bibfnamefont {M.}~\bibnamefont {Cramer}}, \
  and\ \bibinfo {author} {\bibfnamefont {M.~B.}\ \bibnamefont {Plenio}},\
  }\href {\doibase 10.1103/RevModPhys.82.277} {\bibfield  {journal} {\bibinfo
  {journal} {Rev. Mod. Phys.}\ }\textbf {\bibinfo {volume} {82}},\ \bibinfo
  {pages} {277} (\bibinfo {year} {2010})}\BibitemShut {NoStop}%
\bibitem [{\citenamefont {Petz}(2003)}]{Petz}%
  \BibitemOpen
  \bibfield  {author} {\bibinfo {author} {\bibfnamefont {D.}~\bibnamefont
  {Petz}},\ }\href {\doibase 10.1142/S0129055X03001576} {\bibfield  {journal}
  {\bibinfo  {journal} {Rev. Math. Phys.}\ }\textbf {\bibinfo {volume} {15}},\
  \bibinfo {pages} {79} (\bibinfo {year} {2003})}\BibitemShut {NoStop}%
\end{thebibliography}%

\appendix 

\section{Properties of the microcanonical state}\label{app:propertieschi}  

 \begin{prop}
 For every theory satisfying requirement~\ref{req:uniquep} and for every finite system   $\rA$ in the theory, the microcanonical state  $\chi_\rA$  is invariant under all reversible transformations of system $\rA$.  
 \end{prop}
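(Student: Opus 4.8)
The plan is to argue directly from the definition of the microcanonical state in Eq.~\eqref{chidef}, exploiting the fact that transformations act linearly on the finite-dimensional vector space spanned by the states of $\rA$. First I would recall that, by requirement~\ref{req:uniquep}, system $\rA$ carries a unique invariant probability distribution $p_\rA(\d\psi)$ over its deterministic pure states, so that $\chi_\rA = \int p_\rA(\d\psi)\,\psi$. Fixing an arbitrary reversible transformation $\map V\in\Transf(\rA)$, I would apply it to $\chi_\rA$ and pull it inside the integral by linearity:
\[
\map V\chi_\rA = \int p_\rA(\d\psi)\,\map V\psi \, .
\]
Since the state space is finite-dimensional, this vector-valued integral can be understood componentwise (or, as noted after Eq.~\eqref{chidef}, replaced by a finite convex combination), so interchanging the linear map $\map V$ with the integral is unproblematic.

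The second step is a change of variables. Because $\map V$ is reversible, it is a channel that preserves both purity and determinism, and its inverse $\map V^{-1}$ provides a two-sided inverse; hence $\map V$ restricts to a bijection of the set of deterministic pure states onto itself. The essential input is that $p_\rA$ is, by construction, invariant under the action of the group of reversible transformations, i.e.\ the pushforward of $p_\rA$ along $\map V$ is again $p_\rA$. Performing the substitution $\psi\mapsto\map V\psi$ therefore leaves the measure unchanged, and I would conclude
\[
\int p_\rA(\d\psi)\,\map V\psi = \int p_\rA(\d\psi)\,\psi = \chi_\rA \, ,
\]
so that $\map V\chi_\rA=\chi_\rA$. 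Since $\map V$ was arbitrary, this establishes invariance under all reversible dynamics of $\rA$.

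The computation is essentially a one-liner, so there is no deep obstacle; the only point requiring care is making precise what ``invariance of the distribution $p_\rA$'' means, namely invariance of the induced measure under the pushforward by each reversible transformation. I would make this explicit exactly as in the proof of theorem~\ref{theo:uniquep}: the set of deterministic pure states is a single homogeneous space for the closed (and, acting on the compact set of deterministic states in finite dimensions, compact) group of reversible transformations, and $p_\rA$ is the normalised measure on it induced by the Haar measure on that group. Left-invariance of the Haar measure then translates directly into invariance of $p_\rA$ under each $\map V$, which is the only fact the change of variables requires.
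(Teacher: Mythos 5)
Your proof is correct and coincides with the paper's own argument: both pull the reversible transformation inside the integral defining $\chi_\rA$ by linearity and then perform the change of variables $\psi' = \map V\psi$, invoking the invariance of $p_\rA$ guaranteed by requirement~\ref{req:uniquep}. Your extra remarks on finite-dimensionality and on grounding the invariance of $p_\rA$ in the Haar measure on the (compact) group of reversible transformations simply make explicit what the paper leaves implicit in its proof of theorem~\ref{theo:uniquep}.
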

 \begin{proof}For every reversible transformation $\map U$, one has  
 \begin{align}
\nonumber  \map U  \chi_\rA  &  =    \int p_\rA\left(   \d \psi\right)    \map U  \psi   \\
  \nonumber  &   =  \int p_\rA\left(   \d   \map U^{-1}  \psi'\right)    \psi'   \\
 \nonumber   &  =   \int p_\rA\left(   \d \psi'\right)   \psi'\\
   &  =  \chi_\rA \, ,
 \end{align}
the second equality following from the definition  $\psi'  : =   \map U \psi$, and the third equality following from  the invariance of the probability distribution $p_\rA$.
\end{proof} 

\begin{prop}
 For every theory satisfying requirement~\ref{req:uniquep} and for every finite system   $\rA$ in the theory,  there exists a set of reversible transformations $\left\{  \map U_i\right\}_{i=1}^r$ and a probability distribution $\left\{p_i\right\}_{i=1}^r$ such that 
\begin{align}\label{equilibration}
\sum_{i=1}^r   p_i    \map U_i  \alpha  =  \chi_\rA,
\end{align}  
for every deterministic pure  state $ \alpha $ of the system.   
 \end{prop}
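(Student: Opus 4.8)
The plan is to construct an explicit random reversible channel by \emph{twirling} over the group of reversible transformations, to show that this channel collapses every deterministic pure state onto $\chi_\rA$, and finally to reduce the resulting continuous average to a genuine finite convex combination using finite-dimensionality. First I would introduce the twirling channel
\[ \map T_\rA := \int_{\grp G_\rA} \d\map U ~ \map U, \]
where $\grp G_\rA$ is the group of reversible transformations of $\rA$ and $\d\map U$ is its normalised invariant (Haar) measure. This measure exists and is bi-invariant because the reversible group is closed \cite{Chiribella-purification} and acts through a finite-dimensional representation preserving the compact set of normalised states, so it is a compact group. Convexity and closure of $\Transf(\rA)$ guarantee that $\map T_\rA$ is itself an admissible channel.

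Next I would establish the two properties that pin down the output of $\map T_\rA$. (i) The output is invariant: for every reversible $\map V$, the change of variable $\map W := \map V \map U$ together with left-invariance of $\d\map U$ gives $\map V \map T_\rA = \map T_\rA$, hence $\map V (\map T_\rA \alpha) = \map T_\rA \alpha$ for every state. (ii) The output is the \emph{same} for all deterministic pure states: given pure $\alpha,\alpha'$, Requirement~\ref{req:uniquep} through Theorem~\ref{theo:uniquep} provides a reversible $\map V$ with $\alpha' = \map V \alpha$, and the change of variable $\map W := \map U \map V$ with right-invariance yields $\map T_\rA \alpha' = \map T_\rA \alpha$. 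To identify this common output with $\chi_\rA$, I would observe that the pushforward of $\d\map U$ under the orbit map $\map U \mapsto \map U \alpha$ is an invariant probability distribution on the deterministic pure states; by the \emph{uniqueness} clause of Requirement~\ref{req:uniquep} it must coincide with $p_\rA$, so that $\map T_\rA \alpha = \int p_\rA(\d\psi)\,\psi = \chi_\rA$ for every deterministic pure $\alpha$.

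Finally, I would pass from the integral to a finite sum. Since the theory describes finite systems, $\Transf(\rA)$ spans a finite-dimensional real vector space; the image of the compact group $\grp G_\rA$ is a compact subset of it, so its convex hull is compact and closed, and thus coincides with its closed convex hull. As the barycentre of a probability measure supported on the reversible transformations, $\map T_\rA$ lies in this convex hull, and Carath\'{e}odory's theorem expresses it as a finite convex combination $\map T_\rA = \sum_{i=1}^r p_i \map U_i$, with $\{p_i\}_{i=1}^r$ a probability distribution and each $\map U_i$ reversible. Because this is an identity between transformations, applying it to any deterministic pure state $\alpha$ gives $\sum_{i=1}^r p_i \map U_i \alpha = \map T_\rA \alpha = \chi_\rA$ simultaneously for all such $\alpha$, which is exactly the claim.

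The step I expect to be the main obstacle is this last one: justifying rigorously that the continuous Haar average can be replaced by a \emph{finite} convex combination valid for every input at once. This rests on finite-dimensionality (to invoke Carath\'{e}odory) and on compactness of the reversible group (to ensure the barycentre genuinely lies in the convex hull and not merely in its closure). A secondary subtlety worth flagging is that identifying the common output with $\chi_\rA$ uses the \emph{uniqueness} part of Requirement~\ref{req:uniquep}, not merely the transitivity of the reversible action on pure states.
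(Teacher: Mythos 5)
Your proposal is correct and follows essentially the same route as the paper's own proof in appendix~\ref{app:propertieschi}: twirling over the compact group of reversible transformations with its Haar measure, identifying $\map T_\rA \alpha$ with $\int p_\rA(\d\psi)\,\psi = \chi_\rA$ via the pushforward of the invariant measure, and using finite-dimensionality to replace the integral by a finite convex combination. If anything, you spell out two points the paper leaves implicit---the role of the uniqueness clause of requirement~\ref{req:uniquep} in pinning down the pushforward as $p_\rA$, and the Carath\'{e}odory/compactness argument behind the passage to a finite sum---both of which are sound.
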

 
 \begin{proof}The group of reversible transformations on system $\rA$ has a finite-dimensional representation on the state space of system $\rA$.  This representation defines a group of finite-dimensional matrices, call it $\widetilde {\grp G}_\rA$.  Note that the group   $\widetilde {\grp G}_\rA$ is compact, because it is closed and finite-dimensional. 
 Hence, one can construct the invariant  measure $\d \map U$ and define the transformation  
 \begin{align}\label{ta}
 \map T_\rA  :  = \int_{\widetilde {\grp  G}_\rA } \d \map U \: \map U.
 \end{align} 
 By construction, the transformation $\map T_\rA$  maps every deterministic pure state $\alpha$ into the microcanonical state:  indeed, one has  
 \begin{align}
 \nonumber \map T_\rA   \alpha   &  =     \int_{\widetilde {\grp  G}_\rA } \d \map U ~ \map U \alpha  \\
   \nonumber &  =       \int  p_\rA  (\d \psi) \,  \psi  \\
    & =  \chi_\rA \, ,
 \end{align}   
 the second equality following from the fact that  $p_\rA (\d \psi)$ is the probability distribution induced by the invariant measure on $\widetilde {\grp G}_\rA$.   Finally, since the matrices in  $\widetilde {\grp G}_\rA$ are finite-dimensional, the integral in Eq.~\eqref{ta} can be replaced by a finite sum.
\end{proof}

\section{Proof of lemma~\ref{prop:rarenoisy}\label{app:rarenoisy}}
\begin{proof}  
	Let $\map R \in \mathsf{DetTransf}\left( \rA\right) $ be a rational RaRe channel, written as  
	\begin{align}
	\map R   =   \sum_{i}  \frac {n_i}n \map U_i  ,  
	\end{align}  
	with $n_i\ge 0$ and $\sum_i n_i  =  n$.  Let $\rB$ be an $ n $-dimensional system, and pick the pure maximal set $\left\{  \beta_x\right\}_{x=1}^n$. Let $\map C$ be the channel from $\rA\otimes \rB  $ to $\rA$ defined by  
	\begin{align}
	\map C   =  \sum_{x=1}^n    \map V_x  \otimes \map   \beta_x^\dag ,   
	\end{align}
	where  $\left\{\map V_x\right\}_{x=1}^n$ are reversible transformations on $\rA$, chosen so that  $n_1$ of the channels are equal to $\map U_1$, $n_2$ are equal to $\map U_2$, and so on.  Since the theory satisfies Purification, the channel $\map C$ has a reversible extension \cite{Chiribella-purification,Chiribella14}, meaning that one has 
	\begin{align}
	\begin{aligned} \Qcircuit @C=1em @R=.7em @!R {  
		& \qw \poloFantasmaCn{\rA} & \multigate{1}{\mathcal C} & \qw \poloFantasmaCn{\rA} & \qw \\ 
		&  \qw \poloFantasmaCn{\rB}  &  \ghost{\mathcal C} &  & } \end{aligned}~ 
	= ~  \begin{aligned} \Qcircuit @C=1em @R=.7em @!R {  
		& \qw \poloFantasmaCn{\rA} & \multigate{2}{\mathcal U} & \qw \poloFantasmaCn{\rA} & \qw \\ 
		&  \qw \poloFantasmaCn{\rB}  &  \ghost{\mathcal U} &  &  \\
		\prepareC{\gamma}&  \qw \poloFantasmaCn{\rC} & \ghost{\mathcal U} &  \qw \poloFantasmaCn{\rC'}   & \measureD{u}   } \end{aligned}  ~ , 
	\end{align} 
	where $\rC$ and $\rC'$ are suitable systems, $\gamma$ is a suitable pure state, and $\map U$ is a reversible transformation.   Now, by construction we have  
	\begin{align}
	\nonumber  \begin{aligned} \Qcircuit @C=1em @R=.7em @!R {  
		& \qw \poloFantasmaCn{\rA} & \multigate{2}{\mathcal U} & \qw \poloFantasmaCn{\rA} & \qw \\ 
		\prepareC{\beta_x}&  \qw \poloFantasmaCn{\rB}  &  \ghost{\mathcal U} &  &  \\
		\prepareC{\gamma}&  \qw \poloFantasmaCn{\rC} & \ghost{\mathcal U} &  \qw \poloFantasmaCn{\rC'}   & \measureD{u}   } \end{aligned}   ~ & =  ~
	\begin{aligned} \Qcircuit @C=1em @R=.7em @!R {  
		& \qw \poloFantasmaCn{\rA} & \multigate{1}{\mathcal C} & \qw \poloFantasmaCn{\rA} & \qw \\ 
		\prepareC{\beta_x}&  \qw \poloFantasmaCn{\rB}  &  \ghost{\mathcal C} &  & } \end{aligned}  \\ 
	\nonumber   \\
	&= ~\begin{aligned} \Qcircuit @C=1em @R=.7em @!R {  
		& \qw \poloFantasmaCn{\rA} & \gate{\map V_x} & \qw \poloFantasmaCn{\rA} & \qw }
	\end{aligned} ~,
	\end{align}
	for every $x\in\left\{1,\ldots,  n\right\}$.     The above condition implies the relation \cite{Chiribella14}  
	\begin{equation}\label{Uprog}
	\begin{aligned} \Qcircuit @C=1em @R=.7em @!R {  
		& \qw \poloFantasmaCn{\rA} & \multigate{2}{\mathcal U} & \qw \poloFantasmaCn{\rA} & \qw \\ 
		\prepareC{\beta_x}&  \qw \poloFantasmaCn{\rB}  &  \ghost{\mathcal U} &  &  \\
		\prepareC{\gamma}&  \qw \poloFantasmaCn{\rC} & \ghost{\mathcal U} &  \qw \poloFantasmaCn{\rC'}   & \qw  } \end{aligned}    ~=  ~
	\begin{aligned} \Qcircuit @C=1em @R=.7em @!R {  
		& \qw \poloFantasmaCn{\rA} & \gate{\map V_x} & \qw \poloFantasmaCn{\rA} & \qw   \\
		 \prepareC{\gamma_x}   & \qw \poloFantasmaCn{\rC'} & \qw  &\qw&\qw}
	\end{aligned}  ~, 
	\end{equation}for some pure state $ \gamma_x $ of system $ \rC' $.
	Composing both sides with $\map V_x^{-1}$  on the left, and with $\map U^{-1}$ on the right we obtain  
	\begin{align}\label{Uprogdag}
	\begin{aligned} \Qcircuit @C=1em @R=.7em @!R {  
		& \qw \poloFantasmaCn{\rA} & \gate{\mathcal {V}_x^{-1}} & \qw \poloFantasmaCn{\rA} & \qw \\ 
		\prepareC{\beta_x}&  \qw \poloFantasmaCn{\rB}  & \qw &\qw &\qw \\
		\prepareC{\gamma}&  \qw \poloFantasmaCn{\rC} &   \qw   &\qw &\qw } \end{aligned}   ~ = ~
	\begin{aligned} \Qcircuit @C=1em @R=.7em @!R {  
		& \qw \poloFantasmaCn{\rA} & \multigate{2}{\map U^{-1}} & \qw \poloFantasmaCn{\rA} & \qw   \\
		&  & \pureghost{\map U^{-1}} & \qw \poloFantasmaCn{\rB} & \qw   \\
		\prepareC{\gamma_x} & \qw \poloFantasmaCn{\rC'} & \ghost{\map U^{-1}} & \qw \poloFantasmaCn{\rC} & \qw   
	}
	\end{aligned}  ~. 
	\end{align} 
	Combining Eqs.~\eqref{Uprog} and \eqref{Uprogdag} we obtain the relation  
	\begin{align}\label{doubleprog}
	\begin{aligned} \Qcircuit @C=1em @R=.7em @!R {  
		& \qw \poloFantasmaCn{\rA} & \multigate{2}{\mathcal U} & \qw \poloFantasmaCn{\rA} & \qw  &  \qw &\qw   &\qw \\ 
		\prepareC{\beta_x}&  \qw \poloFantasmaCn{\rB}  &  \ghost{\mathcal U} &  &   &  \pureghost{\map U^{-1}}  &    \qw \poloFantasmaCn{\rB}  & \qw  \\
		\prepareC{\gamma}&  \qw \poloFantasmaCn{\rC} & \ghost{\mathcal U} &  \qw \poloFantasmaCn{\rC'}   & \qw   &    \ghost{\map U^{-1}}  &    \qw \poloFantasmaCn{\rC}  & \qw  \\    
		&  \qw   \poloFantasmaCn{\rA} & \qw& \qw & \qw  &    \multigate{-2}{\map U^{-1}}  &    \qw \poloFantasmaCn{\rA}  & \qw    } \end{aligned}    ~   =   ~
	\begin{aligned} \Qcircuit @C=1em @R=.7em @!R {  
		& \qw \poloFantasmaCn{\rA} & \gate{\map V_x}  &\qw   \poloFantasmaCn{\rA}&\qw \\ 
		\prepareC{\beta_x}&  \qw \poloFantasmaCn{\rB}  &  \qw &\qw &\qw \\
		\prepareC{\gamma}&  \qw \poloFantasmaCn{\rC} & \qw &\qw &\qw \\ 
		& \qw   \poloFantasmaCn{\rA} & \gate{\map V_x^{-1}}  &    \qw \poloFantasmaCn{\rA}  & \qw    } \end{aligned} ~.  
	\end{align}
	At this point, we define the pure transformation  
	\begin{align}
	\begin{aligned} \Qcircuit @C=1em @R=.7em @!R {  
		& \qw \poloFantasmaCn{\rA} & \multigate{2}{\mathcal P} &\qw   \poloFantasmaCn{\rA}&\qw \\ 
		&  \qw \poloFantasmaCn{\rB}  &  \ghost{\mathcal P}  &    \qw \poloFantasmaCn{\rB}  & \qw  \\
		&  \qw   \poloFantasmaCn{\rA} &\ghost{\map P}  &    \qw \poloFantasmaCn{\rA}  & \qw    } \end{aligned}          
	~:  =~
	\begin{aligned} \Qcircuit @C=1em @R=.7em @!R {  
		& \qw \poloFantasmaCn{\rA} & \multigate{2}{\mathcal U} & \qw \poloFantasmaCn{\rA} & \qw  &  \qw &\qw   &\qw \\ 
		&  \qw \poloFantasmaCn{\rB}  &  \ghost{\mathcal U} &  &   &  \pureghost{\map U^{-1}}  &    \qw \poloFantasmaCn{\rB}  & \qw  \\
		\prepareC{\gamma}&  \qw \poloFantasmaCn{\rC} & \ghost{\mathcal U} &  \qw \poloFantasmaCn{\rC'}   & \qw   &    \ghost{\map U^{-1}}  &    \qw \poloFantasmaCn{\rC}  & \measureD{\gamma^\dag}  \\    
		&  \qw   \poloFantasmaCn{\rA} & \qw& \qw & \qw  &    \multigate{-2}{\map U^{-1}}  &    \qw \poloFantasmaCn{\rA}  & \qw    } \end{aligned}      ~.
	\end{align}
	From Eq.~\eqref{doubleprog} we obtain that $\map P$ satisfies the relation  
	\begin{align}
	\begin{aligned} \Qcircuit @C=1em @R=.7em @!R {  
		& \qw \poloFantasmaCn{\rA} & \multigate{2}{\mathcal P} &\qw   \poloFantasmaCn{\rA}&\qw \\ 
		\prepareC{\beta_x}&  \qw \poloFantasmaCn{\rB}  &  \ghost{\mathcal P}  &    \qw \poloFantasmaCn{\rB}  & \qw  \\
		&  \qw   \poloFantasmaCn{\rA} &\ghost{\map P}  &    \qw \poloFantasmaCn{\rA}  & \qw    } \end{aligned}         ~   =   ~
	\begin{aligned} \Qcircuit @C=1em @R=.7em @!R {  
		& \qw \poloFantasmaCn{\rA} & \gate{\map V_x}  &\qw   \poloFantasmaCn{\rA}&\qw \\ 
		\prepareC{\beta_x}&  \qw \poloFantasmaCn{\rB}  &  \qw &\qw &\qw\\
		& \qw   \poloFantasmaCn{\rA} & \gate{\map V_x^{-1}}  &    \qw \poloFantasmaCn{\rA}  & \qw    } \end{aligned}  ~, 
	\end{align}  
	for all values of $x$.   Using this relation and the expression of $ \chi_\rB $ in terms of the $ \beta_x $'s, we can reconstruct $ \mathcal{R} $ from $ \mathcal{P} $:
	\begin{align}
	\nonumber 
	&\begin{aligned} \Qcircuit @C=1em @R=.7em @!R {  
		& \qw \poloFantasmaCn{\rA} & \multigate{2}{\mathcal P} &\qw   \poloFantasmaCn{\rA}&\qw \\ 
		\prepareC{\chi}&  \qw \poloFantasmaCn{\rB}  &  \ghost{\mathcal P}  &    \qw \poloFantasmaCn{\rB}  & \measureD{u}  \\
		\prepareC{\chi}&  \qw   \poloFantasmaCn{\rA} &\ghost{\map P}  &    \qw \poloFantasmaCn{\rA}  & \measureD{u}    } \end{aligned}             ~ =  \frac 1n   \sum_{x=1}^n ~
	\begin{aligned}    \Qcircuit @C=1em @R=.7em @!R {  
		& \qw \poloFantasmaCn{\rA} & \multigate{2}{\mathcal P} &\qw   \poloFantasmaCn{\rA}&\qw \\ 
		\prepareC{\beta_x}&  \qw \poloFantasmaCn{\rB}  &  \ghost{\mathcal P}  &    \qw \poloFantasmaCn{\rB}  & \measureD{u}  \\
		\prepareC{\chi}&  \qw   \poloFantasmaCn{\rA} &\ghost{\map P}  &    \qw \poloFantasmaCn{\rA}  & \measureD{u}    }  \end{aligned}        \\ 
	\nonumber  &\\
	\nonumber &  \qquad =    \frac 1n  \sum_{x=1}^n~  
	\begin{aligned} \Qcircuit @C=1em @R=.7em @!R {  
		&\qw \poloFantasmaCn{\rA} & \gate{\map V_x}  &\qw   \poloFantasmaCn{\rA}&\qw \\ 
		\prepareC{\beta_x}&  \qw \poloFantasmaCn{\rB}  &\qw &\qw  &\measureD{u} & \\
		\prepareC{\chi} & \qw   \poloFantasmaCn{\rA} & \gate{\map V_x^{-1}}  &    \qw \poloFantasmaCn{\rA}  & \measureD{u}   } \end{aligned}  \\
	\nonumber  &\\
	\nonumber &   \qquad =    \frac 1n  \sum_{x=1}^n  ~
	\begin{aligned} \Qcircuit @C=1em @R=.7em @!R {  
		&\qw \poloFantasmaCn{\rA} & \gate{\map V_x}  &\qw   \poloFantasmaCn{\rA}&\qw  } \end{aligned}  \\
	\nonumber & \\
	&   \qquad=
	\begin{aligned} \Qcircuit @C=1em @R=.7em @!R {  
		&\qw \poloFantasmaCn{\rA} & \gate{\map R}  &\qw   \poloFantasmaCn{\rA}&\qw  } \end{aligned}    ~, 
	\label{simplenoisy}\end{align}where we have used the fact that $ \sum_{x=1}^{n}\map V_x=\sum_i n_i \map U_i$.
	Finally, let us show that $\map P$ is a channel. To this end, it is enough to show that $ u\mathcal{P}=u $ \cite{Chiribella-purification}. This property is satisfied if and only if $ \left( u\middle|\mathcal{P}\middle|\chi\right) =1 $, because every state lies in some convex decomposition of $ \chi $ \cite{Chiribella-purification}.  By the condition of informational equilibrium and Eq.~\eqref{simplenoisy}, we have
	\begin{align}
	\nonumber\begin{aligned} \Qcircuit @C=1em @R=.7em @!R {  
		\multiprepareC{2}{\chi}& \qw \poloFantasmaCn{\rA} & \multigate{2}{\mathcal P} &\qw   \poloFantasmaCn{\rA}&\measureD{u} \\ 
		\pureghost{\chi}&  \qw \poloFantasmaCn{\rB}  &  \ghost{\mathcal P}  &    \qw \poloFantasmaCn{\rB}  & \measureD{u}  \\
		\pureghost{\chi}&  \qw   \poloFantasmaCn{\rA} &\ghost{\map P}  &    \qw \poloFantasmaCn{\rA}  & \measureD{u}    } \end{aligned}~&=~\begin{aligned} \Qcircuit @C=1em @R=.7em @!R {  
		\prepareC{\chi}& \qw \poloFantasmaCn{\rA} & \multigate{2}{\mathcal P} &\qw   \poloFantasmaCn{\rA}&\measureD{u} \\ 
		\prepareC{\chi}&  \qw \poloFantasmaCn{\rB}  &  \ghost{\mathcal P}  &    \qw \poloFantasmaCn{\rB}  & \measureD{u}  \\
		\prepareC{\chi}&  \qw   \poloFantasmaCn{\rA} &\ghost{\map P}  &    \qw \poloFantasmaCn{\rA}  & \measureD{u}    }\end{aligned} \\
		&=~\nonumber\begin{aligned} \Qcircuit @C=1em @R=.7em @!R {  \prepareC{\chi}
		&\qw \poloFantasmaCn{\rA} & \gate{\map R}  &\qw   \poloFantasmaCn{\rA}&\measureD{u}  }\end{aligned}\\
	    &=1~,
	\end{align}
	so $\map P$ is a channel. Since every pure channel on a fixed system (here $ \mathrm{A}\otimes\mathrm{B}\otimes\mathrm{A} $) is reversible \cite{Chiribella-purification}, $\map P$ is reversible.
	Hence, Eq.~\eqref{simplenoisy} shows that $\map R$ is a basic noisy operation, with environment $\rE  =  \rB\otimes \rA$.   \end{proof} 

\section{Proof of theorem~\ref{prop:unital channels}\label{app:unital channels}}
\begin{proof}
	Let $\rho=\sum_{j=1}^{d}p_{j}\alpha_{j}$ and $\sigma=\sum_{j=1}^{d}q_{j}\alpha'_{j}$
	be diagonalisations of $\rho$ and $\sigma$, respectively. We first show that  $\rho  \succeq_{\mathsf{Unital}}  \sigma$ implies $\mathbf{p} \succeq \mathbf{q}$.
	Suppose that one has $\sigma=\mathcal{D}\rho$, where $\mathcal{D}$ is a unital
	channel. Then 
	\begin{equation}
	\sum_{j=1}^{d}q_{j}\alpha'_{j}=\sum_{j=1}^{d}p_{j}\mathcal{D}\alpha_{j}.
	\end{equation}
	Applying $\alpha_{i}'^{\dagger}$ to both sides, we obtain 
	\begin{align}
	\nonumber
	q_{i}  &=\sum_{j=1}^{d}p_{j}\left(\alpha_{i}'^{\dagger}\middle|\mathcal{D}\middle|\alpha_{j}\right)  \\
	\label{matrixD} &  =  \sum_{j=1}^{d}D_{ij}p_{j}  \, , \qquad   D_{ij}:=\left(\alpha_{i}'^{\dagger}\middle|\mathcal{D}\middle|\alpha_{j}\right) \, .
	\end{align}
	Now, the  $D_{ij}$'s are the entries of a doubly stochastic
	matrix $D$ (lemma~\ref{lem:channelmatrix}). Hence, Eq.~\eqref{matrixD} implies that  $\mathbf{p}$ majorises $\mathbf{q}$. 
	
	Conversely, suppose that $\st p  \succeq \st q$ and let $D$ be a doubly stochastic matrix such that $\st q  =  D \st p$. Define the measure-and-prepare channel 
	\begin{align}
	\map D    =  \sum_{j=1}^{d}   \rho_j      \alpha_j^\dag       \qquad \rho_j  :  =    \sum_{i=1}^d  D_{ij}   \alpha'_i   \, .
	\end{align}
	By construction, one has 
	\begin{align}
	\nonumber \map D \rho &  =    \sum_{j=1}^d      \rho_j     \left(\alpha^\dag_j\middle|\rho\right)\\
	\nonumber  &  =      \sum_{i=1}^{d}  \alpha_i' \sum_{j=1}^{d}    D_{ij}      p_j  \\
	\nonumber  &   =  \sum_{i=1}^{d}    q_i  \alpha_i' \\
	&  = \sigma \, .
	\end{align}
	Now, the channel $\map D$ is unital by lemma~\ref{lem:matrixchannel}. Hence, $\rho$  can be converted into $\sigma$ by a unital channel. 
\end{proof}

\section{Operational features of Doubled Quantum Theory}\label{app:operationaldoubled}  

Here we summarise the key operational features of Doubled Quantum Theory.

 \subsection{Doubled Quantum Theory violates Local Tomography}\label{app:double local tomography}
 An equivalent formulation of Local Tomography is that the dimension of the vector space spanned by the states of a composite system is equal to the product of the dimensions of the vector spaces spanned by the states of the components \cite{Hardy-informational-1,Chiribella-purification}.   The equality fails to hold in Doubled Quantum Theory, where the dimension of the global vector space is strictly larger than the product of the dimensions of the individual vector spaces.   To see why this is the case, note that the block diagonal states of the form~\eqref{eq:state doubled} span  a vector space of dimension $D  :=  2 d^2$, where $d$ is the dimension of the Hilbert spaces $\spc H_0$ and $\spc H_1$.  Given two systems $\rA$ and $\rB$, the product of the individual dimensions is 
 \begin{align}
\nonumber D_\rA  D_\rB & = (2  d_\rA^2 ) \cdot   (2  d_\rB^2 )\\
   & =  \left( 2  d_\rA d_\rB\right) ^2 .
\end{align} 
On the other hand,  each of the Hilbert spaces $\spc H^{\mathrm{AB}}_0 $ and  $\spc H^{\mathrm{AB}}_1 $  in Eq.~\eqref{hhahhb}  has dimension  $d_{\mathrm{AB}}   =  2  d_\rA d_\rB$.  Hence, the vector space spanned by the states of the composite system has dimension   
\begin{align}
\nonumber 
D_{\mathrm{AB}}    &=    2  d_{\rA\rB}^2  \\  
&=  2\left(  2 d_\rA    d_\rB\right) ^2 \, ,
\end{align} that is, twice  the dimensions of the vector space spanned by the product states.

 \subsection{Doubled Quantum Theory satisfies Purification} 

A generic state of a generic  system $\left( \spc H_0, \spc H_1\right) $  can be diagonalised as 
\begin{align}
\rho=  \left(  \sum_{i=1}^d  \lambda_i    \ket{\varphi_{i0}}\bra{\varphi_{i0}}  \right)   \oplus   \left( \sum_{j=1}^d \mu_j   \ket{\psi_{j1}}\bra{\psi_{j1}} \right)  ,
\end{align}
where $\left\{    \ket{\varphi_{i0}}\right\}_{i=1}^d$ is an orthonormal basis for $\spc H_0$ and $\left\{    \ket{\psi_{j1}}\right\}_{j=1}^d$ is an orthonormal basis for $\spc H_1$.  
  The state can be purified e.g.\ by adding one copy of system  $\left( \spc H_0, \spc H_1\right) $.  
 Since the composite system has two superselection sectors, there will be two types of purification:  purifications in the even  subspace $\spc H_0^{\rA\rB}$   and purifications in the odd subspace $\spc H_1^{\rA\rB}$.  
 A purification in the subspace $\spc H_0^{\rA\rB}$ will have the form 
  \begin{align}
  \ket{\Psi_0}   =   \left(  \sum_{i=1}^d  \sqrt{\lambda_i}  \ket{\varphi_{i0}}\ket{ \alpha_{i0}}   \right)    +   \left( \sum_{j=1}^d   \sqrt{\mu_j }   \ket{\psi_{j1}}\ket{\beta_{j1}} \right)   ,
  \end{align}
  where $\left\{    \ket{\alpha_{i0}}\right\}_{i=1}^d$  is an orthonormal basis for $\spc H_0$ and $\left\{    \ket{\beta_{j1}}\right\}_{j=1}^d$ is an orthonormal basis for $\spc H_1$.  
 A purification in the subspace $\spc H_1^{\rA\rB}$ will have the form 
  \begin{align}
  \ket{\Psi_1}   =   \left(  \sum_{i=1}^d   \sqrt{\lambda_i}    \ket{\varphi_{i0}}\ket{ \alpha'_{i1}}   \right)    +   \left( \sum_{j=1}^d   \sqrt{\mu_j }  \ket{\psi_{j1}}\ket{\beta'_{j0}} \right)  ,
  \end{align}
  where $\left\{    \ket{\alpha'_{i1}}\right\}_{i=1}^d$  is an orthonormal basis for $\spc H_1$ and $\left\{    \ket{\beta'_{j0}}\right\}_{j=1}^d$ is an orthonormal basis for $\spc H_0$.  
 Note that any two such   purifications   are equivalent under local unitary transformations: indeed, one has 
 \begin{align}
 \ket{\Psi_1}     = \left( I  \otimes   U\right)    \ket{\Psi_0}  ,
 \end{align} 
where $U$ is the unitary matrix defined by  
\begin{align}
U =    \left(  \sum_{i=1}^d   \ket{\alpha'_{i1}}\bra{\alpha_{i0}}  \right)    +   \left( \sum_{j=1}^d     \ket{\beta'_{j0}}\bra{\beta_{j1} } \right) . 
\end{align}
The same arguments apply to purifications within the same sector and to purifications where the purifying system is not a copy of the original system.    In summary, every state can be purified and every two purifications with the same purifying system are equivalent under local unitaries.  
    

\subsection{Doubled Quantum Theory satisfies Causality, Pure Sharpeness, and Purity Preservation}
Causality is immediate: for every system, the only deterministic effect is the identity matrix.  Pure Sharpness is also immediate:  every rank-one projector is a pure sharp effect.  As to Purity Preservation, note that the only pure transformations are quantum operations of the single-Kraus form $\map Q  \left( \cdot\right)   =  Q  \cdot Q^\dag$.  
 Clearly, the composition of two single-Kraus operations (both in parallel and in sequence) is a single-Kraus operation.  In other words, the composition of two pure transformations is pure.

\section{Permutability vs Strong-Symmetry: the example of the square bit}\label{app:square}

Consider the square bit \cite{Barrett}. Here the state space is a
square, and the pure states are its vertices. The group of reversible
transformations is the symmetry group of the square, which is the
dihedral group $D_{4}$. 
Every pair of vertices is a set of perfectly distinguishable pure
states. Fig.~\ref{fig:square} shows the situation
for the pure states
\begin{equation}
\alpha_{1}=\left(\begin{array}{c}
-1\\
1\\
1
\end{array}\right)\qquad\alpha_{2}=\left(\begin{array}{c}
-1\\
-1\\
1
\end{array}\right)\qquad\alpha_{3}=\left(\begin{array}{c}
1\\
-1\\
1
\end{array}\right),
\end{equation}
where the third component gives the normalisation. The pure observation-test
$\left\{ a_{1},a_{2}\right\} $, where
\begin{equation}
a_{1}=\frac{1}{2}\left(\begin{array}{ccc}
0 & 1 & 1\end{array}\right)\qquad a_{2}=\frac{1}{2}\left(\begin{array}{ccc}
0 & -1 & 1\end{array}\right),
\end{equation}
is the perfectly distinguishing test for the two sets $\left\{ \alpha_{1},\alpha_{2}\right\} $ and
$\left\{ \alpha_{1},\alpha_{3}\right\} $.

Now, since every set of perfectly distinguishable pure states has
two elements, the only non-trivial permutation of the elements of
such a set is the transposition. This permutation can be implemented
by considering the reflection through the axis of the segment connecting
the two points. 
Hence the square bit satisfies Permutability.  On the other hand, the square bit does  \emph{not} satisfy Strong Symmetry. A counterexample is shown in Fig.~\ref{fig:square}. Consider
the two maximal sets $\left\{ \alpha_{1},\alpha_{2}\right\} $ and
$\left\{ \alpha_{1},\alpha_{3}\right\} $. There are no reversible
transformations mapping the former to the latter because no symmetries
of the square map a side to a diagonal.

\section{Proof of proposition~\ref{prop:permstrong}\label{app:permstrong}}

\begin{proof}

The implication   ``Strong Symmetry   $\Rightarrow$ Permutability'' follows immediately from the definitions.  The implication ``Strong Symmetry    $\Rightarrow$ Reversible Controllability'' was proved by Lee and Selby  \cite{Control-reversible} using Causality, Purification, and the property that the product of two pure states is pure, which is guaranteed by our Purity Preservation axiom.  
Hence, we only need to prove the implications ``Permutability $\Rightarrow$ Strong Symmetry''  and   ``Reversible Controllability   $\Rightarrow$ Strong Symmetry''  

Let us prove that Permutability implies Strong Symmetry. The first
part of the proof is similar to the proof of theorem 30 of Ref.~\cite{Hardy-informational-2}.
Consider two maximal sets of perfectly distinguishable pure states
$\left\{ \varphi_{i}\right\} _{i=1}^{d}$ and $\left\{ \psi_{i}\right\} _{i=1}^{d}$.
Assuming Permutability, we will show that there exists a reversible
channel $\mathcal{U}$ such that $\psi_{i}=\mathcal{U}\varphi_{i}$,
for all $i=1,\ldots,d$. First of all, note that the states $\left\{ \varphi_{i}\otimes\psi_{j}\right\} $
are pure (by Purity Preservation) and perfectly distinguishable. Then
Permutability implies there exists a reversible transformation $\mathcal{U}$
such that for all $i=1,\ldots,d$ \cite{hardy2013}\begin{equation}
\begin{aligned} \Qcircuit @C=1em @R=.7em @!R { & \prepareC{\varphi_i} & \qw \poloFantasmaCn{\rA} & \multigate{1}{\mathcal U} & \qw \poloFantasmaCn{\rA} & \qw \\ & \prepareC{\psi_1} & \qw \poloFantasmaCn{\rA} & \ghost{\mathcal U} & \qw \poloFantasmaCn{\rA} & \qw } \end{aligned} ~= \!\!\!\! \begin{aligned} \Qcircuit @C=1em @R=.7em @!R { & \prepareC{\varphi_1} & \qw \poloFantasmaCn{\rA} &\qw \\ & \prepareC{\psi_i} & \qw \poloFantasmaCn{\rA} &\qw } \end{aligned}~.
\end{equation}Applying the pure effect $\varphi_{1}^{\dagger}$ to both sides of
the equation we obtain
\begin{equation}\label{eq:connectP}
\begin{aligned} \Qcircuit @C=1em @R=.7em @!R { & \prepareC{\varphi_i} & \qw \poloFantasmaCn{\rA} & \gate{\mathcal P} & \qw \poloFantasmaCn{\rA} & \qw } \end{aligned} ~=\!\!\!\! \begin{aligned} \Qcircuit @C=1em @R=.7em @!R { & \prepareC{\psi_i} & \qw \poloFantasmaCn{\rA} &\qw } \end{aligned}~,
\end{equation}with\begin{equation}
\begin{aligned} \Qcircuit @C=1em @R=.7em @!R { &\qw \poloFantasmaCn{\rA} & \gate{\mathcal P} & \qw \poloFantasmaCn{\rA} &\qw } \end{aligned} ~:=\!\!\!\! \begin{aligned} \Qcircuit @C=1em @R=.7em @!R { & & \qw \poloFantasmaCn{\rA} & \multigate{1}{\mathcal U} & \qw \poloFantasmaCn{\rA} & \measureD{\varphi_1^\dag} \\ & \prepareC{\psi_1} & \qw \poloFantasmaCn{\rA} & \ghost{\mathcal U} & \qw \poloFantasmaCn{\rA} & \qw } \end{aligned}~.
\end{equation}By construction, $\mathcal{P}$ is pure (by Purity Preservation) and
occurs with probability 1 on all the states $\left\{ \varphi_{i}\right\} _{i=1}^{d}$.  Moreover, the diagonalisation $\chi   =  \frac{1}{d}\sum_{i=1}^{d}\varphi_{i}$ implies that $\map P$ occurs with probability 1 on every state because $ \left( u\middle|\map P\middle|\chi\right)  =1 $ \cite{Chiribella-purification}. Since $\map P$ is a pure deterministic transformation on $ \rA $, it must be reversible \cite{Chiribella-purification}.  Hence, Eq.~\eqref{eq:connectP} proves  that the states $\left\{ \varphi_{i}\right\} _{i=1}^{d}$
can be reversibly transformed into the states $\left\{ \psi_{i}\right\} _{i=1}^{d}$.   
In short, Permutability implies Strong Symmetry. 

Let us prove now that Reversible Controllability implies Strong Symmetry.  Let $\left\{ \varphi_{i}\right\} _{i=1}^{d}$ and $\left\{ \psi_{i}\right\} _{i=1}^{d}$  be two pure maximal sets of a generic system $\rA$.  Since reversible transformations act transitively on pure states, for every $ i\in\left\lbrace 1,\ldots,d\right\rbrace  $, one can find a reversible transformation $\map U_i$ that maps $\psi_1$ into $\psi_i$, in formula 
\begin{align}\label{1cont}
\map U_i  \psi_1   =  \psi_i .   
\end{align}    
Moreover, Reversible Controllability implies that we can find a reversible transformation $\map U$ such that 
\begin{align}\label{2cont}
\begin{aligned} \Qcircuit @C=1em @R=.7em @!R { & \prepareC{\varphi_i} & \qw \poloFantasmaCn{\rA} & \multigate{1}{\mathcal U} & \qw \poloFantasmaCn{\rA} & \qw \\ &    & \qw \poloFantasmaCn{\rA} & \ghost{\mathcal U} & \qw \poloFantasmaCn{\rA} & \qw  } \end{aligned} ~=\!\!\!\!
\begin{aligned} \Qcircuit @C=1em @R=.7em @!R { & \prepareC{\varphi_i} & \qw \poloFantasmaCn{\rA} & \qw  & \qw &\qw  \\
&  & \qw \poloFantasmaCn{\rA} & \gate{\map U_i} &   \qw \poloFantasmaCn{\rA}   & \qw  } \end{aligned}
\end{align}
for every $i\in\left\lbrace 1,\ldots,d\right\rbrace $. Likewise, for every $i\in\left\lbrace 1,\ldots,d\right\rbrace $, one can always find a reversible transformation $\map V_i$ that transforms $\varphi_i$ into $\varphi_1$, in formula  
\begin{align}\label{3cont}
\map V_i \varphi_i   =  \varphi_1 . 
\end{align}    
And again, one can find a reversible transformation $\map V$  such that  
 \begin{align}\label{4cont}
\begin{aligned} \Qcircuit @C=1em @R=.7em @!R { & \qw \poloFantasmaCn{\rA} & \multigate{1}{\mathcal V} & \qw \poloFantasmaCn{\rA} & \qw \\  \prepareC{\psi_i} & \qw \poloFantasmaCn{\rA} & \ghost{\mathcal V} & \qw \poloFantasmaCn{\rA} & \qw  } \end{aligned}~ =
\!\!\!\!\begin{aligned} \Qcircuit @C=1em @R=.7em @!R { &  & \qw \poloFantasmaCn{\rA} & \gate{\map V_i}  &  \qw \poloFantasmaCn{\rA}  &\qw  \\
& \prepareC{\psi_i} & \qw \poloFantasmaCn{\rA} & \qw &  \qw &\qw } \end{aligned}
\end{align}
for every $i\in\left\lbrace 1,\ldots,d\right\rbrace$.      Combining Eqs.~(\ref{1cont}--\ref{4cont}), we obtain  
\begin{align}
\begin{aligned} \Qcircuit @C=1em @R=.7em @!R { & \prepareC{\varphi_i} & \qw \poloFantasmaCn{\rA} & \multigate{1}{\mathcal U} & \qw \poloFantasmaCn{\rA} & \multigate{1}{\map V} & \qw \poloFantasmaCn{\rA}  &\qw  \\ &   \prepareC{\psi_1} & \qw \poloFantasmaCn{\rA} & \ghost{\mathcal U} & \qw \poloFantasmaCn{\rA} & \ghost{\map V }  &\qw \poloFantasmaCn{\rA}  &\qw  } \end{aligned}~ =
\!\!\!\!\begin{aligned} \Qcircuit @C=1em @R=.7em @!R { & \prepareC{\varphi_1} & \qw \poloFantasmaCn{\rA} & \qw  \\
& \prepareC{\psi_i} & \qw \poloFantasmaCn{\rA} & \qw  } \end{aligned}
\end{align}
for every $i$.   Hence, one has\begin{equation}\label{eq:connectP2}
\begin{aligned} \Qcircuit @C=1em @R=.7em @!R { & \prepareC{\varphi_i} & \qw \poloFantasmaCn{\rA} & \gate{\mathcal P} & \qw \poloFantasmaCn{\rA} & \qw } \end{aligned} ~=\!\!\!\! \begin{aligned} \Qcircuit @C=1em @R=.7em @!R { & \prepareC{\psi_i} & \qw \poloFantasmaCn{\rA} &\qw } \end{aligned}~,
\end{equation}with\begin{equation}
\begin{aligned} \Qcircuit @C=1em @R=.7em @!R { &\qw \poloFantasmaCn{\rA} & \gate{\mathcal P} & \qw \poloFantasmaCn{\rA} &\qw } \end{aligned} ~:=\!\!\!\! \begin{aligned} \Qcircuit @C=1em @R=.7em @!R { & & \qw \poloFantasmaCn{\rA} & \multigate{1}{\mathcal U} & \qw \poloFantasmaCn{\rA} &   \multigate{1}{\mathcal V} & \qw \poloFantasmaCn{\rA} &  \measureD{\varphi_1^\dag} \\ & \prepareC{\psi_1} & \qw \poloFantasmaCn{\rA} & \ghost{\mathcal U} & \qw \poloFantasmaCn{\rA}  & \ghost{\mathcal V} & \qw \poloFantasmaCn{\rA} & \qw } \end{aligned}~.
\end{equation}
By the same argument used in the first part of the proof, we conclude that $\map P$ is a reversible transformation. Hence, Eq.~\eqref{eq:connectP2} implies that the set $\left\{ \varphi_{i}\right\} _{i=1}^{d}$ can be reversibly converted into the set $\left\{ \psi_{i}\right\} _{i=1}^{d}$.    In short, Reversible Controllability implies Strong Symmetry.
\end{proof}

\section{Proof that sharp theories with purification and unrestricted reversibility satisfy the Local Exchangeability axiom}\label{app:localexchange}

The aim of this appendix is to prove the following proposition: 

\begin{prop}\label{prop:local exchangeability}
Every sharp theory with purification  and unrestricted reversibility satisfies Local Exchangeability.  
\end{prop}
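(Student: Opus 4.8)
The plan is to prove Local Exchangeability by reducing it to the essential uniqueness of purification, using unrestricted reversibility (in its Strong Symmetry form) to build the two local channels. First I would fix a pure state $\Psi\in\Pur\St_1\left(\rA\otimes\rB\right)$ and diagonalise the marginal $\rho_\rA:=\Tr_\rB\Psi=\sum_{i=1}^r p_i\alpha_i$, extending $\left\{\alpha_i\right\}$ to a pure maximal set of $\rA$. By Purity Preservation the steered states $\beta_i$, defined by $\left(\alpha_i^\dagger\otimes\map I_\rB\right)\Psi=p_i\beta_i$, are pure, and the Schmidt-type structure of sharp theories with purification guarantees that they are perfectly distinguishable and that $\rho_\rB:=\Tr_\rA\Psi=\sum_{i=1}^r p_i\beta_i$ (the last equality using $u_\rA=\sum_i\alpha_i^\dagger$, which holds by Proposition~\ref{prop:dagobs}). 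The goal is then to produce channels $\map C\in\Det\Transf\left(\rA,\rB\right)$ and $\map D\in\Det\Transf\left(\rB,\rA\right)$ with $\left(\map C\otimes\map D\right)\Psi=\mathtt{SWAP}\,\Psi$, and the key observation is that it suffices for $\map C$ to act \emph{coherently} as $\alpha_i\mapsto\beta_i$ and for $\map D$ to act coherently as $\beta_i\mapsto\alpha_i$ on the relevant faces; the leftover discrepancy will be cleaned up at the end by essential uniqueness.

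To build $\map C$ I would use an ancilla together with Strong Symmetry. Pick a fixed pure state $\gamma$ of a copy of $\rB$ and a fixed pure state $\delta$ of a copy of $\rA$. The families $\left\{\alpha_i\otimes\gamma\right\}_{i=1}^r$ and $\left\{\beta_i\otimes\delta\right\}_{i=1}^r$ are perfectly distinguishable pure states of $\rA\otimes\rB$ and of $\rB\otimes\rA$ respectively (Purity Preservation and the product structure of distinguishing observation-tests, cf.\ Proposition~\ref{prop:information locality}), so each extends to a pure maximal set. Since $\rA\otimes\rB\simeq\rB\otimes\rA$, Strong Symmetry furnishes a reversible $\map U:\rA\otimes\rB\to\rB\otimes\rA$ with $\map U\left(\alpha_i\otimes\gamma\right)=\beta_i\otimes\delta$ for every $i\le r$. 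I then define $\map C$ by appending $\gamma$, applying $\map U$, and discarding the $\rA$-output with $u$; by construction $\map C\alpha_i=\beta_i$. The channel $\map D$ is built symmetrically, giving $\map D\beta_i=\alpha_i$. This is exactly the step where unrestricted reversibility is indispensable: without Strong Symmetry there is no guarantee that two distinguishable pure sets of equal cardinality are connected by a reversible transformation.

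The main obstacle is to show that these \emph{a priori} noisy channels act purely on the states that actually occur inside $\Psi$, so that $\left(\map C\otimes\map D\right)\Psi$ stays pure. This follows because $\map U$ maps the face generated by $\left\{\alpha_i\right\}$ tensored with the fixed $\gamma$ linearly into the face generated by $\left\{\beta_i\right\}$ tensored with the \emph{same} fixed $\delta$; hence the discarded register always emerges in the product pure state $\delta$, and tracing it out is harmless. Concretely, $\left(\map U\otimes\map I_\rB\right)\left(\Psi\otimes\gamma\right)$ is pure by Purity Preservation and, on the support of $\rho_\rA$, factorises as $\delta$ on the discarded wire times a pure state on the rest, so $\left(\map C\otimes\map I_\rB\right)\Psi$ is pure; applying $\map D$ on the other wire and repeating the argument shows that $\Omega:=\left(\map C\otimes\map D\right)\Psi$ is pure. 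Its marginal on the $\rB$-register is $\map C\rho_\rA=\sum_i p_i\beta_i=\rho_\rB$, which is also the $\rB$-marginal of $\mathtt{SWAP}\,\Psi$. Thus $\Omega$ and $\mathtt{SWAP}\,\Psi$ are two purifications of $\rho_\rB$ sharing the purifying system $\rA$, so essential uniqueness of purification yields a reversible $\map W$ on $\rA$ with $\Omega=\left(\map I_\rB\otimes\map W\right)\mathtt{SWAP}\,\Psi$. Replacing $\map D$ by the channel $\map D':=\map W^{-1}\map D$ absorbs $\map W$ and gives $\left(\map C\otimes\map D'\right)\Psi=\mathtt{SWAP}\,\Psi$, which is the claim. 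I expect the face-and-purity argument of this last paragraph to be the delicate point, since it is what replaces the quantum fact that an isometry emitting a fixed ancilla state preserves coherence.
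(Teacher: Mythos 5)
Your proposal is correct, but it takes a genuinely different route from the paper's own proof. Both arguments share the same skeleton---obtain matched eigenstates $\left\{\alpha_i\right\}$ and $\left\{\beta_i\right\}$ of the two marginals with a common spectrum, use unrestricted reversibility to correlate them across $\rA$ and $\rB$, and finish with essential uniqueness of purification---but the local channels are built in opposite ways. The paper invokes Permutability to get a reversible $\map U$ with $\map U\left(\beta_1\otimes\alpha_i\right)=\alpha_1\otimes\beta_i$ and defines $\map P$ by appending $\beta_1$ and \emph{conditioning on the pure effect} $\alpha_1^{\dagger}$: there purity of $\map P$ is automatic (Purity Preservation), and the non-trivial step is determinism, settled by the sharpness argument $\left(u\middle|\map P\middle|\chi_\rA\right)=1$. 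You instead append an ancilla in $\gamma$, apply a Strong-Symmetry reversible $\map U\left(\alpha_i\otimes\gamma\right)=\beta_i\otimes\delta$, and \emph{discard} with $u$: your $\map C$ is manifestly a channel, and the non-trivial step is purity of $\left(\map C\otimes\map I_\rB\right)\Psi$. That step does go through, but your ``face'' justification conceals the two facts actually doing the work: by linearity $\map U\left(\rho_\rA\otimes\gamma\right)=\sum_{i} p_i\,\beta_i\otimes\delta$, so the marginal of the pure state $\left(\map U\otimes\map I_\rB\right)\left(\Psi\otimes\gamma\right)$ on the discarded wire is exactly the pure state $\delta$; and in any theory with purification a pure bipartite state with a pure marginal factorises as a product (itself an instance of essential uniqueness, since both it and $\delta\otimes\varphi$, for any pure $\varphi$, purify $\delta$). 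Since the paper never develops face theory, you should cite these two facts rather than the face picture. Two further contrasts: your one-sided correction---a single reversible $\map W$ on the $\rA$ wire, absorbed into $\map D':=\map W^{-1}\map D$---is leaner than the paper's two-sided pair $\map W_\rA,\map W_\rB$; on the other hand, the paper's construction yields \emph{pure} local channels, a bonus your discard-based construction does not manifestly provide (though, as you argue, your channels act purely on $\Psi$, which is all Local Exchangeability demands). Finally, your reliance on the purity and perfect distinguishability of the steered states $\beta_i$ is the same external input (Ref.~\cite{TowardsThermo}) that the paper uses for its equal-spectra claim, and substituting Strong Symmetry for the paper's Permutability is licensed by proposition~\ref{prop:permstrong}.
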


\begin{proof}[Proof of proposition~\ref{prop:local exchangeability}]
Let $\Psi \in\Pur\St_1 \left( \rA\otimes \rB\right) $ be a generic pure state and let $\rho_\rA$ and $\rho_\rB$ its marginal states, diagonalised as 
\begin{equation}    \rho_\rA   =   \sum_{i=1}^r     p_i   \alpha_i   \qquad \textrm{and} \qquad \rho_\rB  =   \sum_{i=1}^{r} p_i   \beta_i  ,\end{equation}
where $  p_i>0 $ for all $ i=1,\dots,r $, and $ r\leq\min\left\lbrace d_\rA,d_\rB\right\rbrace  $. Here we are invoking a result of Ref.~\cite{TowardsThermo}, where we showed that the marginals of a pure bipartite state have the same spectrum (up to vanishing elements).   Now, we extend the set of eigenstates of $\rho_\rA$ and $\rho_\rB$ to two pure maximal sets. Without loss of generality assume $ d_\rA \leq d_\rB $.   By  the Permutability axiom, there must exist a reversible transformation $\map U  \in  \Det\Transf \left( \rB\otimes \rA  ,\rA\otimes \rB\right) $ such that  
\begin{equation}    \map U    \left(    \beta_1\otimes \alpha_i\right)    =  \alpha_1 \otimes \beta_i  ,\qquad \forall i\in  \left\{  1,\ldots ,  d_\rA\right\} .\end{equation} 
Similarly, there must exist a reversible transformation 
$\map V  \in  \Det\Transf \left( \rB\otimes \rA  ,\rA\otimes \rB\right) $ such that  
\begin{equation}    \map V   \left(   \beta_i\otimes \alpha_1\right)    =  \alpha_i \otimes \beta_1  ,\qquad \forall i\in  \left\{  1,\ldots ,  d_\rA\right\} .\end{equation} 
  At this point, we define  the pure transformations  
  \begin{equation}
  \begin{aligned} \Qcircuit @C=1em @R=.7em @!R { &\qw \poloFantasmaCn{\rA} & \gate{\mathcal P} & \qw \poloFantasmaCn{\rB} &\qw } \end{aligned} ~:=\!\!\!\! \begin{aligned} \Qcircuit @C=1em @R=.7em @!R { & \prepareC{\beta_1} & \qw \poloFantasmaCn{\rB} & \multigate{1}{\mathcal U} & \qw \poloFantasmaCn{\rA} & \measureD{\alpha_1^\dag} \\ &  & \qw \poloFantasmaCn{\rA} & \ghost{\mathcal U} & \qw \poloFantasmaCn{\rB} & \qw } \end{aligned}~,
  \end{equation}
  
  \begin{equation}
  \begin{aligned} \Qcircuit @C=1em @R=.7em @!R { &\qw \poloFantasmaCn{\rB} & \gate{\mathcal Q} & \qw \poloFantasmaCn{\rA} &\qw } \end{aligned} ~:=\!\!\!\! \begin{aligned} \Qcircuit @C=1em @R=.7em @!R { & & \qw \poloFantasmaCn{\rB} & \multigate{1}{\mathcal V} & \qw \poloFantasmaCn{\rA} & \qw \\ & \prepareC{\alpha_1} & \qw \poloFantasmaCn{\rA} & \ghost{\mathcal V} & \qw \poloFantasmaCn{\rB} & \measureD{\beta_1^\dag} } \end{aligned}~.
  \end{equation}
and the pure state 
\begin{equation}
\begin{aligned} \Qcircuit @C=1em @R=.7em @!R { & \multiprepareC{1}{\Psi'} & \qw \poloFantasmaCn{\rB}  &\qw \\ & \pureghost{\Psi'} & \qw \poloFantasmaCn{\rA} & \qw  }\end{aligned} ~:= \!\!\!\! \begin{aligned}\Qcircuit @C=1em @R=.7em @!R { & \multiprepareC{1}{\Psi} & \qw \poloFantasmaCn{\rA} & \gate{\mathcal P} & \qw \poloFantasmaCn{\rB} &\qw \\ & \pureghost{\Psi} & \qw \poloFantasmaCn{\rB} & \gate{\mathcal Q} & \qw \poloFantasmaCn{\rA} &\qw }\end{aligned}~,
\end{equation}
where the purity of $\map P$, $\map Q$, and $\Psi'$ follows from Purity Preservation.  Like in the proof of proposition~\ref{prop:permstrong}, we can prove that   $\map P$ and $\map Q$ are in fact channels, so $ u_\rB\map P=u_\rA $ and $ u_\rA\map Q=u_\rB $. Hence $\Psi'$  and $ {\tt SWAP}  \Psi  $ have the same marginals.  Then, the uniqueness of purification applied to both systems $ \rA $ and $ \rB $ (viewed as purifying systems of one another) implies that there  exist two reversible transformations $\map W_\rA$ and $\map W_B$ such that 
\begin{equation}
\begin{aligned}\Qcircuit @C=1em @R=.7em @!R { &\multiprepareC{1}{\Psi}  & \qw \poloFantasmaCn{\rA} &\multigate{1}{\tt SWAP}  &   \qw \poloFantasmaCn{\rB}  & \qw   \\ & \pureghost{\Psi}  & \qw \poloFantasmaCn{\rB} & \ghost{\tt SWAP}  &   \qw \poloFantasmaCn{\rA}  & \qw   }  
\end{aligned}~=\!\!\!\!\begin{aligned}\Qcircuit @C=1em @R=.7em @!R { & \multiprepareC{1}{\Psi'} & \qw \poloFantasmaCn{\rB} & \gate{\mathcal W_\rB} & \qw \poloFantasmaCn{\rB} &\qw \\ & \pureghost{\Psi'} & \qw \poloFantasmaCn{\rA} & \gate{\mathcal W_\rA} & \qw \poloFantasmaCn{\rA} &\qw }\end{aligned} ~=
\end{equation}
\begin{equation}
=\!\!\!\!\begin{aligned}\Qcircuit @C=1em @R=.7em @!R { & \multiprepareC{1}{\Psi} & \qw \poloFantasmaCn{\rA} & \gate{\map P} & \qw \poloFantasmaCn{\rB} & \gate{\mathcal W_\rB} & \qw \poloFantasmaCn{\rB} &\qw \\ & \pureghost{\Psi} & \qw \poloFantasmaCn{\rB} & \gate{\map Q} & \qw \poloFantasmaCn{\rA} & \gate{\mathcal W_\rA} & \qw \poloFantasmaCn{\rA} &\qw }\end{aligned}~.
\end{equation}
Hence, we have shown that there exist two local \emph{pure} channels $\map C  :  =   \map W_\rB \map P$ and $\map D : =  \map W_\rA  \map Q$ that reproduce the action of the swap channel on the state $\Psi$.
\end{proof}
Note that Local Exchangeability is implemented in this setting by \emph{pure} channels.

In passing, we also mention that the validity of Local Exchangeability implies that every state admits  a \emph{symmetric purification}, in the following sense:

\begin{defn}\cite{Chiribella-Scandolo-entanglement} 
	Let $\rho$ be a state of system $\mathrm{A}$ and let $\Psi$ be
	a pure state of $\mathrm{A}\otimes\mathrm{A}$. We say that $\Psi$
	is a \emph{symmetric purification} of $\rho$ if 
	\begin{equation}
	\begin{aligned}\Qcircuit @C=1em @R=.7em @!R { & \multiprepareC{1}{\Psi} & \qw \poloFantasmaCn{\rA} & \qw  \\ & \pureghost{\Psi} & \qw \poloFantasmaCn{\rA} & \measureD{u}}\end{aligned}~=\!\!\!\!\begin{aligned}\Qcircuit @C=1em @R=.7em @!R {& \prepareC {\rho} & \qw \poloFantasmaCn{\rA} & \qw }\end{aligned}~,
	\end{equation}and\begin{equation}
	\begin{aligned}\Qcircuit @C=1em @R=.7em @!R { & \multiprepareC{1}{\Psi} & \qw \poloFantasmaCn{\rA} & \measureD{u} \\ &\pureghost{\Psi} &\qw \poloFantasmaCn{\rA} &\qw}\end{aligned}~=\!\!\!\!\begin{aligned}\Qcircuit @C=1em @R=.7em @!R {& \prepareC {\rho} & \qw \poloFantasmaCn{\rA} & \qw }\end{aligned}~.
	\end{equation}
	\end{defn}  

 With the above notation, we have the following 
\begin{prop}
In every sharp theory with purification and unrestricted reversibility, every state of every finite system admits a symmetric purification.
\end{prop}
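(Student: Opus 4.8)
The plan is to reduce the existence of a \emph{symmetric} purification to the existence of an \emph{arbitrary} purification on $\rA\otimes\rA$, followed by a local reversible ``straightening'' of the two marginals. First I would diagonalise $\rho=\sum_{i=1}^r p_i\alpha_i$ (Diagonalisation theorem), with $\left\{\alpha_i\right\}_{i=1}^r$ perfectly distinguishable pure states and $r=\mathrm{rank}\,\rho\le d_\rA$. I would then invoke Purification to obtain a purification $\Psi\in\Pur\St_1\left(\rA\otimes\rA\right)$ of $\rho$, so that tracing out the second factor returns $\rho$. By the result of Ref.~\cite{TowardsThermo} already used in the proof of Local Exchangeability, the two marginals of a pure bipartite state have the same spectrum; hence the other marginal has the form $\rho'=\sum_{i=1}^r p_i\alpha_i'$ for some perfectly distinguishable pure states $\left\{\alpha_i'\right\}_{i=1}^r$, carrying the \emph{same} weights $p_i$.

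The straightening step is then immediate from unrestricted reversibility. Extending $\left\{\alpha_i\right\}$ and $\left\{\alpha_i'\right\}$ to pure maximal sets and applying Strong Symmetry, there is a reversible transformation $\map W$ on $\rA$ with $\map W\alpha_i'=\alpha_i$ for every $i$, whence $\map W\rho'=\rho$. I would then set
\begin{equation}
\Psi_{\mathrm{sym}}:=\left(\map I_\rA\otimes\map W\right)\Psi .
\end{equation}
This state is pure by Purity Preservation, since the identity and the reversible $\map W$ are pure, their parallel composition is pure, and its sequential composition with the pure state $\Psi$ is again pure. Tracing out the second factor leaves the first marginal equal to $\rho$, because $\map W$ is a channel and therefore $u\,\map W=u$; tracing out the first factor gives $\map W\rho'=\rho$. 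Thus both marginals of $\Psi_{\mathrm{sym}}$ equal $\rho$, which is exactly the symmetric-purification condition. As the construction uses only Purification, Purity Preservation, the equal-spectrum property of marginals, and Strong Symmetry, it goes through in every sharp theory with purification and unrestricted reversibility.

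The main obstacle I anticipate is the very first existence claim: that $\rho$ can be purified \emph{with a purifying system equal to a copy of} $\rA$, rather than on some abstract purifying system merely supplied by the Purification axiom. What makes this plausible is the dimension count $\mathrm{rank}\,\rho=r\le d_\rA$, which shows that a copy of $\rA$ is ``large enough'' to carry a purification; the delicate point is to promote the generic purification of the axiom to one living specifically on $\rA\otimes\rA$. I would settle this either by appealing to the purification machinery of Refs.~\cite{Chiribella-purification,QuantumFromPrinciples}, where purifying systems of any sufficient dimension are admissible, or, for a self-contained argument, by starting from the symmetric purification of $\chi_\rA$ on $\rA\otimes\rA$ (the maximally entangled state) and deforming its marginal from $\chi_\rA$ into $\rho$ by a suitable local pure filter, whose output is automatically pure by Purity Preservation and purifies $\rho$ on $\rA\otimes\rA$. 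Once this existence is secured, the remainder is the routine symmetrisation above.
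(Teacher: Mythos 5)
Your ``straightening'' step is sound: granted a purification $\Psi\in\Pur\St_1\left(\rA\otimes\rA\right)$ of $\rho$, the equal-spectrum property of the marginals from Ref.~\cite{TowardsThermo}, the extension of the two eigenstate sets to pure maximal sets, Strong Symmetry (which in this framework can be taken order-preserving, so $\map W\alpha_i'=\alpha_i$ for all $i$), and the purity of reversible channels indeed make $\left(\map I_\rA\otimes\map W\right)\Psi$ a symmetric purification. The genuine gap is exactly the step you flag yourself: the existence of a purification of $\rho$ whose purifying system is a \emph{copy of} $\rA$, and neither of your proposed patches closes it. The Purification axiom supplies a purification on \emph{some} system $\rB$ with no control over $\rB$, and the claim that ``purifying systems of any sufficient dimension are admissible'' is not available in sharp theories with purification: results of that kind in Refs.~\cite{Chiribella-purification,QuantumFromPrinciples} rely on further axioms (notably Local Tomography and Ideal Compression) which this class of theories explicitly does not assume---real quantum theory and Doubled Quantum Theory violate Local Tomography---and the rank count $r\le d_\rA$ by itself says nothing about whether a pure state of $\rA\otimes\rA$ with marginal $\rho$ exists. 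Your second fallback is circular: a ``maximally entangled state'' on $\rA\otimes\rA$ is by definition a symmetric purification of $\chi_\rA$, i.e.\ a special case of the very proposition to be proved, and the local pure filter deforming its marginal from $\chi_\rA$ into an arbitrary $\rho$ is a further unproven existence claim in this setting.

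The paper obtains the copy-system purification as the \emph{output} of Local Exchangeability, which it has just established (proposition~\ref{prop:local exchangeability}), and proves the present proposition by citing theorem 3 of Ref.~\cite{Chiribella-Scandolo-entanglement}. Concretely: start from \emph{any} purification $\Psi\in\Pur\St_1\left(\rA\otimes\rB\right)$ of $\rho$, and take the \emph{pure} channels $\map C\in\Transf\left(\rA,\rB\right)$ and $\map D\in\Transf\left(\rB,\rA\right)$ with $\left(\map C\otimes\map D\right)\Psi={\tt SWAP}\,\Psi$, as constructed in appendix~\ref{app:localexchange}. Then $\left(\map I_\rA\otimes\map D\right)\Psi$ is pure by Purity Preservation; its first marginal is $\rho$ because $u_\rA\map D=u_\rB$; and applying $u_\rB$ to the first wire of the exchange identity gives $\map D\rho_\rB=\rho$, so the second marginal is $\rho$ as well---no diagonalisation or straightening is needed, since the exchange channels automatically align the marginals. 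If you want to rescue your route, this construction is what must replace your first step; but once it is in place, your Strong Symmetry argument becomes superfluous.
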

The existence of a symmetric purification for every state is guaranteed by theorem 3 of Ref.~\cite{Chiribella-Scandolo-entanglement}.

\end{document}